%% file: main.tex
\newif\ifshort
\documentclass[a4paper,UKenglish,cleveref, autoref, USenglish, thm-restate,nosubfigcap,nolineno]{socg-lipics-v2021}
\hideLIPIcs
\usepackage{amsmath,amsfonts,amssymb,amsthm} 
\usepackage[utf8]{inputenc}
\usepackage{color}
\usepackage{url}
\usepackage{framed}
\usepackage{xcolor}
\usepackage{todonotes}
\usepackage{amssymb}
\usepackage{xspace}
\usepackage{nicefrac}
 \usepackage{xifthen}
 \usepackage{hyperref}
\usepackage{xspace} 
\usepackage{framed}
\usepackage{thm-restate}
\usepackage{mdframed}
\usepackage{tikz}
 \usepackage{forest}
\usepackage{subcaption}
\usetikzlibrary{calc,math,patterns,shapes,backgrounds}
\tikzstyle{path} = [color=black,opacity=.30,line cap=round, line join=round, line width=10pt]
 \usepackage{graphicx,color}
\usepackage{boxedminipage}
\usepackage{framed}
\usepackage{mathtools}
\usepackage{thmtools}
\usepackage{amsmath,amsfonts}
\usepackage[ruled,vlined]{algorithm2e} 
\usepackage{footnote}
\usepackage{nicefrac}
\usepackage{todonotes}
\usepackage{footnote}
 \usepackage{xifthen}
 \usepackage{tabularx}
\usepackage{subcaption}

\crefname{redrule}{Reduction Rule}{Reduction Rule}
\newtheorem{dtredrule}{Reduction Rule}
\newtheorem*{dtredrule*}{Reduction Rule}
\crefname{dtredrule*}{Reduction Rule}{Reduction Rule}
\crefname{dtredrule}{Reduction Rule}{Reduction Rule}
\crefname{algocfline}{Algorithm}{Algorithm}

\crefformat{framedbox}{branching subroutine}
\Crefformat{framedbox}{Branching Subroutine}

\newcommand{\labelframedbox}[1]{%
  \phantomsection
  \def\@currentlabelname{Framed Box}%
  \label[#1]{#1}%
}

\newcommand{\cO}{\mathcal{O}}

\newcommand{\mcm}[3]{\newcommand{#1}[#2]{{\ensuremath{#3}}}} 
\mcm{\Nbb}{0}{\mathbb{N}}
\mcm{\Zbb}{0}{\mathbb{Z}}
\mcm{\Rbb}{0}{\mathbb{R}}
\mcm{\Cbb}{0}{\mathbb{C}}
\mcm{\Qbb}{0}{\mathbb{Q}}
\mcm{\Acal}{0}{\cal A}
\mcm{\Bcal}{0}{\cal B}
\mcm{\Ccal}{0}{\cal C}
\mcm{\Dcal}{0}{\cal D}
\mcm{\Ecal}{0}{\cal E}
\mcm{\Fcal}{0}{\cal F}
\mcm{\Gcal}{0}{\cal G}
\mcm{\Hcal}{0}{\cal H}
\mcm{\Ical}{0}{\cal I}
\mcm{\Jcal}{0}{\cal J}
\mcm{\Kcal}{0}{\cal K}
\mcm{\Lcal}{0}{\cal L}
\mcm{\Mcal}{0}{\cal M}
\mcm{\Ncal}{0}{\cal N}
\mcm{\Ocal}{0}{{\cal O}}
\mcm{\Pcal}{0}{{\cal P}}
\mcm{\Qcal}{0}{{\cal Q}}
\mcm{\Rcal}{0}{{\cal R}}
\mcm{\Scal}{0}{{\cal S}}
\mcm{\Tcal}{0}{{\cal T}}
\mcm{\Ucal}{0}{{\cal U}}
\mcm{\Vcal}{0}{{\cal V}}
\mcm{\Wcal}{0}{{\cal W}}
\mcm{\Xcal}{0}{{\cal X}}
\mcm{\Ycal}{0}{{\cal Y}}
\mcm{\Zcal}{0}{{\cal Z}}

\DeclareMathOperator{\operatorClassFPT}{FPT\xspace}
\newcommand{\classFPT}{\ensuremath{\operatorClassFPT}\xspace}

\newcommand{\yes}{{yes}}
\newcommand{\no}{{no}}
\newcommand{\yesinstance}{\yes-instance\xspace}
\newcommand{\noinstance}{\no-instance\xspace}
\newcommand{\yesinstances}{\yesinstances\xspace}
\newcommand{\noinstances}{\no-instances\xspace}

\newcommand{\normal}{normal\xspace}
\newcommand{\fixed}{fixed\xspace}

\newcommand{\unfix}{unfit\xspace}
\newcommand{\Normal}{Normal\xspace}
\newcommand{\Fixed}{Fixed\xspace}

\newcommand{\Unfix}{Unfit\xspace}
\newcommand{\NORM}{\mathcal{N}}
\newcommand{\FIXED}{\mathcal{F}}

\newcommand{\UNFIX}{\mathcal{U}}
\newcommand{\treshold}{\tau}
\newcommand{\Treshold}{\mathcal{T}}
\newcommand{\Dvalues}{\mathcal{D}}
\newcommand{\wrt}{w.r.t. }

\newcommand{\dist}{\mathsf{dist}}

\SetKwInput{kwInit}{Initialization} 

\newlength{\RoundedBoxWidth}
\newsavebox{\GrayRoundedBox}
\newenvironment{GrayBox}[1]%
   {\setlength{\RoundedBoxWidth}{.93\textwidth}
    \def\boxheading{#1}
    \begin{lrbox}{\GrayRoundedBox}
       \begin{minipage}{\RoundedBoxWidth}}%
   {   \end{minipage}
    \end{lrbox}
    \begin{center}
    \begin{tikzpicture}%
       \node(Text)[draw=black!20,fill=white,rounded corners,%
             inner sep=2ex,text width=\RoundedBoxWidth]%
             {\usebox{\GrayRoundedBox}};
        \coordinate(x) at (current bounding box.north west);
        \node [draw=white,rectangle,inner sep=3pt,anchor=north west,fill=white] 
        at ($(x)+(6pt,.75em)$) {\boxheading};
    \end{tikzpicture}
    \end{center}}     

\newenvironment{defproblemx}[2][]{\noindent\ignorespaces%
                                \FrameSep=6pt%
                                \parindent=0pt%
                \vspace*{-1.5em}
                \ifthenelse{\isempty{#1}}{%
                  \begin{GrayBox}{\textsc{#2}}%
                }{%
                  \begin{GrayBox}{\textsc{#2}  parameterized by~{#1}}%
                }
                \begin{tabular*}{\textwidth}{@{\hspace{.1em}} >{\itshape} p{1.8cm} p{0.8\textwidth} @{}}%
            }{
                \end{tabular*}%
                \end{GrayBox}%
                \ignorespacesafterend
            }  

\newcommand{\defparproblema}[4]{
  \begin{defproblemx}[#3]{#1}
    Input:  & #2 \\
    Task: & #4
  \end{defproblemx}
}%

\newcommand{\uvd}{\textsc{Ultrametric Violation Distance}\xspace}

\newcommand{\cuvd}{\textsc{Constrained Ultrametric Violation Distance}\xspace}
\newcommand{\Cuvd}{\textsc{CUVD}\xspace}
\ifshort
    \title{Ultrametric Violation Distance: Polynomial Kernel and FPT Algorithm}
\else
    \title{Ultrametric Violation Distance: Polynomial Kernel and FPT Algorithm}
\fi

\author{Fedor V. Fomin}{University of Bergen, Bergen, Norway}
{Fedor.Fomin@uib.no}{https://orcid.org/0000-0003-1955-4612}
{Supported by the European Research Council (ERC) under the European Union's Horizon 2020 research and innovation programme (NewPC grant agreement No.  101199930).\\\begin{minipage}{0.2\textwidth}\includegraphics[width=0.9\textwidth]{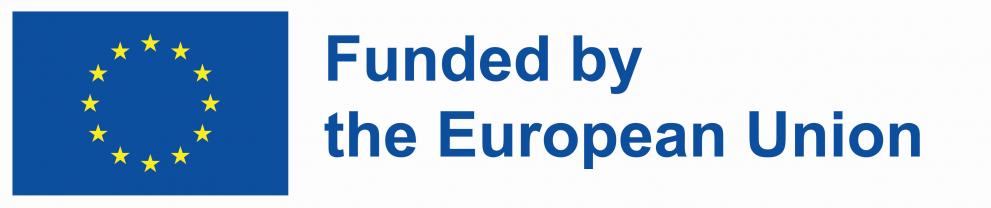}\end{minipage} }

\author{Petr A. Golovach}{University of Bergen, Norway}
{petr.golovach@uib.no}{https://orcid.org/0000-0002-2619-2990}
{Supported by the Research Council of Norway under BWCA (grant no.~314528) and Extreme Algorithms  (grant no.~355137) projects} 

\author{Yash Hiren More}{University of Bergen, Norway}
{yash.h.more@uib.no}{https://orcid.org/0000-0002-8651-6686}
{Supported by the Trond Mohn forskningsstiftelse (grant no. TMS2023TMT01)}

\authorrunning{F.V.~Fomin, P.A.~Golovach, and Y.~More}

\Copyright{Fedor V. Fomin, Petr A. Golovach, and Yash More}

\ccsdesc[500]{Theory of computation~Parameterized complexity and exact algorithms}
\ccsdesc[500]{Mathematics of computing~Graph algorithms}
\ccsdesc[300]{Mathematics of computing~Metric geometry}

\keywords{Ultrametrics, Hierarchical Clustering, Parameterized Complexity, 
Kernelization, FPT Algorithms}

\date{}

\makeatletter
\def\ps@headings{%
  \def\@evenhead{\large\sffamily\bfseries
    \llap{\hbox to0.5\oddsidemargin{\thepage\hss}}\leftmark\hfil}%
  \def\@oddhead{\large\sffamily\bfseries
    \rightmark\hfil
    \rlap{\hbox to0.5\oddsidemargin{\hss\thepage}}}%
  \def\@oddfoot{\hfil
    \rlap{%
      \vtop{%
        \vskip10mm
        \colorbox{lipicsYellow}{%
          \@tempdima\evensidemargin
          \advance\@tempdima1in
          \advance\@tempdima\hoffset
          \hb@xt@\@tempdima{%
            \textcolor{lipicsGray}{\normalsize\sffamily
            \bfseries\quad
            \expandafter\textsolittle
            \expandafter{\@EventShortTitle}}%
            \strut\hss}}}}}%
  \let\@evenfoot\@empty
  \let\@mkboth\markboth
  \let\sectionmark\@gobble
  \let\subsectionmark\@gobble
}
\makeatother

\begin{document}
\maketitle

\begin{abstract}
In the \textsc{Ultrametric Violation Distance} problem, we are given a set of distances between $n$ points, and the goal is to modify the minimum number of distances so that the resulting set forms a valid ultrametric. In other words, the task is to fit an ultrametric to the given data, where the quality of the fit is measured by the $\ell_0$-norm of the error.

While variants of this problem under the $\ell_\infty$ and $\ell_1$-norms have been well studied, the complexity of \textsc{Ultrametric Violation Distance} under the $\ell_0$-norm remained largely unexplored until recently. This changed with the work of Cohen-Addad, Fan, Lee, and Mesmay [FOCS~2022], who introduced a constant-factor approximation algorithm. Significant further progress on approximation algorithms was made in subsequent work by Charikar and Gao [SODA~2024], and by An, Kao, Lee, and Lee [FOCS~2025].

In this paper, we initiate a systematic study of \textsc{Ultrametric Violation Distance} from the perspectives of kernelization and fixed-parameter tractability (FPT). By the work of Fan, Gilbert, Raichel, Sonthalia, and Van~Buskirk [SWAT~2020], the problem is known to be FPT when parameterized by the number of violated distances $k$. We show that the problem admits a kernel with $\mathcal{O}(k^2)$ points. Additionally, we present a single-exponential-time algorithm with running time $9^k \cdot n^{\mathcal{O}(1)}$, which is asymptotically tight.

\end{abstract}

\newpage

\section{Introduction}\label{sec:intro}

Hierarchical clustering is a method of cluster analysis that builds a hierarchy of clusters by initially treating each data point as its own cluster and successively merging the two closest clusters. This process continues until all points are merged into a single cluster or a predefined stopping criterion is reached.
Hierarchical clustering is a widely used technique, with applications ranging from gene expression analysis \cite{dhaeseleer2005how} and phylogenetic tree construction \cite{ailon2011fitting,kimes2017statistical} to image processing \cite{lee2005unsupervised}, social network analysis \cite{breiger1975algorithm}, and marketing \cite{kumar2020digital}.

Hierarchical clustering is closely related to the concept of ultrametrics in mathematics \cite{Carlsson2010}. An ultrametric is a special kind of distance measure that satisfies a strengthened version of the triangle inequality: for any three points, the two largest distances are equal. In other words, a metric space on a finite point-set $X$ with distance function $\dist$ is an \emph{ultrametric space} if for every distinct $x,y,z \in X$,
\[
\dist(x,z) \leq \max\{\dist(x,y), \dist(y,z)\}.
\]
This property naturally gives rise to the following hierarchical structure among data points.  
Let $T$ be a finite, rooted tree (a dendrogram) whose leaves are points of a finite point-set $X$.  
Let $w : V(T) \setminus X \to \mathbb{R}_{\ge 0}$ be a function that assigns non-negative weights to the internal vertices of $T$ such that the vertex weights are non-increasing along root-to-leaf paths. Then $X$ with the distance function $\dist_T$ defined on $x,y \in X$ as
\[
\dist_T(x,y) = w(\mathrm{LCA}(x,y)),
\]
where $\mathrm{LCA}(x,y)$ is the lowest common ancestor of $x$ and $y$ in $T$,  is ultrametric. See \Cref{fig:dendtree}.

\begin{figure}
  \centering
  \begin{subfigure}[b]{0.45\textwidth}
    \centering
    \begin{forest}
        for tree={
            grow=east,
            draw,
            edge path={\noexpand\path [draw] (!u.parent anchor) -- ++(5pt,0) |- (.child anchor);},
            parent anchor=east,
            child anchor=west,
            l=1cm,           
            s sep=0.3cm,     
            inner sep=2pt,
            if level=0{}{font=\small}
        }
          [{$3$}
            [{$2$}
              [{$1$} [A] [B] ]
              [{$1$} [C] [D] ]
            ]
            [{$2$}
              [{$1$} [E] [F] ]
              [{$1$} [G] [H] ]
            ]
          ]
    \end{forest}
    \caption{Dendrogram for 8-point ultrametric}  
  \end{subfigure}
  \hfill
  \begin{subfigure}[b]{0.45\textwidth}
    \centering
    \begin{tabular}{c|cccccccc}
       & A & B & C & D & E & F & G & H \\ \hline
      A & 0 & 1 & 2 & 2 & 3 & 3 & 3 & 3 \\
      B &  & 0 & 2 & 2 & 3 & 3 & 3 & 3 \\
      C &  &  & 0 & 1 & 3 & 3 & 3 & 3 \\
      D &  &  &  & 0 & 3 & 3 & 3 & 3 \\
      E &  &  &  &  & 0 & 1 & 2 & 2 \\
      F &  &  &  &  &  & 0 & 2 & 2 \\
      G &  &  &  &  &  &  & 0 & 1 \\
      H &  &  &  &  &  &  &  & 0 \\
    \end{tabular}
    \caption{Ultrametric distance table}
  \end{subfigure}
  \caption{An ultrametric on eight points \(A,\dots,H\):  
    pairs \(\{A,B\},\{C,D\},\{E,F\},\{G,H\}\) merge at height 1,  
    quartets \(\{A,B,C,D\}\) and \(\{E,F,G,H\}\) at height 2,  
    and the full set at height 3.}\label{fig:dendtree}.
    (Note that only the upper triangle is shown; the matrix is symmetric.)
\end{figure}
 
The fundamental problem of fitting distances with ultrametrics, and more generally, with 
tree metrics, known as the \emph{numerical taxonomy problem}, has been a subject of interest since the 1960s~\cite{cavalli1967phylogenetic,sneath1962numerical}. In this problem, given a matrix of distances $D$, not necessarily a metric, one seeks the best fit of $D$ by an ultrametric matrix $D'$. Different formulations of the optimal fit for a given distance matrix $D$ lead to various objectives. The two classical ones are minimizing the sum of differences (i.e., minimizing the $\ell_1$-norm of matrix $D - D'$) and minimizing the maximum error (i.e., minimizing the $\ell_\infty$-norm).

In 2022, Cohen-Addad, Fan, Lee, and Mesmay~\cite{cohen2022fitting} initiated the study of minimizing the number of disagreements (i.e., minimizing the $\ell_0$-norm of $D - D'$) under the name \uvd. They provided a constant-factor approximation algorithm. This work initiated a surge of interest in approximation algorithms; see the section on Related Work.

In this paper, we study \uvd from the perspective of parameterized complexity. We focus on the ``natural'' parameterization by the number of violated distances $k$. The fact that \uvd is in \classFPT when parameterized by $k$ implicitly follows from the work of Fan, Gilbert, Raichel, Sonthalia, and Van~Buskirk~\cite{fan_et_al:LIPIcs.SWAT.2020.25}. In that work (Theorem~15 in~\cite{fan_et_al:LIPIcs.SWAT.2020.25}), they give a $k^{\mathcal{O}(k)} n^{\mathcal{O}(1)}$-time algorithm for the \textsc{Metric Violation Distance} problem. In this problem, one is allowed to change at most $k$ distances in order to transform a distance space into a metric. (Their algorithm also applies to a more general setting of metric repair on $\zeta$-chordal graphs.)  
It is known that \uvd can be reduced to an instance of \textsc{Metric Violation Distance} by tropicalizing the distances via the map $d \mapsto n^d$; see, for example, \cite[Corollary~5.16]{CohenAddadFLM25}. Combined with  ~\cite{fan_et_al:LIPIcs.SWAT.2020.25}, this immediately yields a $k^{\mathcal{O}(k)} n^{\mathcal{O}(1)}$-time algorithm for \uvd.

Following the established research pipeline in parameterized complexity, namely, first establishing fixed-parameter tractability, then seeking kernelization results and tighter parameter dependence, we are led to the following open questions. The first question 
 is on the existence of a polynomial kernel. 
 
 \begin{framed}
\noindent
\textbf{Question 1 (Kernelization).}  
Does \uvd admit a polynomial kernel when parameterized by the number of violated distances $k$?
\end{framed}

The second question is on the optimality of the exponential dependence in $k$ in the running time $k^{\mathcal{O}(k)} n^{\mathcal{O}(1)}$. 

\begin{framed}
\noindent
\textbf{Question 2 (Optimal parameter dependence).}  
What is the optimal dependence on $k$ in the running time of an FPT algorithm for \uvd? In particular, can the current $k^{\mathcal{O}(k)}$ dependence be improved?
\end{framed}

We note that it is highly unlikely that the exponential dependence on $k$ can be improved to a subexponential running time of the form $2^{o(k)} \cdot n^{\mathcal{O}(1)}$. This follows from established hardness results for \textsc{Cluster Editing}, which is a special case of \uvd. Since \textsc{Cluster Editing} does not admit a subexponential-time algorithm of the form $2^{o(k)} \cdot n^{\mathcal{O}(1)}$ unless the Exponential Time Hypothesis (ETH) fails~\cite{FominKPPV14,komusiewicz2012cluster}, the same limitation applies to \uvd. However, it still does not exclude the existence of a single-exponential running time 
$2^{O(k)} \cdot n^{\mathcal{O}(1)}$.

\subsection{Our results}
Before proceeding with describing our results, let us recall the well-known connection between \uvd and the \textsc{Cluster Editing} problem.  
A graph $G$ is a \emph{cluster graph} if every connected component of $G$ is a complete graph.  
Let $G$ be a graph, and let 
$F\subseteq \binom{V(G)}{2}$ be a set of unordered pairs of vertices
such that $G \triangle F$ is a cluster graph. 
Then $F$ is called a \emph{cluster editing set} for $G$.  
Here, $G \triangle F$ is the graph on $V(G)$ whose edge set is the symmetric difference between $E(G)$ and $F$. The task is to find a cluster editing set of size at most $k$.

Consider an ultrametric \( \Xcal = (X, \dist) \), where the function $\dist$ takes only two values, say $1$ or $2$. Construct an auxiliary graph $G$ on vertex set $X$, with edge set $xy \in E(G)$ if and only if $\dist(x,y) = 1$.  
It is straightforward to verify that $G$ is a cluster graph. Indeed, if it is not a cluster graph, it must contain an induced path $P_3 = xyz$ on three vertices. But then $\dist(x,y) = \dist(y,z) = 1$ and $\dist(x,z) = 2$, which contradicts the ultrametric property.
Thus, the \uvd{} problem, when the distance function takes only two values, is equivalent to finding the minimum cluster editing set for the auxiliary graph.

In this paper, we solve a more general version of the \uvd{} problem. 
Our problem is related to \textsc{Ultrametric Matrix Sandwich Problem} of \cite{farach1995robust}, which asks whether there exists an ultrametric satisfying given lower and upper bounds, but does not take an input distance function on $X$.

Formally, we are given two symmetric functions: a lower threshold 
 $\treshold_{\ell} \colon X \times X \to \mathbb{R}_{\geq 0}$ 
and an upper threshold 
$\treshold_{u} \colon X \times X \to \mathbb{R}_{\geq 0}$.

The meaning of these thresholds is that the final distance $\dist(x,y)$ must be within the interval $[\treshold_{\ell}(x,y), \treshold_{u}(x,y)]$.  
These thresholds arise naturally in our algorithm, so it is logical for us to consider this generalized version of the problem. \uvd is the special case of the constrained problem with $\treshold_{\ell}=0$ and $\treshold_{u}(x,y)=\max D$.

We define \cuvd (\Cuvd for short) as the following problem.
The input is an integer $k\geq 0$ and a $4$-tuple $ (X, \dist, \treshold_{\ell}, \treshold_{u})$, where $X$ is the set of points. 
The function $\dist \colon X \times X \rightarrow \mathbb{R}_{\geq 0}$ is a symmetric function representing distances between the points.
Additionally, we have symmetric functions 
$\treshold_{\ell} \colon X \times X  \to  \mathbb{R}_{\geq 0}$ 
and 
$\treshold_{u} \colon X \times X  \to  \mathbb{R}_{\geq 0}$ 
over the pairs of points. 
Besides symmetry,
we assume that
$\dist(x,x)=\treshold_\ell(x,x)=\treshold_u(x,x)=0$ for all $x\in X$.
Then the  \Cuvd problem is to decide whether a given $4$-tuple $ (X, \dist, \treshold_{\ell}, \treshold_{u})$ could be transformed into an ultrametric  $ \Xcal' = (X, \dist')$ by changing at most $k$ values of $\dist$ such that for every   pair $x,y\in X$ we satisfy $\treshold_{\ell}(x,y) \leq \dist'(x,y) \leq \treshold_{u}(x,y)$. Because of the symmetry of distances, we always assume that whenever $\dist(x,y)$ gets modified, the same operation adjusts $\dist(y,x)$.

An equivalent formulation of the problem uses distance matrices. Let \( \Xcal = (X, \dist) \) be a distance space where \( X = \{x_1, \ldots, x_n\} \). Then \( \Xcal \) is equivalently represented by the \emph{distance matrix}
$D = \{d_{ij}\}_{1\leq i,j\leq n},$ 
where each entry \( d_{ij} = \dist(x_i, x_j) \) for all \( i, j \in \{1, \ldots, n\} \).  
To measure closeness, we use the $\ell_0$ distance. Given two matrices $D$ and $D'$, the $\ell_0$ distance between them, denoted $\|D-D'\|_0$, is the number of non-zero entries of $D-D'$.
In other words, a distance space $\Xcal = (X, \dist)$ with distance matrix $D$ can be transformed into an ultrametric $\Xcal' = (X, \dist')$ with distance matrix $D'$ by modifying at most $k$ distances if and only if $\|D - D'\|_0 \leq 2k$. The multiplicative factor $2$ arises because the distance matrix $D$ is symmetric, meaning that changing one distance in $\Xcal$ affects two entries in $D$.

\defparproblema{\cuvd (\Cuvd)}{$4$-tuple $ (X, \dist, \treshold_{\ell}, \treshold_{u})$ with the corresponding  distance matrix $D$, integer $k$}%
{$k$}%
{
Decide whether there exists an ultrametric space $ \Xcal' = (X, \dist')$ on same set of points $X$ with the distance matrix $D'$
such that $\|D - D'\|_0 \leq 2k$ and such that for every pair $x,y\in X$, we have $\treshold_{\ell}(x,y) \leq \dist'(x,y) \leq \treshold_{u}(x,y)$.
}

\Cuvd could be seen as a ``violation variant'' of the  \textsc{Matrix Sandwich} problem   \cite{farach1995robust}.

\medskip
 Our first result is the theorem establishing a polynomial kernel for \Cuvd.

 \begin{restatable}{theorem}{KernelConstrUltr}
\label{theorem:kernelconstrultr}
\Cuvd{} admits a kernel with $ 3k^2 + 2k $ points and of size $\cO(k^4\log{k})$.
\end{restatable}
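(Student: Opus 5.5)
We follow the template of the quadratic-vertex kernel for \textsc{Cluster Editing}: mark the points involved in "conflicts", and argue that an unmarked point which belongs to a large "clean" group is redundant. The kernel is obtained by exhaustively applying a few polynomial-time reduction rules, each proved safe. We finish by bounding the number of remaining points by $3k^2+2k$, and then the encoding size.

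\textbf{Conflicts.} A triple $x,y,z$ is a \emph{violating triple} if $\dist$ restricted to it is not ultrametric. A pair $\{x,y\}$ is \emph{forced} if $\dist(x,y)\notin[\treshold_\ell(x,y),\treshold_u(x,y)]$. Every forced pair must be modified. Every violating triple must have one of its three pairs modified. We therefore first reject if more than $k$ forced pairs exist, and otherwise treat forced pairs as conflicts as well.

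\textbf{Sunflower rule.} If some pair $\{x,y\}$ lies in at least $k+1$ violating triples $\{x,y,z_i\}$ whose other pairs $\{x,z_i\},\{y,z_i\}$ are all distinct, then $\{x,y\}$ must be modified. Otherwise $k+1$ disjoint modifications would be needed. We cannot simply delete such a pair. Instead we record the fact in the thresholds: we replace $\dist(x,y)$ by a value outside its allowed interval, which makes the pair forced and keeps $k$ unchanged. (Alternatively we mark the pair as "must change".) This is exactly why the constrained version \Cuvd is convenient.

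After this rule, every pair lies in at most $k$ essentially distinct violating triples. Now let $M$ be the set of points touched by the at most $k$ pairs that are forced or must change, together with the points of violating triples containing them. A counting argument gives the following: in a \yesinstance, the solution pairs cover all violating triples, and each solution pair participates in at most $k$ triples beyond its endpoints. Hence the set $A$ of points lying in some conflict has size at most $2k + k\cdot k$ plus lower-order terms. We aim for $|A|\le k^2+2k$; otherwise we reject.

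\textbf{Unaffected points.} Points outside $A$ are in no violating triple and in no forced pair, so $\dist$ restricted to $X\setminus A$ is an ultrametric. Hence $X\setminus A$ carries a dendrogram structure. Group the points of $X\setminus A$ into classes of "twins": $x,x'$ are twins if $\dist(x,z)=\dist(x',z)$ and the thresholds agree for all $z\ne x,x'$, with $\dist(x,x')$ equal to the minimum level at which they merge. The second rule states that if a twin class has more than $k+1$ members, one member can be deleted. Any solution modifies pairs at fewer than $k+1$ of the members, so some member $x'$ is untouched. A solution for the reduced instance can then be extended by copying the row of $x'$ onto the deleted point, and the ultrametric property is preserved because twins in an ultrametric can be duplicated.

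The main obstacle is bounding the number of twin classes, i.e.\ showing that the points outside $A$ collapse into $O(k)$ classes of size $O(k)$, so that they contribute at most $2k^2$ points. The first attempt is as follows. Two points of $X\setminus A$ that are not twins must be separated by some $a\in A$ or by their mutual position in the dendrogram. We would charge each class to a solution pair incident to $A$, using that an unaffected class must "see" the solution edits uniformly, since otherwise it would be part of a violating triple. If this does not go through, we add a rule removing points $x\notin A$ whose distances to all of $A$ are consistent with a single dendrogram leaf position. Such a point is irrelevant and can be deleted.

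\textbf{Size.} Finally, the kernel has $n\le 3k^2+2k$ points. Re-encoding all distance and threshold values by their ranks keeps each value in $O(\log k)$ bits without changing ultrametricity or threshold comparisons. With $O(n^2)=O(k^4)$ entries this gives total size $O(k^4\log k)$.
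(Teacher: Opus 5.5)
Your proposal has two genuine gaps, and they sit exactly where the paper's proof does its real work.

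\textbf{The bound on the conflict set $A$ fails.} Your sunflower rule correctly shows that $\{x,y\}$ must be modified. But declaring the pair forced removes none of the violating triples through it, and resetting $\dist(x,y)$ to an out-of-range value can even create new ones. So the claims that afterwards every pair lies in at most $k$ violating triples, and that each solution pair is charged at most $k$ triples, are false precisely for forced pairs. Concretely, take points $x,y,z_1,\dots,z_m$ with $\dist(x,y)=5$, all other distances $1$, all thresholds $[0,5]$, and $k=1$. Setting $\dist(x,y)=1$ yields an ultrametric, yet every $z_i$ lies in the violating triple $xyz_i$. Hence $|A|=m+2$, and your rule wrongly rejects for $m\ge 2$.

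What is missing is a mechanism that makes the conflicts of a must-change pair disappear. The paper gets it from a threshold-aware notion of bad triangle, with separate rules for scalene and isosceles triangles on $xy$:
\begin{itemize}
\item if $k+1$ scalene ones share the same larger side $M$, the new value is forced, so $xy$ is fixed at $M$ and $k$ is decreased;
\item if $2k$ scalene ones remain, the instance is rejected;
\item if $k+1$ isosceles ones exist, $\treshold_u(x,y)$ is lowered to the $(k+1)$-st smallest leg length;
\item for an unfit edge, an isosceles triangle counts as bad only if $\treshold_u(x,y)$ exceeds its legs (type (iiiB)).
\end{itemize}
In the example above this sets $\treshold_u(x,y)=1$. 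After that no $xyz_i$ is bad, and all $z_i$ are deleted. Only with these rules does every solution edge carry at most $3k$ apices of bad triangles, giving $|A|(3k+2)\le 3k^2+2k$.

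\textbf{The twin-class step cannot bound the remaining points, and you leave it open.} Already an ultrametric such as $\dist(p_i,p_j)=\max\{i,j\}$ has $n-1$ twin classes. Attaching, at a large distance, a small gadget that forces $k$ edits keeps $A$ inside the gadget and leaves all these classes intact. So the number of classes is not $O(k)$, and a rule deleting non-twin points is unavoidable.

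Your fallback rule for this is unproven, and with threshold-blind conflicts it is unsound. Take $\dist(u,x)=\dist(u,y)=2$, $\dist(x,y)=1$, $\treshold_\ell(x,y)=3$, $\treshold_u(x,y)=4$, all other thresholds $[0,10]$, and $k=1$. No triple is violating, so $u\notin A$, and its distances to $A=\{x,y\}$ fit a dendrogram. Yet deleting $u$ turns a \noinstance into a \yesinstance.

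The lemma you need is the paper's irrelevant-vertex rule: a point $u$ incident to no unfit edge and not the apex of any bad triangle can be deleted. Its proof runs as follows.
\begin{itemize}
\item Split $X\setminus\{u\}$ into layers $X_i=\{x:\dist(u,x)=d_i\}$.
\item Use the absence of bad triangles to show that cross-layer pairs are not unfit and satisfy $\dist(x,y)=d_j$ for $x\in X_i$, $y\in X_j$, $i<j$.
\item Truncate any solution inside $X_i$ at $d_i$ by the truncation lemma. This is admissible because inside $X_i$ unfit edges have $\treshold_u\le d_i$ (type (iiiB) again) and all other edges have $\dist\le d_i$.
\end{itemize}
With this lemma no twin analysis is needed at all.

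Two smaller points. Your twin rule would need classes of size more than $2k$, since $k$ modified pairs can touch $2k$ points. Your final rank-compression step coincides with the paper's.
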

\noindent
Here, the size refers to the total encoding length of the instance.
Since the reduced instance has $\Ocal(k^2)$ points, it gives rise to $\Ocal(k^4)$ pairwise distances and thresholds, and each of which may require $\Ocal(\log k)$ bits to encode, resulting in a total size of $\Ocal(k^4 \log k)$ bits.

In other words, \Cref{theorem:kernelconstrultr} establishes a polynomial-time algorithm that, for a given instance of the problem, constructs an equivalent instance on $\mathcal{O}(k^2)$ points. 
The algorithm is based on a sequence of reduction rules, which is a common approach in kernelization 
(see, e.g.,~\cite{fomin2019kernelization}). However, the reduction rules here differ substantially 
from the classical techniques (such as those based on the Sunflower Lemma) that are typically used 
for $d$-\textsc{Hitting Set}. 
The key idea behind the rules is a carefully designed preprocessing of distances and edge constraints 
within ``bad'' triangles, which are roughly triples of points that violate the ultrametric properties. 
This preprocessing allows us to identify ``safe'' points whose removal yields an equivalent, 
but smaller, instance. 

 \Cref{theorem:kernelconstrultr} implies an \classFPT algorithm with running time 
\(k^{\mathcal{O}(k)} \cdot |X|^{\mathcal{O}(1)}\) for solving \Cuvd where $|X|$ is the number of points. 
The idea is to brute-force over all possible sets of at most $k$ pairs, resulting in 
$\binom{\mathcal{O}(k^4)}{k}$ possible guesses. 
For each such guess, one can determine in polynomial time (this requires some additional arguments, see \Cref{lemma:poly-if-known-edges}) 
whether modifying the distances between the chosen pairs transforms the instance into an ultrametric. 
In the next theorem, we present a single-exponential \classFPT algorithm.

\begin{restatable}{theorem}{FPTCuvd}
    \label{theorem:fpt-cuvd}
    \Cuvd{} admits an \classFPT{} algorithm with running time $\Ocal(9^k \cdot k \cdot |X|^{2}\big)$.
\end{restatable}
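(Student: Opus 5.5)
We prove \Cref{theorem:fpt-cuvd} with a bounded-depth search tree driven by ``bad'' triangles: triples of points whose current values of $\dist$ violate the ultrametric condition. We keep the thresholds $\treshold_\ell,\treshold_u$ as constraints that the branching tightens. We also maintain a set of \emph{fixed} pairs whose final value has been decided and may not change again. This is why the constrained formulation is convenient.

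\textbf{Preprocessing.} If some unfixed pair has $\dist(x,y)\notin[\treshold_\ell(x,y),\treshold_u(x,y)]$, that pair must be modified. We therefore mark it as changed, decrease $k$, and iterate. If $k<0$ we reject.

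\textbf{Certificate when no bad triangle exists.} Suppose no bad triangle remains and every unchanged pair respects its thresholds. We then need to decide in polynomial time whether the pairs already marked as changed can be assigned values that give an ultrametric within the thresholds. This is exactly the setting of \Cref{lemma:poly-if-known-edges}, which we invoke. Concretely, we take the maximal ultrametric that lies below $\treshold_u$ and agrees with the fixed values (a subdominant-ultrametric or minimum-spanning-tree style computation on the remaining pairs). We then check the lower thresholds and the fixed values. This step costs $\Ocal(|X|^2)$ per leaf.

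\textbf{Branching on a bad triangle.} Let $x,y,z$ be a bad triangle with $\dist(x,y)\le\dist(y,z)<\dist(x,z)$, so its largest value is unique. Any solution must change at least one of the three pairs. The naive branch ``change $xy$, $yz$, or $xz$'' has only three branches, but it does not obviously let $k$ drop in every branch, because a pair may be changed to an arbitrary value. To obtain $9^k$, we refine each choice by where the new value lies relative to the other two distances. The intervals to distinguish are: below $\dist(x,y)$, equal to one of the existing values, between them, or above $\dist(x,z)$. In each branch we record the change and tighten $\treshold_\ell$ or $\treshold_u$ accordingly.

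The measure is designed to drop by one per changed pair. A branch that changes only one pair but leaves the triangle forced needs no further budget. A branch that forces two changes of the triangle is charged $k-2$. Summing over the branches, each charged appropriately, gives a recurrence $T(k)\le \sum_i T(k-c_i)$ whose solution is at most $9^k$. The target is, for example, $3$ branches at $k-1$ plus a constant number at $k-2$. Finding a bad triangle takes $\Ocal(|X|^2)$ time if we maintain, for each pair, a witness list. The extra factor $k$ in the stated bound comes from the search-tree depth, giving $\Ocal(9^k\cdot k\cdot|X|^2)$ overall.

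\textbf{Main obstacle.} The hardest step is proving that the refined branching is exhaustive and \emph{safe}: we must show that some optimal solution assigns the changed pair a value lying in one of finitely many canonical positions determined by the triangle and the thresholds. Doing this requires an exchange argument. Given any solution, we round each modified value to the nearest ``critical'' value (an existing distance or a threshold) without creating new violations, using the fact that ultrametric constraints depend only on the relative order of values. Once this normalization lemma is in place, the branching and recurrence analysis are routine. I would first try to prove that normalizing modified values to the set $\{\dist(\cdot)\}\cup\{\treshold_\ell(\cdot),\treshold_u(\cdot)\}$ preserves ultrametricity.
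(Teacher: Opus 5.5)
\textbf{There is a genuine gap.} Your proposal has no workable way to represent a pair whose new value is not yet determined, and that is where the difficulty of the problem lies.

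Already in your cost-one branches the new value need not be a point. If $\dist(x,y)=\dist(y,z)<\dist(x,z)$, changing only $xz$ merely forces $\dist'(x,z)\le\dist(y,z)$. Your other branches likewise only tighten $\treshold_\ell$ or $\treshold_u$, i.e.\ they record intervals. Once a changed pair carries an interval, ``bad triangle'' is undefined for the triangles containing it, and each repair fails:
\begin{itemize}
\item If you branch on the exact value, \Cref{corollary:tight-solution} still leaves $\Theta(|X|^2)$ candidates. The branching factor is then not bounded by a function of $k$.
\item If you keep the interval and refine it whenever a later triangle needs it, your measure (one unit per changed pair) does not drop. The search tree is then not bounded in $k$.
\item If you stop once no triangle is violated for every choice within the intervals, and then run the sandwich check of \Cref{lemma:poly-if-known-edges}, the algorithm is incomplete. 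Infeasibility of a sandwich instance is witnessed by a long path of small upper bounds against one large lower bound, not by a triangle. A failing leaf therefore gives you no bounded set of pairs to branch on.
\end{itemize}
Consequently the recurrence $T(k)\le 3T(k-1)+c\,T(k-2)$ is asserted, not derived from a concrete exhaustive rule set. The normalization you call the main obstacle is the easy part: it is \Cref{corollary:tight-solution}, and it does not touch this issue.

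\textbf{The missing idea} is to make undetermined values harmless by deciding, at any moment, only their position relative to a single level. The paper processes the values $d_L>\dots>d_1$ of $\Dvalues\cup\Treshold_u\cup\Treshold_\ell$ top-down.
\begin{itemize}
\item At level $i$ it records only whether a pair equals $d_i$ or lies strictly below it. A decreased pair is simply marked, meaning its value is $<d_i$ but otherwise unknown.
\item The graph $G_i(\dist,M)$ must then be a cluster graph, so obstructions become local again: induced $P_3$'s, i.e.\ violated triangles at level $i$. This holds even though marked values are unknown.
\item \Cref{claim:G-i-cluster} guarantees that once level $i$ is clean, every later violation lies inside a single clique of $G_i$. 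Hence no decision is ever revisited, except that a marked pair is finalized once at a lower level.
\end{itemize}
With $\mu=2k-|\UNFIX|$, the paper charges $2$ for an increase, $1$ for marking and $1$ for finalizing. Each pair therefore consumes at most $2$ units over its lifetime. Each of the three branches lowers $\mu$ by at least one, which gives at most $3^{2k}=9^k$ leaves. This is exactly the accounting for interval-valued pairs that your scheme lacks.
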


The running time provided in \Cref{theorem:fpt-cuvd} is asymptotically tight in the following sense. 
First, the exponential dependence on $k$ cannot be improved to subexponential. Indeed, as we already noted, even \textsc{Cluster Editing}, which is a special case of \Cuvd, does not admit a subexponential-time algorithm of the form $2^{o(k)} \cdot n^{\mathcal{O}(1)}$ unless the Exponential Time Hypothesis (ETH) fails~\cite{FominKPPV14,komusiewicz2012cluster}. 
Second, the size of the input of the problem is $\Ocal(|X|^2)$. 
Thus, for constant $k$, the running time of the theorem is \emph{linear} in the input size.
 
The algorithm in \Cref{theorem:fpt-cuvd} is based on branching and follows what is known in the 
literature on Hierarchical Clustering as the \emph{divisive} paradigm. 
At each branching step, the algorithm solves a suitable variant of the \textsc{Cluster Editing} 
problem by branching on all induced copies of a path $P_3$ (the forbidden structure for cluster graphs). 
However, since the algorithm proceeds through multiple levels of branching, the values of certain edges 
may be altered multiple times during the computation. 
This differs from the standard branching scheme for \textsc{Cluster Editing}, and it complicates both 
the branching rules and the analysis of the running time. 
In particular, the analysis requires a parameter different from $k$ to  bound the running time 
of the algorithm.

\ifshort

\else

\subsection{Overview of our methods}
As we already mentioned, \Cuvd generalizes \textsc{Cluster Editing}. 
A fixed-parameter tractable (\classFPT) algorithm (with parameter $k$) for \textsc{Cluster Editing} can be obtained by a simple hitting set branching on all $P_3$'s~\cite{cygan2015parameterized}.  
It might seem that for \uvd, a similar simple hitting set problem could be formulated for a family of violated triples, specifically, the triples of distances that do not satisfy the ultrametric inequality. This would suggest the possibility of an easy branching algorithm to establish that the problem is in \classFPT.
However, even if one identifies a set of edges that hits all the violated triples, there may be no way to modify the distances on these edges without creating new violations, as shown in \Cref{fig:ultrametric-eg}.
This distinction significantly increases the difficulty of \uvd compared to a hitting set problem on bounded-size cycles.

\begin{figure}[h]
    \centering
    \includegraphics[width=0.4\textwidth]{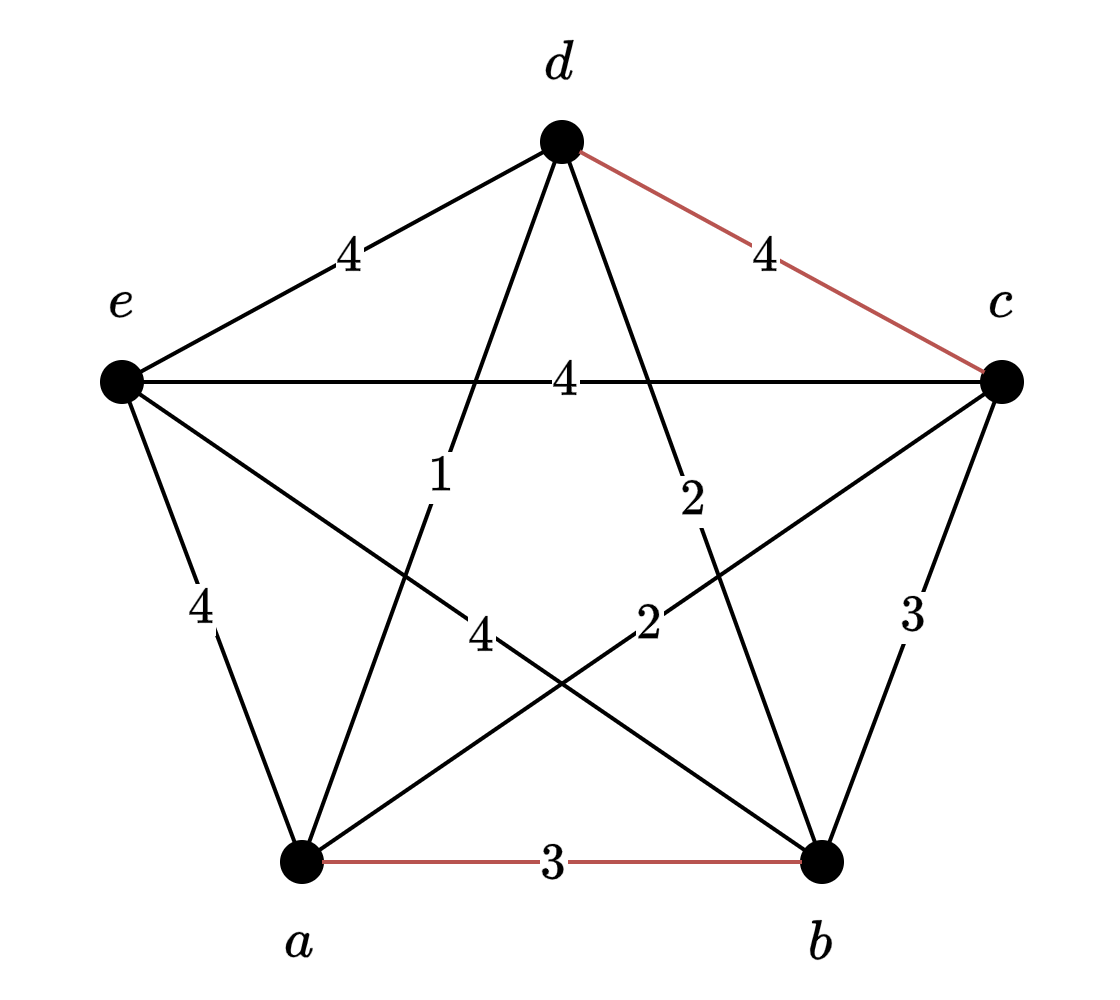} 
    \caption{In the figure $abd, acd, bcd$ are the violated triples. 
    However, eliminating these triples by assigning new distances to $ab$ and $cd$  creates new violated triples.}
    \label{fig:ultrametric-eg}
\end{figure}

This difference from the classic \textsc{$d$-Hitting Set} problem means that the
powerful toolbox developed for fast parameterized algorithms and
kernelization, see e.g. \cite{cygan2015parameterized,fomin2019kernelization},  cannot be directly applied to \Cuvd. 

    \ifshort
    
    We next establish a key structural property of solutions that will be used repeatedly in both the kernelization and the FPT algorithm. 
    Intuitively, if we guess the set of edges whose distances are modified in an optimal solution, then the remaining task of assigning consistent distances can be carried out efficiently while restricting attention to solutions whose values come from a finite set determined by the input. 
    The following proposition formalizes this idea.

    \begin{proposition}
        \label{lemma:poly-if-known-edges}
        There is an algorithm which when given an instance of \Cuvd $(\Xcal=(X, \dist, \treshold_{\ell}, \treshold_{u}), k)$ with corresponding distance matrix $D$ and set of edges $|A| \leq k$ will decide whether one can convert $D$ into an ultrametric $D'$ by changing only the edges in $A$ in polynomial time such that $D'$ only uses values from $D \cup \Treshold_{u} \cup \Treshold_{\ell}$. 
    \end{proposition}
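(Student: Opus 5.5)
Given the set $A$ of at most $k$ pairs, all other distances are fixed at their $D$-values. The task is therefore an ultrametric sandwich problem. For every pair $e=xy$ we have an interval: if $e\notin A$ it is the single point $\{\dist(x,y)\}$, and we first check $\treshold_\ell(e)\le \dist(e)\le\treshold_u(e)$; if $e\in A$ it is $[\treshold_\ell(e),\treshold_u(e)]$. We must decide whether some ultrametric $\dist'$ respects all intervals and takes values only in $V=D\cup\Treshold_u\cup\Treshold_\ell$. My plan is to reduce this to the classical ultrametric sandwich problem of \cite{farach1995robust} with lower bounds $L(e)$ and upper bounds $U(e)$, and to compute the canonical \emph{maximal} subdominant-type ultrametric below $U$.

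Let $M$ be the matrix with $M(e)=\treshold_u(e)$ for $e\in A$ and $M(e)=\dist(e)$ otherwise. I will compute the subdominant ultrametric $U^*$ of $M$: $U^*(x,y)$ is the minimum over all $x$--$y$ paths in the complete graph of the maximum $M$-value on the path (minimax path distance). It is classical, and I would reprove it briefly, that $U^*$ is an ultrametric with $U^*\le M$, and that it is pointwise the largest ultrametric below $M$. Indeed, if $\dist''\le M$ is ultrametric, then applying the ultrametric inequality repeatedly along a path gives $\dist''(x,y)\le \max_{\text{path}} \dist''\le \max_{\text{path}} M$. Every value of $U^*$ is an entry of $M$, hence lies in $V$. $U^*$ is computable in polynomial time, for example via a minimum spanning tree or Floyd--Warshall with (min,max).

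The key claim is this: a feasible $\dist'$ exists if and only if $U^*$ itself is feasible, i.e.\ $U^*(e)\ge\treshold_\ell(e)$ for $e\in A$ and $U^*(e)=\dist(e)$ for $e\notin A$. If $U^*$ is feasible we are done, and it uses only values from $V$. Conversely, suppose $\dist'$ is feasible. Then $\dist'\le M$, so $\dist'\le U^*\le M$ by maximality. For $e\in A$ this gives $U^*(e)\ge\dist'(e)\ge\treshold_\ell(e)$, and also $U^*(e)\le \treshold_u(e)$. For $e\notin A$ we get $\dist(e)=\dist'(e)\le U^*(e)\le M(e)=\dist(e)$, so equality holds. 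This dominance argument is the step I consider the crux, but it goes through cleanly.

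The remaining care point is checking the bookkeeping: the diagonal, symmetry, and the requirement that the thresholds for fixed pairs hold, which was checked at the start. The algorithm is then: build $M$, compute $U^*$, and verify the equalities and lower bounds. This runs in $\cO(|X|^3)$ time, which is polynomial. I do not expect any serious obstacle beyond stating the maximality property of the subdominant ultrametric precisely.
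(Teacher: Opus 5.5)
Your proof is correct. It differs from the paper's in mechanism, though not in underlying principle.

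The paper works locally and iteratively. It sets every pair of $A$ to its upper threshold, then repeatedly takes a violated triangle and lowers its strictly longest side to the second-longest value, failing if that side lies outside $A$. It maintains the invariant that every solution is pointwise below the current matrix. Polynomiality comes from the fact that values only decrease within the finite set $\Dvalues\cup\Treshold_u\cup\Treshold_\ell$.

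You instead write down the fixpoint of this process in closed form, namely the subdominant (minimax-path) ultrametric $U^*$ of $M$. You then replace the incremental invariant by the one-shot maximality of $U^*$ among ultrametrics below $M$.

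Your route gives you three things:
\begin{itemize}
\item a global dominance argument with no termination analysis;
\item an explicit $\cO(|X|^3)$ running time, or $\cO(|X|^2)$ via a minimum spanning tree;
\item explicit treatment of two checks that the paper's write-up leaves implicit: that pairs outside $A$ satisfy their thresholds, and that the final values on $A$ are at least $\treshold_\ell$.
\end{itemize}
It also proves directly the stronger form needed for the subsequent corollary. If any feasible ultrametric exists, with arbitrary real values, then $U^*$ is one, so values from $\Dvalues\cup\Treshold_u\cup\Treshold_\ell$ suffice.

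The paper's route, in turn, stays in the triangle-by-triangle language used throughout the rest of the paper and needs no path characterization. In both, the crux is the same: there is a pointwise largest ultrametric below the upper bounds, and feasibility reduces to comparing it with the lower bounds.
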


    \begin{proof}
        The  idea is to set the value of $\dist$ to every edge in $A$ equal to its upper-threshold value  and then monotonically decrease some of these values until the ultrametric condition is satisfied. By modifying thresholds,  we preserve the invariant that each edge in $A$ always attains its current upper threshold.  Hence, at any step, either the current distances already form an ultrametric, or we could further  decrease certain edge values.
    
        We start by setting for every edge $xy \in A$,  $\dist(x,y) := \treshold_{u}(x,y)$. After this, we follow the following greedy procedure.
        If $D$ becomes an ultrametric, we stop and return $D$ as the solution. If $D$ is not an ultrametric, then there exists a triangle $T = xyz$ which does not satisfy the ultrametric conditions. 
        By symmetry, we can assume that $\dist(x,y) > \dist(y,z) \geq \dist(x,z)$.
        The weights of the edges $yz$ and $xz$ cannot be increased: if these edges are not in $A$, then we do not change them, and if they are in $A$, their weights are already at their upper thresholds and therefore cannot be increased further.
        Thus, we have to decrease $\dist(x,y)$. If $xy\notin A$ then we cannot modify $\dist(x,y)$ and conclude that $D$ cannot be converted into ultrametric. 
        Otherwise, if $xy\in A$, and in every solution $\dist(x,y)$ should be at most $\dist(y,z)$. Hence, it is safe to decrease $\dist(x,y)$ to $\dist(y,z)$ and set a new threshold $\treshold_{u}(x,y) := \dist(y,z)$.    

        Firstly, in the above procedure only entries from $D \cup \Treshold_u \cup \Treshold_l$ are used.
        Additionally, the value of any edge $xy \in A$ can be changed at most $|D \cup \Treshold_u \cup \Treshold_l| = 3\cdot|X|^{2}$ times.       
        Also, we can check whether a given distance matrix is ultrametric in time $\mathcal{O}(|X|^2)$ \cite{KANNAN199626}.
        Thus, the above procedure takes polynomial time.
    \end{proof}

    \Cref{lemma:poly-if-known-edges} implies the following property of an optimal solution, which we will use in obtaining the kernel as well as in the algorithm.

    \begin{corollary}
        \label{corollary:tight-solution}
        If the given instance $(\Xcal=(X, \dist, \treshold_{\ell}, \treshold_{u}), k)$ with corresponding distance matrix $D$ of \Cuvd is a \yesinstance then there exists a solution ultrametric space $ \Xcal' = (X, \dist')$ on the same set of points $X$ with distance matrix $D'$ such that $\|D - D'\|_0 \leq 2k$, satisfying the threshold conditions, and
        each entry of $D'$ is from the set $D \cup \Treshold_{u} \cup \Treshold_{\ell}$.    
    \end{corollary}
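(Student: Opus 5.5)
The plan is to derive the corollary from the procedure in the proof of \Cref{lemma:poly-if-known-edges}, applied to the edge set actually changed by some optimal solution, and to argue that this procedure succeeds whenever any solution modifying only those edges exists. Fix a \yesinstance and a solution $D^*$ with $\|D-D^*\|_0\le 2k$ that satisfies the thresholds. Let $A$ be the set of pairs $xy$ with $\dist^*(x,y)\neq\dist(x,y)$, so $|A|\le k$. Run the greedy procedure with this $A$: first set every $xy\in A$ to $\treshold_u(x,y)$, then repeatedly lower the largest edge of a violated triangle to the middle value and lower its upper threshold to match. I will show that the procedure stops with an ultrametric $D'$ whose entries come from $D\cup\Treshold_u\cup\Treshold_\ell$ and which changes only edges in $A$. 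Then $\|D-D'\|_0\le 2k$, and this $D'$ is the required solution.

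The key step is an invariant. At every stage of the procedure, each current value $\dist(x,y)$ with $xy\in A$ satisfies $\dist(x,y)\ge\dist^*(x,y)$, and the current upper threshold of every $xy\in A$ is still at least $\dist^*(x,y)$. Initially this holds because $\dist^*(x,y)\le\treshold_u(x,y)$. For the inductive step, take a violated triangle $xyz$ with $\dist(x,y)>\dist(y,z)\ge\dist(x,z)$. Each of the edges $yz$ and $xz$ either lies outside $A$, in which case its value equals its value in $D^*$, or lies in $A$, in which case its value is at least its value in $D^*$ by the invariant. So $\dist^*(y,z)\le\dist(y,z)$ and $\dist^*(x,z)\le\dist(x,z)$. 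The ultrametric inequality for $D^*$ then gives $\dist^*(x,y)\le\max\{\dist^*(y,z),\dist^*(x,z)\}\le\dist(y,z)$. Two consequences follow. If $xy\notin A$, then $\dist^*(x,y)=\dist(x,y)>\dist(y,z)$, a contradiction, so the procedure never hits the failure branch. If $xy\in A$, then lowering $\dist(x,y)$ to $\dist(y,z)$ keeps the invariant.

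It remains to check the remaining conditions. Each update strictly decreases a value to another value already present in the current matrix, and every such value lies in $D\cup\Treshold_u$, so the procedure terminates and all entries stay in $D\cup\Treshold_u\cup\Treshold_\ell$. By the invariant, the final values respect the lower thresholds, since $\dist'(x,y)\ge\dist^*(x,y)\ge\treshold_\ell(x,y)$. They respect the upper thresholds because values only decrease from $\treshold_u$. The procedure halts only when no triangle is violated, so $D'$ is an ultrametric. Pairs outside $A$ keep their original values, which gives $\|D-D'\|_0\le 2k$. One small point needs care: an edge of $A$ may end with its original value $\dist(x,y)$, but this only lowers the count of changed entries, so the bound still holds.

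The main obstacle is the invariant, namely showing that lowering an edge to the middle value of a violated triangle never goes below $\dist^*$. This depends on comparing the other two edges with $D^*$ via the induction, and the argument above handles it.
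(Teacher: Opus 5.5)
Your proposal is correct and takes the paper's route: the paper proves the corollary by running the greedy procedure of \Cref{lemma:poly-if-known-edges} on the edge set $A$ of some solution. Your invariant $\dist(x,y)\ge\dist^*(x,y)$ on $A$ is exactly the justification behind the paper's claim that every solution must satisfy $\dist(x,y)\le\dist(y,z)$ when the largest edge of a violated triangle is lowered, and you make explicit two things the paper leaves implicit: that the failure branch is never reached and that the lower thresholds remain satisfied.
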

    
    \begin{proof}
        Since $\Xcal$ is a \yesinstance, there is a set $A$ of at most $k$ edges whose weight changes transform $\Xcal$ into an ultrametric.  We apply  \Cref{lemma:poly-if-known-edges} to this set $A$.
    \end{proof} 

    The corollary guarantees the existence of an optimal solution whose distances come from a finite set of candidate values derived from the input. 
    This discretization is key for both the kernelization and the algorithm, as it allows us to avoid arbitrary real values and instead work with a bounded search space.

    \else

    \fi

\medskip\noindent\textbf{Proof of \Cref{theorem:kernelconstrultr}}
We start by outlining the proof strategy behind the kernelization algorithm in  \Cref{theorem:kernelconstrultr} and highlighting the main intuitions.

For a given instance $\bigl((X,\dist,\tau_\ell,\tau_u),k\bigr)$ of \Cuvd, 
we often use
graph-theoretic terminology, referring to points of $X$ as \emph{vertices} and
pairs of points as \emph{edges}.

The proof proceeds 
  in three  phases:
\begin{enumerate}
  \item \emph{Normalize and bound local obstructions} by a sequence of reduction rules that detect immediate \noinstances and enforce that each edge participates in only $O(k)$ ``critical'' triangle configurations.
  \item \emph{Extract an irrelevant vertex} and delete it; iterating this yields a bound $|X|=O(k^2)$.
  \item \emph{Compress the numeric encoding} (distances and thresholds)  to obtain a polynomial kernel.
\end{enumerate}

\medskip
\noindent\textsl{``Bad triangles'' as the right notion of obstruction.}
A triangle violating the ultrametric inequality 
\[
\dist(x,z)\le \max\{\dist(x,y),\dist(y,z)\}\qquad\text{for all }x,y,z\in X,
\]
(equivalently: every triangle is either equilateral or isosceles with the two larger sides equal)
is a local witness of non-ultrametricity, but for kernelization, we need a refined notion of a \emph{bad triangle}, which is not only violated but also  accounts for thresholds and forced edits. 

We partition edges into three classes:
\begin{itemize}
  \item $\FIXED$(\emph{fixed}): $\tau_\ell(xy)=\tau_u(xy)=\dist(x,y)$ (value is forced and cannot change),
  \item $\UNFIX$(\emph{unfit}): $\dist(x,y)\notin[\tau_\ell(xy),\tau_u(xy)]$ (must be changed in any solution),
  \item $\NORM$(\emph{normal}): $\dist(x,y)\in[\tau_\ell(xy),\tau_u(xy)]$ and $\tau_\ell(xy)\neq\tau_u(xy)$ (currently feasible but flexible).
\end{itemize}This classification separating \emph{forced} edits $\UNFIX$ from \emph{optional}
edits $\NORM$ is crucial for $\ell_0$-budget $k$ accounting. All of our rules
except \emph{one} do \emph{not} reduce $|X|$ directly. Instead, they
\emph{tighten thresholds} or \emph{force edges to become fixed}. Tightening
constraints may leave $k$ unchanged (since cost is incurred only when a
distance is actually modified), and we modify the instance
toward exposing an \emph{irrelevant} vertex. A vertex is irrelevant if it is not
incident to an edge from $\UNFIX$ and is not part of a bad triangle (a notion
defined below). As we prove, an irrelevant vertex can be safely removed, and
irrelevant-vertex removal is the only reduction rule that decreases the number
of points in the instance.

We define a triangle $T=xyz$ to be \emph{bad with respect to edge $xy$} if it falls into one of the following patterns:
\begin{enumerate}
  \item[(i)]  
  $xy,yz,xz\in \NORM\cup\FIXED$ and $xyz$ violates the ultrametric condition.
  \item[(ii)]  
  $xy\in \NORM\cup\FIXED\cup\UNFIX$ and both other edges are in $\UNFIX$.
  \item[(iii)] 
  $xy\in \UNFIX$, $yz,xz\in \NORM\cup\FIXED$, and either
  (iiiA) The other two sides are unequal, or
  (iiiB) Their values are equal, but the upper threshold on $xy$ exceeds these values.
\end{enumerate}

\smallskip
\noindent\emph{Intuition why we need bad triangles.} By a sequence of reduction rules and careful analysis, we obtain an equivalent instance in which the number of bad triangles that can ``hang'' on a single edge is $\cO(k)$.
If an edge participates in too many bad triangles, then either a reduction rule becomes applicable (for example, forcing the edge to be fixed or tightening its thresholds), or we can argue via the budget: each modification of an edge can resolve only a bounded number of bad triangles. 
Hence, if an edge were contained in more than $\cO(k)$ bad triangles, then resolving all of them would require more than $k$ modifications, contradicting the assumption that the instance is a yes-instance. 
Therefore, after exhaustive application of the reduction rules, every edge can be involved in only $\cO(k)$ bad triangles.

\ifshort
    \medskip
    \noindent\textsl{Creating and deleting an irrelevant vertex.}
    We define a vertex $u$ to be \emph{irrelevant} if (i) it is not incident to any $\UNFIX$ edge, and
    (ii) for every edge $xy$, the triangle $uxy$ is \emph{not} bad with respect to $xy$.
    
    \smallskip
    \noindent\emph{Intuition why it works.}
    If a vertex $u$ is not contained in any bad triangle and has no incident edges from $\UNFIX$, then it does not participate in any local obstruction that would force modifications in a solution.
    In this situation, the distances from $u$ induce a well-structured layering of the remaining vertices.
    Let $d_1<\cdots<d_\ell$ be the distinct values among $\{\dist(u,x):x\neq u\}$ and define
    $
    X_i=\{x\in X\setminus\{u\} : \dist(u,x)=d_i\}.
    $
    The absence of bad triangles imposes strong constraints on this structure:
    \begin{itemize}
      \item there are no $\UNFIX$-edges between different layers (otherwise we would obtain a bad triangle of type (iiiA)),
      \item for $x\in X_i$ and $y\in X_j$ with $i<j$, the distance $\dist(x,y)$ must be equal to $d_j$ (otherwise $uxy$ becomes a bad triangle),
      \item within each layer $X_i$, distances can be modified independently, and any solution can be adjusted so that all distances are at most $d_i$ without increasing the number of edits.
    \end{itemize}
    Thus, the instance decomposes into independent subinstances on the layers, together with fixed distances between layers.
    As a consequence, any solution on $X\setminus\{u\}$ can be extended to a solution on $X$ without increasing the budget.
    Hence, the vertex $u$ does not influence the existence of a solution and can be safely removed.
    
    \smallskip
    \noindent This observation leads to the following reduction rule.
    
\else

\medskip
\noindent\textsl{Creating and deleting an irrelevant vertex.}
We define a vertex $u$ to be \emph{irrelevant} if (i) it is not incident to any $\UNFIX$ edge, and
(ii) for every edge $xy$, the triangle $uxy$ is \emph{not} bad with respect to $xy$.
We then apply the \emph{irrelevant-vertex removal} rule: delete such a $u$.

\smallskip
\noindent\emph{Intuition why it works.}
If $u$ is not in any bad triangle and has no  incident edges of $\UNFIX$, then the distances from $u$ induce a clean ``layering'':
letting $d_1<\cdots<d_\ell$ be the distinct values among $\{\dist(u,x):x\neq u\}$ and
$X_i=\{x:\dist(u,x)=d_i\}$, the absence of bad triangles forces:
\begin{itemize}
  \item no $\UNFIX$-edges between different layers (otherwise $uxy$ would be bad of type (iiiA)),
  \item for $x\in X_i$, $y\in X_j$ with $i<j$, necessarily $\dist(x,y)=d_j$ (otherwise $uxy$ becomes bad),
  \item within each $X_i$, any solution can be ``truncated'' so internal distances never exceed $d_i$ without increasing the number of edits.
\end{itemize}
Because of that,  any solution on $X\setminus\{u\}$ can be extended to a solution on $X$ by solving each layer independently and stitching layers together using the forced cross-layer distances. Hence deleting $u$ preserves yes/no equivalence.

\fi

\ifshort

\begin{leftbar}
    \begin{dtredrule}[Irrelevant Vertex Removal] \label{DTRR:10}
        Let $u$ be a vertex  such that no \unfix edge is incident with it and there is no edge $xy$ such that $T = uxy$ is a bad triangle \wrt $xy$.          Delete $u$.  
    \end{dtredrule}
\end{leftbar}

\begin{claim}\label{claim:DTRR:10} 
    \Cref{DTRR:10} is sound.
\end{claim}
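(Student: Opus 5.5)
Soundness means the instance $(X,\dist,\treshold_\ell,\treshold_u,k)$ is a yes-instance if and only if the instance restricted to $X\setminus\{u\}$ (same $k$) is one.

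\emph{Forward direction.} A restriction of an ultrametric to a subset is an ultrametric. The thresholds on the remaining pairs are unchanged, and the number of modified pairs can only drop. So this direction is immediate.

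\emph{Backward direction.} This is the substance of the claim. Fix a solution $\dist'$ on $X\setminus\{u\}$ with at most $k$ edits. By \Cref{corollary:tight-solution} we may assume its values come from the input. Let $d_1<\dots<d_\ell$ be the distinct values $\dist(u,x)$ and let $X_i$ be the corresponding layers. Since $u$ has no \unfix edges, every $\dist(u,x)$ lies within its thresholds, so we plan to keep $\dist'(u,x)=\dist(u,x)$ for all $x$ at zero cost.

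We first derive the structural facts from the absence of bad triangles through $u$. Take $x\in X_i$ and $y\in X_j$ with $i<j$.
\begin{itemize}
\item If $xy\in\NORM\cup\FIXED$, then $uxy$ satisfies the ultrametric condition (else it is type (i)). As $d_i<d_j$, this forces $\dist(x,y)=d_j$.
\item If $xy\in\UNFIX$, then $ux,uy\in\NORM\cup\FIXED$ with unequal values, so $uxy$ is bad of type (iiiA). Hence there are no such edges.
\end{itemize}
So all cross-layer pairs already have distance $d_j$ and lie within their thresholds.

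For $x,y\in X_i$, the absence of type (i) gives $\dist(x,y)\le d_i$ when $xy\in\NORM\cup\FIXED$. If $xy\in\UNFIX$, the absence of type (iiiB) gives $\treshold_u(x,y)\le d_i$. In both cases, any feasible value of $xy$ that differs from $\dist(x,y)$ can be taken to be at most $d_i$ while staying feasible, provided the lower threshold permits. For \normal pairs, $\treshold_\ell\le\dist\le d_i$; for \unfix pairs, $\treshold_\ell\le\treshold_u\le d_i$. So $\min(\cdot,d_i)$ stays within thresholds.

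Now define the new solution $\dist''$:
\begin{itemize}
\item $\dist''(u,x)=\dist(u,x)$;
\item for $x\in X_i$, $y\in X_j$ with $i\ne j$, $\dist''(x,y)=d_{\max(i,j)}$, which equals the original value;
\item for $x,y\in X_i$, $\dist''(x,y)=\min(\dist'(x,y),d_i)$.
\end{itemize}

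\emph{Edit count.} We must check that $\dist''$ edits no more pairs than $\dist'$. Cross pairs and pairs at $u$ are unedited. Inside a layer, if $\dist'(x,y)=\dist(x,y)$ then $\dist(x,y)\le d_i$: for \normal or \fixed pairs this was shown above, and \unfix pairs are always edited. Hence truncation leaves such pairs unedited.

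\emph{Thresholds.} These hold by the observations above. One subtle point is that $\dist'$ on a \fixed pair equals $\dist\le d_i$, so truncation does not alter it.

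\emph{Ultrametricity.} Truncating an ultrametric at a constant, $\min(\dist',c)$, yields an ultrametric, since $\min(\cdot,c)$ is monotone and commutes with $\max$. So each layer is ultrametric under $\dist''$. For a triangle meeting two or three layers, or containing $u$, the distances follow the rule "distance $=d$ of the higher layer index," with $u$ acting as a layer of index $0$ below all others. This is exactly the ultrametric of a caterpillar dendrogram with the layers hanging off it. Together with intra-layer distances $\le d_i$, every such triangle is isosceles with the two larger sides equal; a short case check on layer indices confirms this.

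The main obstacle is the intra-layer step. We must ensure that truncation keeps every pair feasible and does not create new edits; this is where the \unfix case needs the type (iiiB) exclusion and the \fixed and \normal cases need type (i).
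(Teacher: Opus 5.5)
Your proof is correct and follows the same route as the paper. Both layer $X\setminus\{u\}$ by distance to $u$, use the excluded bad-triangle types (i), (iiiA), (iiiB) to force cross-layer distances $d_{\max(i,j)}$ and intra-layer values (or upper thresholds) at most $d_i$, truncate each layer's solution at $d_i$, and stitch it together with the original cross-layer and $u$-distances before a case check on layer indices. Your explicit check that truncation preserves thresholds and never creates a new edit, and your min--max identity replacing the truncation lemma, are slightly more careful than the paper's write-up but not a different argument.
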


\begin{proof}
    Suppose that in the instance $\Xcal=(X, \dist, \treshold_{\ell}, \treshold_{u})$ there exists a vertex $u\in X$ satisfying the conditions of the rule.
    
    Let $d_1 < d_2 < \cdots < d_\ell$ be the distinct values among $\{\dist(u,x) : x \in X\setminus\{u\}\}$, and partition $X\setminus\{u\}$ into $\ell$ sets $X_1, X_2, X_3, \dots, X_\ell$ based on their distances from $u$. Formally,
    \[
    X_i = \{x \in X \mid \dist(u,x) = d_i\}.
    \]
    
    Since $u$ is not contained in any bad triangle with respect to any edge $xy$, the following properties hold:
    \begin{enumerate}[label=(\roman*)]
        \item no edge between $X_i$ and $X_j$ for $i \neq j$ belongs to $\UNFIX$, as otherwise $uxy$ would be bad of type (iiiA); 
        \item for $x \in X_i$ and $y \in X_j$ with $i<j$, we must have $\dist(x,y)=d_j$, since otherwise $uxy$ would be bad;
        \item for $x,y \in X_i$, either $\dist(x,y)\le d_i$, or $xy\in\UNFIX$ with $\tau_u(x,y)\le d_i$, as otherwise $uxy$ would be bad. 
    \end{enumerate}
    
    We prove that $(\Xcal,k)$ is a \yesinstance if and only if $(\Xcal'=(X\setminus\{u\},\dist,\treshold_\ell,\treshold_u),k)$ is a \yesinstance. The forward direction is immediate.
    
    For the reverse direction, suppose $(\Xcal',k)$ is a \yesinstance. Then there exist integers $k_1,\dots,k_\ell$ with $\sum_{i = 1}^{\ell} k_i \le k$ such that each induced instance on $X_i$ can be transformed into an ultrametric $(X_i,\dist_i)$ using at most $k_i$ modifications.
    

    We may assume that each $(X_i,\dist_i)$ satisfies $\max_{x,y\in X_i} \dist_i(x,y) \le d_i$. 
    Indeed, if some edge $xy$ has $\dist_i(x,y) > d_i$, then $xy$ must belong to $\NORM$. In particular, its original value satisfied $\dist(x,y) \le d_i$, so reaching a value above $d_i$ means that $xy$ was already modified. 
    Decreasing $\dist_i(x,y)$ back to $d_i$ preserves feasibility and does not increase the number of modified edges, as $xy$ remains a modified edge. 
    Repeating this for all such edges yields the claim.
    
    We now construct a distance function $\dist'$ on $X$ as follows:
    \begin{itemize}
        \item $\dist'(u,x) = \dist(u,x)$ for all $x \in X$;
        \item $\dist'(x,y) = \dist(x,y)$ for $x \in X_i$, $y \in X_j$, $i \neq j$;
        \item $\dist'(x,y) = \dist_i(x,y)$ for $x,y \in X_i$.
    \end{itemize}
    Clearly, $\dist'$ differs from $\dist$ in at most $\sum_{i=1}^{\ell} k_i \le k$ entries.
    
    It remains to verify that $(X,\dist')$ is an ultrametric. We show this by showing every triangle satisfies ultrametric condition. Let $x,y,z \in X$:
    \begin{itemize}
        \item  $x,y,z\in X_i$ for some $i$.  Since  $(X_i, \dist_i)$ is ultrametric,  $x,y,z$ satisfy the ultrametric condition in $\dist'$ too.
        \item  $x\in X_i$, $y\in X_j$, $z\in X_k$, $i<j<k$. Since we did not change the distances between these vertices, we have $\dist'(x,y) = d_j$ and $\dist'(x,z)= \dist'(y,z) = d_k > d_j$ satisfying the ultrametric conditions. 
        \item  $x,y\in X_i$, $z\in X_j$, $i<j$ or $i>j$. If $i>j$ then $\dist'(x,z) = \dist'(y,z) = d_i$ and $\dist'(x,y) \leq d_i$, satisfying the ultrametric condition.   If $i<j$ then $\dist'(x,z) = \dist'(y,z) = d_j$ and $\dist'(x,y) \leq d_i < d_j$, satisfying the ultrametric condition. In either case, we do not violate the ultrametric condition.      
        \item  $x \in X_i$, $y\in X_j$, $i<j$, and $z=u$. We have $\dist'(x,u) = d_i$ and $\dist'(y,u) = d_j$ Since we did not modify $xy$, we have that $\dist'(x,y) = d_j$ and $d_i < d_j$, this triangle satisfies the ultrametric condition.
        \item  $x,y \in X_i$, $z=u$. Since $\dist'_i \leq d_i$, and $\dist'(x,u) = \dist'(y,u) = d_i$, the triangle $xyu$ satisfy the ultrametric condition in $\dist'$.
    \end{itemize}
    
    Thus, $(X,\dist')$ is an ultrametric obtained using at most $k$ modifications, completing the proof.   
\end{proof}

\else

\fi

\medskip
\noindent\textsl{Bounding the number of vertices.}
Once irrelevant vertices are exhaustively removed, {every remaining vertex must participate in some bad triangle.}
Let $A$ be the (unknown) set of at most $k$ edges whose distances are modified by an optimal solution.
By  reduction rules,  each edge in $A$ can be charged with only $O(k)$ bad triangles, and each bad triangle contributes only a constant number of vertices beyond its charged edge.
A counting argument yields
$
|X|\le 3k^2+2k,
$. 
This gives a  reduced instance with $O(k^2)$ points.

\medskip
\noindent\textsl{Order-preserving scaling.}
Even after obtaining an equivalent instance with $|X| = \cO(k^2)$, the distances
and thresholds may still be numerically large. By exploiting properties of
ultrametrics and the bound $|X| = \cO(k^2)$ on the number of points in the reduced
instance, it is possible to use order-preserving scaling to construct an equivalent instance in which all
distances and thresholds are bounded by $\cO(k^4)$. This yields a  kernel of size $\cO(k^4 \log k)$.

\medskip

\medskip
\noindent\textbf{Proof of \Cref{theorem:fpt-cuvd}.}
We now sketch the ideas behind a single-exponential FPT algorithm for \Cuvd{} running in
$\Ocal(9^k\cdot k\cdot |X|^2)$ (\Cref{theorem:fpt-cuvd}).
The algorithm follows a \emph{divisive, top--down} paradigm that recovers an ultrametric by progressively enforcing the nested cluster structure that ultrametrics induce.

\medskip
\noindent\textsl{Ultrametrics as a hierarchy of cluster graphs.}
Fix a distance value $\lambda$. In an ultrametric $(X,\dist')$, the graph on $X$ with edges
$\{xy:\dist'(x,y)<\lambda\}$ is always a \emph{cluster graph} (a disjoint union of cliques).
Intuitively, points at distance $<\lambda$ must form ``clusters'' in which all pairs are close; otherwise an induced $P_3$ would witness a triangle where two sides are $<\lambda$ but the third is $\ge \lambda$, contradicting ultrametricity.
This observation suggests a strategy: process distance thresholds from large to small, and at each threshold enforce that the corresponding proximity graph becomes a cluster graph.

It is always possible (\Cref{corollary:tight-solution}) to restrict attention to solutions whose distances take values only from the finite set  $d_1<\cdots<d_L$ (with $L=\Ocal(|X|^2)$). We process levels from $L$ down to $1$.

At each level $i$, we consider the auxiliary graph $G_i(\dist,M)$ whose edges represent pairs that are currently
 {strictly below} $d_i$, together with a set $M$ of  {marked edges}.
In a target solution $\dist'$, marked edges are precisely those whose values we have committed to  {change} but whose final level is not yet decided.
If the instance is a \yesinstance{}, then for each level $i$ the graph $G_i(\dist',\emptyset)$ must be a cluster graph.
Hence, at level $i$ we can branch exactly as in \textsc{Cluster Editing}: whenever $G_i$ contains an induced $P_3$, at least one of its three pairs must be edited (turned into a non-edge or an edge) in the proximity graph.

\medskip
\noindent\emph{Main challenge: decreasing an edge may happen multiple times.}
In standard \textsc{Cluster Editing}, an edit is final: inserting or deleting an edge resolves the local obstruction permanently.
Here, an edit in $G_i$ corresponds to changing a \emph{distance}, and crucially:
\begin{itemize}
  \item \emph{Increasing} $\dist(x,y)$ to $d_i$ is final at level $i$ (we separate the pair at this scale).
  \item \emph{Decreasing} $\dist(x,y)$ below $d_i$ is \emph{not} final: we do not yet know whether the correct target is $d_{i-1}$, or $d_{i-2}$, etc.
\end{itemize}
This is the key complication compared to ordinary \textsc{Cluster Editing} and is the reason the algorithm introduces  {marked edges}.
Marking is an intermediate commitment: ``this pair must get smaller than $d_i$, but we postpone deciding how small until we reach lower levels.''

\medskip
\noindent\textsl{Marked edges and violated triangles.}
At level $i$, we partition edges (pairs of vertices)  into:
\[
D_i=\{xy:\dist(x,y)=d_i\},\qquad D_{<i}=\{xy:\dist(x,y)<d_i\},\qquad M=\text{marked}.
\]
A \emph{violated triangle at level $i$} is a triple $xyz$ with  {exactly one} edge in $D_i$ and the other two in $D_{<i}\cup M$.
Such a triple is precisely an induced $P_3$ in the auxiliary graph $G_i$: two pairs are ``close'' at scale $d_i$ (edges of $G_i$), but the third is not.
Fixing violated triangles is how we enforce the cluster-graph property level-by-level.

\medskip
\noindent\textsl{Structural lemma enabling the divisive recursion.}
A crucial invariant is:
\begin{quote}
If there are no violated triangles at levels $j\in\{i,\ldots,L\}$, then $G_i$ is a cluster graph.
Moreover, any violated triangle at a lower level $\ell<i$ must lie entirely within a single clique of $G_i$.
\end{quote}
This lemma formalizes the divisive paradigm: once $G_i$ is a cluster graph, its cliques define a partition of $X$, and all remaining ``finer-scale'' inconsistencies happen {inside} these cliques.
Thus we can safely descend from level $i$ to $i-1$ and continue the search without needing to reconsider interactions between different cliques.

\medskip
\noindent\textsl{Branching and measure.}
Given a violated triangle $xyz$ at level $i$ (equivalently, a $P_3$ in $G_i$), at least one of the three pairs must change its status at this level.
The algorithm branches into   cases, choosing which pair to ``edit'':
All branches additionally respect thresholds.

A second core challenge is to obtain a clean single-exponential bound despite the fact that a distance can be decreased multiple times across levels.
To control this, we use a \emph{measure} $\mu$ that charges:
\begin{itemize}
  \item $2$ units for a final increase ($D_{<i}\to D_i$),
  \item $1$ unit for marking ($D_i\to M$),
  \item $1$ unit for finalizing a marked edge ($M\to D_i$).
\end{itemize}
Thus, changing the value of one pair from its original distance to its final chosen distance costs at most $2$ units of $\mu$.
Initializing $\mu=2k$ (and accounting for initially forced changes) ensures every branch decreases $\mu$ by at least $1$.
Hence the search tree has at most $3^\mu\le 3^{2k}=9^k$ leaves.

\fi

\subsection{Related Work}

\subparagraph{Cluster Editing.}
The special variant of \textsc{Ultrametric Violation Distance} when the values of the entries of the input distance matrix $D$ are in $\{1,2\}$ is the well-studied  
\textsc{Correlation Clustering}, also known as \textsc{Clustering with Qualitative Information},  or \textsc{Cluster Editing}. In this problem one has to cluster a set of objects based only on 
qualitative information concerning similarity between pairs of items. For every pair of objects, we have an indication if the objects are similar (+) or not (-). The task is to find a partition of the objects into clusters minimizing the amount of similarities between different clusters and non-similarities   inside of clusters.   The problem was introduced by Ben-Dor, Shamir,  and   Yakhini \cite{Ben-DorSY99} motivated by some problems from computational biology, and, independently, by Bansal,  Blum,  and Chawla \cite{Bansal04}, motivated by machine learning problems concerning document clustering according to similarities. 
The problem is known to be NP-complete \cite{ShamirST04}. Moreover, it was   shown by Charikar, Guruswami, and Wirth   \cite{CharikarGW03} that the problem is APX-hard.
It was intensively studied in approximation algorithms community, see e.g. \cite{AilonCN08,AlonMMN05,AroraBKSH05,CharikarGW05j,CharikarW04,GiotisG06t,ShamirST04}. 
Recent advances on this problem include \cite{cohen-addad2022correlation,cohen-addad2024combinatorial,cao2024understanding,10353161,10.1145/3717823.3718181}.

The graph-theoretic formulation of \textsc{Correlation Clustering} is known as \textsc{Cluster Editing}, which is the name commonly used in the parameterized algorithms community.  
The parameterized version of \textsc{Cluster Editing}, along with its variants, has been studied extensively~\cite{BockerD11,BodlaenderFHMPR10,Damaschke10,FellowsGKNU11,FominKPPV14,GrammGHN05,GuoKKU11,GuoKNU10,komusiewicz:sofsem,ProttiSS09}.  
The problem is solvable in time $\mathcal{O}(1.62^k + n + m)$~\cite{bocker2012golden}, and it admits a kernel with $2k$ vertices~\cite{CaoC10,ChenM10}.  

Moreover, the problem does not admit a subexponential-time algorithm of the form $2^{o(k)} \cdot |X|^{\mathcal{O}(1)}$ unless the Exponential Time Hypothesis (ETH) fails~\cite{FominKPPV14,komusiewicz2012cluster}. Since \textsc{Cluster Editing} is a special case of the \textsc{Ultrametric Violation Distance} problem, this lower bound naturally carries over to the latter as well.

\subparagraph{Fitting Distance by Ultrametrics.}
In the numerical taxonomy problem, we are given a matrix $D$ of measured pairwise distances $\dist$
for a set of elements $X$, and the goal is to produce a tree metric or an ultrametric with matrix $D'$  with distance function $\dist'$ that spans \( X \) and minimizes the \( \ell_p \)-norm
\[
\|D - D'\|_p := \left( \sum_{\{i,j\} \in \binom{X}{2}} |\dist(i,j) - \dist'(i,j)|^p \right)^{1/p},
\]
where \( p \) is a   constant with \( 1 \leq p \leq \infty \).

Since its introduction in 1967 by Cavalli-Sforza and Edwards~\cite{cavalli1967phylogenetic},   the numerical taxonomy problem has collected an extensive literature. While this problem was initially introduced in the \( \ell_2 \)-norm, Farris suggested using the \( \ell_1 \)-norm in 1972 \cite{farris1972estimating}.  
For the \( \ell_\infty \)-norm, it is known that an optimal ultrametric can be computed in time proportional to the number of input distance pairs \cite{farach1993robust}. Approximation algorithms for fitting tree metrics with the \( \ell_\infty \)-norm are studied in \cite{agarwala1999approximability}.
For constant \( p \) with \( 1 \leq p < \infty \), the developments have been slower and remain much less understood to date. 
Among them, the \( \ell_1 \)-norm in particular has been extensively studied, and a constant-factor approximation was given~\cite{10.1145/3639453}.

For \( 1 < p < \infty \), an \( O(\log n \log \log n)^{\frac{1}{p}} \) approximation ratio remains the best known \cite{ailon2011fitting}.

In parameterized complexity, the only norm studied prior to our work was the $\ell_1$-norm 
(the problem was referred to there as \textsc{Hierarchical Clustering}). 
Hartung, Guo, Komusiewicz, Niedermeier, and Uhlmann~\cite{HartungGKNU10}, 
under the assumption that the entries of $D$ are positive integers, proved that the problem 
admits a kernel with $\mathcal{O}(k^2)$ points, where 
$
  k = \|D - D'\|_1.$
They also gave a kernel with $2k \cdot (M + 2)$ elements, where $M$ is the maximum distance in $\dist$. 
Later, Cao and Chen~\cite{CaoChen2013hierarchical} improved both the running time of the algorithm 
and the size of the kernel.

As noted by Charikar and Gao~\cite{charikar2024improved}, given the long history of the study 
of this family of ultrametric fitting problems, it is somewhat surprising that the natural problem 
of minimizing the $\ell_0$-norm of the error (under the name \textsc{Ultrametric Violation Distance}) 
was introduced and studied only very recently by Cohen-Addad, Fan, Lee, and Mesmay~\cite{CohenAddadFLM25} 
(the conference version of their result appeared in~\cite{cohen2022fitting}). 
The work of Cohen-Addad et al.\ sparked significant interest in approximation algorithms 
for \textsc{Ultrametric Violation Distance}~\cite{bathie2025,charikar2024improved,carmel2025fitting,11369173,DasKipouridisSpoerhase2026}. 

More generally, \textsc{Ultrametric Violation Distance} lies at the intersection of two well-studied classes of problems: 
\textsc{Hierarchical Clustering}~\cite{ward1963hierarchical,murtagh2012algorithms} and 
\textsc{Metric Violation Distance}~\cite{FanRB18,gilbert2017sparse,bentert2025distances,fan_et_al:LIPIcs.SWAT.2020.25},
where one is allowed to change at most $k$ distances in order to transform a distance space into a metric.

Fan, Gilbert, Raichel, Sonthalia, and Van~Buskirk~\cite{fan_et_al:LIPIcs.SWAT.2020.25} show that \textsc{Metric Violation Distance} can be solved in time $k^{\mathcal{O}(k)} n^{\mathcal{O}(1)}$, where $k$ is the number of required modifications. We note that their algorithm applies to a more general setting. 
Since \uvd can be reduced to an instance of \textsc{Metric Violation Distance} by tropicalizing the distances via the map $d \mapsto n^d$; see, for example, \cite[Corollary~5.16]{CohenAddadFLM25}, the algorithm of Fan et al.\ can be used to solve \uvd in time $k^{\mathcal{O}(k)} n^{\mathcal{O}(1)}$.\footnote{We are grateful to a referee for pointing out this connection.} 
The result of Fan, Gilbert, Raichel, Sonthalia, and Van~Buskirk is not directly comparable to our work. While they establish fixed-parameter tractability for a more general problem, none of our algorithmic contributions, specifically, the polynomial kernelization and the single-exponential-time algorithm, could be derived from their work.

\ifshort

\section{Outline of the Proofs}
We now outline the key ideas and techniques used in the proofs of our main results.

As we already mentioned, \Cuvd generalizes \textsc{Cluster Editing}. 
A fixed-parameter tractable (\classFPT) algorithm (with parameter $k$) for \textsc{Cluster Editing} can be obtained by a simple hitting set branching on all $P_3$s~\cite{cygan2015parameterized}.  
It might seem that for \uvd, a similar simple hitting set problem could be formulated for a family of violated triples, specifically, the triples of distances that do not satisfy the ultrametric inequality. This would suggest the possibility of an easy branching algorithm to establish that the problem is in \classFPT.
However, even if one identifies a set of edges that hits all the violated triples, there may be no way to modify the distances on these edges without creating new violations, as shown in \Cref{fig:ultrametric-eg}.
This distinction significantly increases the difficulty of \uvd compared to a hitting set problem on bounded-size cycles.

\begin{figure}[h] 
    \centering
    \includegraphics[width=0.4\textwidth]{Figures/ultrametric-eg.png} 
    \caption{In the figure $abd, acd, bcd$ are the violated triples. 
    However, eliminating these triples by assigning new distances to $ab$ and $cd$  creates new violated triples.}
    \label{fig:ultrametric-eg}
\end{figure}

This difference from the classic \textsc{$d$-Hitting Set} problem means that the
powerful toolbox developed for fast parameterized algorithms and
kernelization, see e.g. \cite{cygan2015parameterized,fomin2019kernelization},  cannot be directly applied to \Cuvd. 

    \ifshort
    We next establish a key structural property that will be used repeatedly in both the kernelization and the FPT algorithm. 
    At a high level, it allows us to restrict attention to solutions whose distances are drawn from a finite set determined by the input. 
    This property can be derived from the algorithm of Farach, Kannan, and Warnow~\cite[Theorem~4]{farach1995robust} for the more general \textsc{Ultrametric Graph Sandwich} problem. 
    For self-containment, we provide a simpler proof of \Cref{lemma:poly-if-known-edges} in the full version.

    \begin{proposition}
        \label{lemma:poly-if-known-edges}
        There is an algorithm which when given an instance of \Cuvd $(\Xcal=(X, \dist, \treshold_{\ell}, \treshold_{u}), k)$ with corresponding distance matrix $D$ and set of edges $|A| \leq k$ will decide whether one can convert $D$ into an ultrametric $D'$ by changing only the edges in $A$ in polynomial time such that $D'$ only uses values from $D \cup \Treshold_{u} \cup \Treshold_{\ell}$. 
    \end{proposition}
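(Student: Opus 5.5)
The plan is a greedy ``decrease-only'' procedure that starts every edge of $A$ at its upper threshold and lowers values only when forced. Throughout, the set of edges in $A$ stays fixed, edges outside $A$ keep their original value $\dist(x,y)$, and we maintain the following invariant:
\begin{quote}
every edge $xy\in A$ currently carries a value $c(x,y)$ such that any ultrametric $D'$ that agrees with $D$ outside $A$ and respects the thresholds satisfies $\dist'(x,y)\le c(x,y)$.
\end{quote}
We initialise $c(x,y):=\treshold_u(x,y)$ for all $xy\in A$. The invariant then holds trivially. If some edge outside $A$ violates its own thresholds, we reject immediately.

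The update step works on a triangle $xyz$ that violates the ultrametric condition under the current values. Without loss of generality $\dist(x,y)>\dist(y,z)\ge\dist(x,z)$. The two shorter sides cannot be increased in any feasible completion. If they lie outside $A$ they are fixed. If they lie in $A$, the invariant says they only ever move down. So in every feasible $D'$ we get $\dist'(x,y)\le\max\{\dist'(y,z),\dist'(x,z)\}\le \dist(y,z)$.

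If $xy\notin A$, this contradicts $\dist'(x,y)=\dist(x,y)$, and we correctly answer \no. If $xy\in A$, we lower $c(x,y)$ to $\dist(y,z)$, which preserves the invariant. If this new value drops below $\treshold_\ell(x,y)$, we answer \no. All values assigned are copies of existing values, so they always stay in $D\cup\Treshold_u\cup\Treshold_\ell$.

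Termination and running time follow from a potential argument. Every update strictly decreases some $c(x,y)$ to another value of the finite set $D\cup\Treshold_u\cup\Treshold_\ell$, which has $\Ocal(|X|^2)$ elements. Hence each edge of $A$ is updated $\Ocal(|X|^2)$ times, giving $\Ocal(k|X|^2)$ iterations in total. Each iteration finds a violated triangle (or certifies there is none) in polynomial time, e.g.\ $\Ocal(|X|^3)$ by brute force or $\Ocal(|X|^2)$ via \cite{KANNAN199626}.

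For correctness, when the procedure stops without rejecting, the current matrix is an ultrametric. It differs from $D$ only on $A$ and satisfies all thresholds: the lower bounds were checked, and the upper bounds hold since we started at $\treshold_u$ and only decreased. So we answer \yes with a valid witness. Conversely, every \no answer was derived from the invariant, so no feasible $D'$ exists, even one using arbitrary real values.

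The main obstacle is the soundness of the invariant when a shorter side $yz$ lies in $A$. I will handle it by noting that we bound $\dist'(x,y)$ by the current $c(y,z)$, which is itself an upper bound on $\dist'(y,z)$ by the invariant. The bound thus composes correctly, and the argument never has to guess values from below.
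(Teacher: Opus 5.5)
Your proposal is correct and follows essentially the same route as the paper. Both start every edge of $A$ at $\tau_u$, repeatedly lower the longest side of a violated triangle to the second-longest value (rejecting if that side lies outside $A$), and bound the number of updates by the size of the value set. Your explicit invariant, that $c(x,y)$ bounds $\dist'(x,y)$ in every feasible completion, is exactly what justifies the paper's step of overwriting $\tau_u(x,y)$.

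You are also more careful than the paper about thresholds, since you check edges outside $A$ and lower bounds. You should, however, test $c(x,y)\ge\tau_\ell(x,y)$ at initialisation as well. Otherwise an edge of $A$ with $\tau_u<\tau_\ell$ that is never lowered would slip through.
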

 
    \Cref{lemma:poly-if-known-edges} implies the following property of an optimal solution.

    \begin{corollary}
        \label{corollary:tight-solution}
        If the given instance $(\Xcal=(X, \dist, \treshold_{\ell}, \treshold_{u}), k)$ with corresponding distance matrix $D$ of \Cuvd is a \yesinstance then there exists a solution ultrametric space $ \Xcal' = (X, \dist')$ on the same set of points $X$ with distance matrix $D'$ such that $\|D - D'\|_0 \leq 2k$, satisfying the threshold conditions, and
        each entry of $D'$ is from the set $D \cup \Treshold_{u} \cup \Treshold_{\ell}$.    
    \end{corollary}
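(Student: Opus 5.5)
The plan is to derive the statement from \Cref{lemma:poly-if-known-edges} applied to the edit set of an arbitrary solution. Since the instance is a \yesinstance, there is an ultrametric $\Xcal''=(X,\dist'')$ with matrix $D''$ that respects the thresholds and satisfies $\|D-D''\|_0\le 2k$. Let $A$ be the set of unordered pairs $xy$ with $\dist''(x,y)\neq \dist(x,y)$; then $|A|\le k$. The key point is that the decision problem handled by \Cref{lemma:poly-if-known-edges}, namely whether $D$ can be turned into an ultrametric respecting the thresholds by changing only the edges in $A$, has answer \emph{yes} on input $A$, because $D''$ witnesses it.

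I would then run the procedure from the proof of \Cref{lemma:poly-if-known-edges} on $A$. It sets every edge of $A$ to its upper threshold and repeatedly lowers the largest side of a violated triangle down to the middle value. The substantive step is soundness: the invariant that every feasible completion on $A$ (in particular $D''$) has $\dist''(x,y)\le$ the current value on each $xy\in A$. This holds initially because $\dist''\le\treshold_u$. It is preserved because, for a violated triangle $xyz$ with $xy$ largest, the other two sides are either unchangeable or already at their current upper bounds. By induction $\dist''(y,z),\dist''(x,z)$ are bounded by the current values, so ultrametricity of $D''$ forces $\dist''(x,y)\le \max\{\dist(y,z),\dist(x,z)\}$.

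Consequently the procedure never reports failure. Failure would require a violated triangle whose largest side is not in $A$, but that side is the same in $D''$, and by the invariant the other two sides of $D''$ are at most their current values, which would contradict ultrametricity of $D''$. The procedure also terminates, since values only decrease through a finite set. So it returns an ultrametric $D'$ that agrees with $D$ outside $A$, giving $\|D-D'\|_0\le 2|A|\le 2k$.

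Two checks remain. First, the lower thresholds: on $A$ we have $\dist'\ge\dist''\ge\treshold_\ell$, and outside $A$ the values equal those of $D''$, so they are feasible. Second, every value of $D'$ lies in $D\cup\Treshold_u\cup\Treshold_\ell$, because new values are either upper thresholds or copies of existing entries. I expect the main obstacle to be stating the domination invariant precisely, since the proposition's proof only says decreasing is "safe". Making explicit that each lowering step stays above every feasible solution, including $D''$, is what turns the algorithmic statement into the existence claim.
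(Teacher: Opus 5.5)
Your proposal is correct and follows the paper's route. The paper's proof just takes the edit set $A$ of some solution and invokes \Cref{lemma:poly-if-known-edges}. Your explicit domination invariant ($\dist''\le$ current value on every edge of $A$) is exactly what justifies the ``safe to decrease'' step in the proposition's proof. It also gives the lower-threshold feasibility $\dist'\ge\dist''\ge\treshold_\ell$, which the paper leaves implicit.
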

    
    \begin{proof}
        Since $\Xcal$ is a \yesinstance, there is a set $A$ of at most $k$ edges whose weight changes transform $\Xcal$ into an ultrametric.  We apply  \Cref{lemma:poly-if-known-edges} to this set $A$.
    \end{proof} 

    The corollary guarantees the existence of an optimal solution whose distances come from a finite set of candidate values derived from the input. 
    This discretization is key for both the kernelization and the algorithm, as it allows us to avoid arbitrary real values and instead work with a bounded search space.

    \else

    \fi

\subsection{Proof of \Cref{theorem:kernelconstrultr}}
We by outlining the proof strategy behind the kernelization algorithm in  \Cref{theorem:kernelconstrultr} and highlighting the main intuitions.

For a given instance $\bigl((X,\dist,\tau_\ell,\tau_u),k\bigr)$ of \Cuvd, 
we often use
graph-theoretic terminology, referring to points of $X$ as \emph{vertices} and
pairs of points as \emph{edges}.

The proof proceeds 
  in three  phases:
\begin{enumerate}
  \item \emph{Normalize and bound local obstructions} by a sequence of reduction rules that detect immediate \noinstances and enforce that each edge participates in only $O(k)$ ``critical'' triangle configurations.
  \item \emph{Extract an irrelevant vertex} and delete it; iterating this yields a bound $|X|=O(k^2)$.
  \item \emph{Compress the numeric encoding} (distances and thresholds)  to obtain a polynomial kernel.
\end{enumerate}

\medskip
\noindent\textsl{``Bad triangles'': notion of obstruction.}
A triangle violates the ultrametric inequality if 
\[
    \dist(x,z) > \max\{\dist(x,y),\dist(y,z)\}\qquad\text{for all }x,y,z\in X,
\]
for some ordering of its vertices.
Equivalently, a triangle satisfies the ultrametric inequality if and only if it is either equilateral or isosceles with the two larger sides equal. 
While a violating triangle is a local witness of non-ultrametricity, for kernelization we need a refined notion of a \emph{bad triangle}, which not only captures such violations but also accounts for thresholds and forced edits.

We partition edges into three classes:
\begin{itemize}
  \item $\FIXED$(\emph{fixed}): $\tau_\ell(xy)=\tau_u(xy)=\dist(x,y)$ (value is forced and cannot change),
  \item $\UNFIX$(\emph{unfit}): $\dist(x,y)\notin[\tau_\ell(xy),\tau_u(xy)]$ (must be changed in any solution),
  \item $\NORM$(\emph{normal}): $\dist(x,y)\in[\tau_\ell(xy),\tau_u(xy)]$ and $\tau_\ell(xy)\neq\tau_u(xy)$ (currently feasible but flexible).
\end{itemize}This classification separating \emph{forced} edits $\UNFIX$ from \emph{optional}
edits $\NORM$ is crucial for $\ell_0$-budget $k$ accounting. All of our rules
except \emph{one} do \emph{not} reduce $|X|$ directly. Instead, they
\emph{tighten thresholds} or \emph{force edges to become fixed}. Tightening
constraints may leave $k$ unchanged (since cost is incurred only when a
distance is actually modified), and we modify the instance
toward exposing an \emph{irrelevant} vertex. A vertex is irrelevant if it is not
incident to an edge from $\UNFIX$ and is not part of a bad triangle (a notion
defined below). As we prove, an irrelevant vertex can be safely removed, and
irrelevant-vertex removal is the only reduction rule that decreases the number
of points in the instance.

We define a triangle $T=xyz$ to be \emph{bad with respect to edge $xy$} if it falls into one of the following patterns:
\begin{enumerate}
  \item[(i)]  
  $xy,yz,xz\in \NORM\cup\FIXED$ and $xyz$ violates the ultrametric condition.
  \item[(ii)]  
  $xy\in \NORM\cup\FIXED\cup\UNFIX$ and both other edges are in $\UNFIX$.
  \item[(iii)] 
  $xy\in \UNFIX$, $yz,xz\in \NORM\cup\FIXED$, and either
  (iiiA) The other two sides are unequal, or
  (iiiB) Their values are equal, but the upper threshold on $xy$ exceeds these values.
\end{enumerate}

\smallskip
\noindent\emph{Intuition why we need bad triangles.} By a sequence of reduction rules and careful analysis, we obtain an equivalent instance in which the number of bad triangles that can ``hang'' on a single edge is $\cO(k)$.
If an edge participates in too many bad triangles, then either a reduction rule becomes applicable (for example, forcing the edge to be fixed or tightening its thresholds), or we can argue via the budget: each modification of an edge can resolve only a bounded number of bad triangles. 
Hence, if an edge were contained in more than $\cO(k)$ bad triangles, then resolving all of them would require more than $k$ modifications, contradicting the assumption that the instance is a yes-instance. 
Therefore, after exhaustive application of the reduction rules, every edge can be involved in only $\cO(k)$ bad triangles.

\ifshort
    \medskip
    \noindent\textsl{Creating and deleting an irrelevant vertex.}
    We define a vertex $u$ to be \emph{irrelevant} if (i) it is not incident to any $\UNFIX$ edge, and
    (ii) for every edge $xy$, the triangle $uxy$ is \emph{not} bad with respect to $xy$.
    
    \smallskip
    \noindent\emph{Intuition why it works.}
    If a vertex $u$ is not contained in any bad triangle and has no incident edges from $\UNFIX$, then it does not participate in any local obstruction that would force modifications in a solution.
    In this situation, the distances from $u$ induce a well-structured layering of the remaining vertices.
    Let $d_1<\cdots<d_\ell$ be the distinct values among $\{\dist(u,x):x\neq u\}$ and define
    $
        X_i=\{x\in X\setminus\{u\} : \dist(u,x)=d_i\}.
    $
    The absence of bad triangles imposes strong constraints on this structure:
    \begin{itemize}
      \item there are no $\UNFIX$-edges between different layers (otherwise we would obtain a bad triangle of type (iiiA)),
      \item for $x\in X_i$ and $y\in X_j$ with $i<j$, the distance $\dist(x,y)$ must be equal to $d_j$ (otherwise $uxy$ becomes a bad triangle),
      \item within each layer $X_i$, distances can be modified independently, and any solution can be adjusted so that all distances are at most $d_i$ without increasing the number of edits.
    \end{itemize}
    Thus, the instance decomposes into independent subinstances on the layers, together with fixed distances between layers.
    As a consequence, any solution on $X\setminus\{u\}$ can be extended to a solution on $X$ without increasing the budget.
    Hence, the vertex $u$ does not influence the existence of a solution and can be safely removed.
    
    \smallskip
    \noindent This observation leads to the following reduction rule.
    
\else

\medskip
\noindent\textsl{Creating and deleting an irrelevant vertex.}
We define a vertex $u$ to be \emph{irrelevant} if (i) it is not incident to any $\UNFIX$ edge, and
(ii) for every edge $xy$, the triangle $uxy$ is \emph{not} bad with respect to $xy$.
We then apply the \emph{irrelevant-vertex removal} rule: delete such a $u$.

\smallskip
\noindent\emph{Intuition why it works.}
If $u$ is not in any bad triangle and has no  incident edges of $\UNFIX$, then the distances from $u$ induce a clean ``layering'':
letting $d_1<\cdots<d_\ell$ be the distinct values among $\{\dist(u,x):x\neq u\}$ and
$X_i=\{x:\dist(u,x)=d_i\}$, the absence of bad triangles forces:
\begin{itemize}
  \item no $\UNFIX$-edges between different layers (otherwise $uxy$ would be bad of type (iiiA)),
  \item for $x\in X_i$, $y\in X_j$ with $i<j$, necessarily $\dist(x,y)=d_j$ (otherwise $uxy$ becomes bad),
  \item within each $X_i$, any solution can be ``truncated'' so internal distances never exceed $d_i$ without increasing the number of edits.
\end{itemize}
Because of that,  any solution on $X\setminus\{u\}$ can be extended to a solution on $X$ by solving each layer independently and stitching layers together using the forced cross-layer distances. Hence deleting $u$ preserves yes/no equivalence.

\fi

\ifshort

\begingroup
\renewcommand{\thedtredrule}{}
\begin{dtredrule}[Irrelevant Vertex Removal]\label{DTRR:10}
    Let $u$ be a vertex such that no \unfix edge is incident with it and there is no edge $xy$ such that $T = uxy$ is a bad triangle \wrt $xy$.  
    Delete $u$.  
\end{dtredrule}
\endgroup

\begin{claim}\label{claim:DTRR:10} 
    Irrelevant Vertex Removal \Cref{DTRR:10} is sound.
\end{claim}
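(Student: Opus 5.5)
We plan to show that $(\Xcal,k)$ is a \yesinstance if and only if $(\Xcal'=(X\setminus\{u\},\dist,\treshold_\ell,\treshold_u),k)$ is one. The forward direction is immediate: restricting any solution on $X$ to $X\setminus\{u\}$ gives an ultrametric that respects the thresholds and uses no more edits. All the work is in the reverse direction, where we extend a solution on $X\setminus\{u\}$ back to $X$ without modifying any edge incident to $u$.

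\textbf{Structure around $u$.} Let $d_1<\dots<d_\ell$ be the distinct values of $\dist(u,x)$ and let $X_i=\{x:\dist(u,x)=d_i\}$. All edges $ux$ lie in $\NORM\cup\FIXED$, and no triangle $uxy$ is bad with respect to $xy$. From this we read off three facts.
\begin{enumerate}
\item Cross-layer edges are not in $\UNFIX$. If $x\in X_i$, $y\in X_j$ with $i\ne j$ and $xy\in\UNFIX$, then $uxy$ is bad of type (iiiA), since the sides $ux$ and $uy$ are unequal.
\item Cross-layer distances are forced: $\dist(x,y)=d_j$ for $i<j$. Here $uxy$ lies entirely in $\NORM\cup\FIXED$, so otherwise it would violate the ultrametric condition and be bad of type (i).
\item Inside a layer, for $x,y\in X_i$ we have either $\dist(x,y)\le d_i$ with $xy\in\NORM\cup\FIXED$, or $xy\in\UNFIX$ with $\tau_u(x,y)\le d_i$. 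The first case follows from type (i); the second from type (iiiB) being excluded.
\end{enumerate}

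\textbf{Extension.} Take a solution $\dist^*$ on $X\setminus\{u\}$ and cap it: redefine $\dist^*(x,y)$ as $\min\{\dist^*(x,y),d_i\}$ for $x,y\in X_i$. Capping an ultrametric at a constant keeps it ultrametric. It respects $\tau_u$ trivially, and it respects $\tau_\ell$ because of fact~3: either the original value is at most $d_i$ and lies in the window, so $\tau_\ell\le\dist\le d_i$, or $\tau_\ell\le\tau_u\le d_i$. The cap also adds no edits. A capped edge had a value above $d_i$ and was therefore already modified, unless it is reset to exactly its original value, which can only reduce the count.

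Next, reset every cross-layer edge to its original value $d_j$. This is legal by facts~1 and~2, since those edges are not unfit and their original values lie within thresholds. It does not increase the number of edits. Finally, set $\dist'(u,x)=d_i$.

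\textbf{Verification.} We check the ultrametric condition by cases on how a triangle meets the layers and $u$:
\begin{itemize}
\item all three vertices in one layer: holds because the capped $\dist^*$ is ultrametric;
\item three distinct layers $i<j<k$: the sides are $d_j,d_k,d_k$;
\item two vertices in $X_i$ and one in $X_j$: the two cross sides are both $d_{\max(i,j)}$ and the inner side is at most $d_i$;
\item $u$ with one vertex from $X_i$ and one from $X_j$, $i<j$: the sides are $d_i,d_j,d_j$;
\item $u$ with two vertices from $X_i$: the sides are $d_i,d_i$ and at most $d_i$.
\end{itemize}

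The main obstacle is the capping step. We must check that it preserves the lower thresholds, which uses the type (iiiB) exclusion in fact~3, and that it preserves the budget. A subtle point there is that a capped edge may not originally be at most $d_i$ only if it is unfit, and in that case it was edited anyway.

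Rather than restricting a global solution, we could equivalently argue per layer. Even then, the global $\dist^*$ restricted to each $X_i$, after capping and resetting the cross edges, suffices, with edit counts summing to at most $k$.
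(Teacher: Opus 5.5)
Your proposal is correct and follows essentially the same route as the paper. You use the same three structural facts from the absence of bad triangles at $u$, truncate each layer at $d_i$, keep the original cross-layer and $u$-incident distances, and verify the same five triangle cases. Starting from a global solution and resetting the cross-layer edges is equivalent to the paper's per-layer decomposition with budgets $k_i$. Your explicit check that $\tau_\ell$ is still respected after capping is a point the paper passes over quickly.
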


\begin{proof}
    Suppose that in the instance $\Xcal=(X, \dist, \treshold_{\ell}, \treshold_{u})$ there exists a vertex $u\in X$ satisfying the conditions of the rule.
    
    Let $d_1 < d_2 < \cdots < d_\ell$ be the distinct values among $\{\dist(u,x) : x \in X\setminus\{u\}\}$, and partition $X\setminus\{u\}$ into $\ell$ sets $X_1, X_2, X_3, \dots, X_\ell$ based on their distances from $u$. Formally,
    \[
    X_i = \{x \in X \mid \dist(u,x) = d_i\}.
    \]
    
    Since $u$ is not contained in any bad triangle with respect to any edge $xy$, the following properties hold:
    \begin{enumerate}[label=(\roman*)]
        \item no edge between $X_i$ and $X_j$ for $i \neq j$ belongs to $\UNFIX$, as otherwise $uxy$ would be bad of type (iiiA); 
        \item for $x \in X_i$ and $y \in X_j$ with $i<j$, we must have $\dist(x,y)=d_j$, since otherwise $uxy$ would be bad;
        \item for $x,y \in X_i$, either $\dist(x,y)\le d_i$, or $xy\in\UNFIX$ with $\tau_u(x,y)\le d_i$, as otherwise $uxy$ would be bad. 
    \end{enumerate}
    
    We prove that $(\Xcal,k)$ is a \yesinstance if and only if $(\Xcal'=(X\setminus\{u\},\dist,\treshold_\ell,\treshold_u),k)$ is a \yesinstance. The forward direction is immediate.
    
    For the reverse direction, suppose $(\Xcal',k)$ is a \yesinstance. Then there exist integers $k_1,\dots,k_\ell$ with $\sum_{i = 1}^{\ell} k_i \le k$ such that each induced instance on $X_i$ can be transformed into an ultrametric $(X_i,\dist_i)$ using at most $k_i$ modifications.

    We may assume that each $(X_i,\dist_i)$ satisfies $\max_{x,y\in X_i} \dist_i(x,y) \le d_i$. 
    Indeed, if some edge $xy$ has $\dist_i(x,y) > d_i$, then $xy$ must belong to $\NORM$. In particular, its original value satisfied $\dist(x,y) \le d_i$, so reaching a value above $d_i$ means that $xy$ was already modified. 
    Decreasing $\dist_i(x,y)$ back to $d_i$ preserves feasibility and does not increase the number of modified edges, as $xy$ remains a modified edge. 
    Repeating this for all such edges yields the claim.
    
    We now construct a distance function $\dist'$ on $X$ as follows:
    \begin{itemize}
        \item $\dist'(u,x) = \dist(u,x)$ for all $x \in X$;
        \item $\dist'(x,y) = \dist(x,y)$ for $x \in X_i$, $y \in X_j$, $i \neq j$;
        \item $\dist'(x,y) = \dist_i(x,y)$ for $x,y \in X_i$.
    \end{itemize}
    Clearly, $\dist'$ differs from $\dist$ in at most $\sum_{i=1}^{\ell} k_i \le k$ entries.
    
    It remains to verify that $(X,\dist')$ is an ultrametric. We show this by showing every triangle satisfies ultrametric condition. Let $x,y,z \in X$:
    \begin{itemize}
        \item  $x,y,z\in X_i$ for some $i$.  Since  $(X_i, \dist_i)$ is ultrametric,  $x,y,z$ satisfy the ultrametric condition in $\dist'$ too.
        \item  $x\in X_i$, $y\in X_j$, $z\in X_k$, $i<j<k$. Since we did not change the distances between these vertices, we have $\dist'(x,y) = d_j$ and $\dist'(x,z)= \dist'(y,z) = d_k > d_j$ satisfying the ultrametric conditions. 
        \item  $x,y\in X_i$, $z\in X_j$, $i<j$ or $i>j$. If $i>j$ then $\dist'(x,z) = \dist'(y,z) = d_i$ and $\dist'(x,y) \leq d_i$, satisfying the ultrametric condition.   If $i<j$ then $\dist'(x,z) = \dist'(y,z) = d_j$ and $\dist'(x,y) \leq d_i < d_j$, satisfying the ultrametric condition. In either case, we do not violate the ultrametric condition.      
        \item  $x \in X_i$, $y\in X_j$, $i<j$, and $z=u$. We have $\dist'(x,u) = d_i$ and $\dist'(y,u) = d_j$ Since we did not modify $xy$, we have that $\dist'(x,y) = d_j$ and $d_i < d_j$, this triangle satisfies the ultrametric condition.
        \item  $x,y \in X_i$, $z=u$. Since $\dist'_i \leq d_i$, and $\dist'(x,u) = \dist'(y,u) = d_i$, the triangle $xyu$ satisfies the ultrametric condition in $\dist'$. ($\dist_i'$ is $\dist$ restricted to points in $X_i$)
    \end{itemize}
    
    Thus, $(X,\dist')$ is an ultrametric obtained using at most $k$ modifications, completing the proof.   
\end{proof}

\else

\fi

\medskip
\noindent\textsl{Bounding the number of vertices.}
Once irrelevant vertices are exhaustively removed, {every remaining vertex must participate in some bad triangle.}
Let $A$ be the (unknown) set of at most $k$ edges whose distances are modified by an optimal solution.
By  reduction rules,  each edge in $A$ can be charged with only $O(k)$ bad triangles, and each bad triangle contributes only a constant number of vertices beyond its charged edge.
A counting argument yields
$
|X|\le 3k^2+2k,
$. 
This gives a  reduced instance with $O(k^2)$ points.

\medskip
\noindent\textsl{Order-preserving scaling.}
Even after obtaining an equivalent instance with $|X| = \cO(k^2)$, the distances and thresholds may still be numerically large. By exploiting properties of ultrametrics and the bound $|X| = \cO(k^2)$ on the number of points in the reduced instance, it is possible to use order-preserving scaling to construct an equivalent instance in which all distances and thresholds are bounded by $\cO(k^4)$. 
This yields a  kernel of size $\cO(k^4 \log k)$.
The $\log k$ factor arises since we require to encode values of order $k^4$.

\medskip

\medskip

\subsection{Proof of \Cref{theorem:fpt-cuvd}}
We now sketch the ideas behind a single-exponential FPT algorithm for \Cuvd{} running in
$\Ocal(9^k\cdot k\cdot |X|^2)$ (\Cref{theorem:fpt-cuvd}).
The algorithm follows a \emph{divisive, top--down} paradigm that recovers an ultrametric by progressively enforcing the nested cluster structure that ultrametrics induce.

\medskip
\noindent\textsl{Ultrametrics as a hierarchy of cluster graphs.}
Fix a distance value $\lambda$. In an ultrametric $(X,\dist')$, the graph on $X$ with edges
$\{xy:\dist'(x,y)<\lambda\}$ is always a \emph{cluster graph} (a disjoint union of cliques).
Intuitively, points at distance $<\lambda$ must form ``clusters'' in which all pairs are close; otherwise an induced $P_3$ would witness a triangle where two sides are $<\lambda$ but the third is $\ge \lambda$, contradicting ultrametricity.
This observation suggests a strategy: process distance thresholds from large to small, and at each threshold enforce that the corresponding proximity graph becomes a cluster graph.

It is always possible (\Cref{corollary:tight-solution}) to restrict attention to solutions whose distances take values only from the finite set  $d_1<\cdots<d_L$ (with $L=\Ocal(|X|^2)$). We process levels from $L$ down to $1$.

At each level $i$, we consider the auxiliary graph $G_i(\dist,M)$ whose edges represent pairs that are currently
 {strictly below} $d_i$, together with a set $M$ of  {marked edges}.
In a target solution $\dist'$, marked edges are precisely those whose values we have committed to  {change} but whose final level is not yet decided.
If the instance is a \yesinstance{}, then for each level $i$ the graph $G_i(\dist',\emptyset)$ must be a cluster graph.
Hence, at level $i$ we can branch exactly as in \textsc{Cluster Editing}: whenever $G_i$ contains an induced $P_3$, at least one of its three pairs must be edited (turned into a non-edge or an edge) in the proximity graph.

\medskip
\noindent\emph{Main challenge: decreasing an edge may happen multiple times.}
In standard \textsc{Cluster Editing}, an edit is final: inserting or deleting an edge resolves the local obstruction permanently.
Here, an edit in $G_i$ corresponds to changing a \emph{distance}, and crucially:
\begin{itemize}
  \item \emph{Increasing} $\dist(x,y)$ to $d_i$ is final at level $i$ (we separate the pair at this scale).
  \item \emph{Decreasing} $\dist(x,y)$ below $d_i$ is \emph{not} final: we do not yet know whether the correct target is $d_{i-1}$, or $d_{i-2}$, etc.
\end{itemize}
This is the key complication compared to ordinary \textsc{Cluster Editing} and is the reason the algorithm introduces  {marked edges}.
Marking is an intermediate commitment: ``this pair must get smaller than $d_i$, but we postpone deciding how small until we reach lower levels.''

\medskip
\noindent\textsl{Marked edges and violated triangles.}
At level $i$, we partition edges (pairs of vertices)  into:
\[
D_i=\{xy:\dist(x,y)=d_i\},\qquad D_{<i}=\{xy:\dist(x,y)<d_i\},\qquad M=\text{marked}.
\]
A \emph{violated triangle at level $i$} is a triple $xyz$ with  {exactly one} edge in $D_i$ and the other two in $D_{<i}\cup M$.
Such a triple is precisely an induced $P_3$ in the auxiliary graph $G_i$: two pairs are ``close'' at scale $d_i$ (edges of $G_i$), but the third is not.
Fixing violated triangles is how we enforce the cluster-graph property level-by-level.

\medskip
\noindent\textsl{Structural lemma enabling the divisive recursion.}
A crucial invariant is:
\begin{quote}
If there are no violated triangles at levels $j\in\{i,\ldots,L\}$, then $G_i$ is a cluster graph.
Moreover, any violated triangle at a lower level $\ell<i$ must lie entirely within a single clique of $G_i$.
\end{quote}
This lemma formalizes the divisive paradigm: once $G_i$ is a cluster graph, its cliques define a partition of $X$, and all remaining ``finer-scale'' inconsistencies happen {inside} these cliques.
Thus we can safely descend from level $i$ to $i-1$ and continue the search without needing to reconsider interactions between different cliques.

\medskip
\noindent\textsl{Branching and measure.}
Given a violated triangle $xyz$ at level $i$ (equivalently, a $P_3$ in $G_i$), at least one of the three pairs must change its status at this level.
The algorithm branches into   cases, choosing which pair to ``edit''.
Additionally, only branches that respect the thresholds are considered; that is, the final value assigned to the edge must lie within its thresholds.

A second core challenge is to obtain a clean single-exponential bound despite the fact that a distance can be decreased multiple times across levels.
To control this, we use a \emph{measure} $\mu$ that charges:
\begin{itemize}
  \item $2$ units for a final increase ($D_{<i}\to D_i$),
  \item $1$ unit for marking ($D_i\to M$),
  \item $1$ unit for finalizing a marked edge ($M\to D_i$).
\end{itemize}
Thus, changing the value of one pair from its original distance to its final chosen distance costs at most $2$ units of $\mu$.
Initializing $\mu=2k$ (and accounting for initially forced changes) ensures every branch decreases $\mu$ by at least $1$.
Hence the search tree has at most $3^\mu\le 3^{2k}=9^k$ leaves.

\else

\fi

\ifshort

\else

\section{Preliminaries and basic facts about ultrametrics}\label{sec:plelim}
    We use 
    \( \mathbb{R}_{\geq 0} \) 
    to denote the set of non-negative 
    real numbers. 

    Let $X$ be a set. A function $\dist \colon X\times X \to \mathbb{R}_{\ge 0}$  is a \emph{distance} on $X$ if: 
    \begin{itemize}
        \item[(i)] $\dist$ is symmetric, that is,  for any $x, y \in X$, $\dist(x, y) = \dist(y, x)$, and  
        \item[(ii)] $\dist(x, x) = 0$ for all  $x\in X$.
    \end{itemize}
    Then, $ (X, \dist)$ is called a \emph{distance space}.

    If, in addition function $\dist$ satisfies a triangle inequality: $\dist(x, z) \le \dist(x, y) + \dist(y, z)$, for any $x, y, z \in X$, then $\dist$ is called a pseudometric on $X$. 
    Also, if $\dist(x,y)=0$ only for $x=y$, then $\dist$ is called a \emph{metric}, and $ (X,\dist)$ a \emph{metric space}. In this paper we are interested in special metrics.

    \begin{definition}[Ultrametric]
        A metric space $ (X,\dist)$ is an  \emph{ultrametric space} if for every distinct $x,y,z$, 
        \[
        \dist(x,z) \leq \max\{\dist(x,y),\dist(y,z)\}.
        \]
        We refer to the elements of $X$ as points.
    \end{definition}

An alternative way to view a distance space is as a complete graph $G[X]$ on the vertex set $X$, where the distances correspond to the weights of the edges of $G$. Throughout the paper, we adopt graph-theoretic terminology, referring to the points of $X$ as \emph{vertices} and pairs of points as \emph{edges}. Moreover, we often refer to the distance $\dist(x,y)$ between points $x,y \in X$ as the \emph{weight} of the edge $xy$. We use the same convention for the threshold functions $\treshold_u$ and $\treshold_\ell$, and see them as functions on the edges.  
In particular, throughout the paper, we assume that whenever the distance $\dist(x,y)$ ($\treshold_\ell(x,y)$ or $\treshold_u(x,y)$, respectively) is modified in our algorithms, the same modification concerns $\dist(y,x)$ ($\treshold_\ell(y,x)$ or $\treshold_u(y,x)$).
Finally, we refer to a cycle of length $3$ in the graph (equivalently defined by three vertices or three edges) as a \emph{triangle}.

    \begin{definition}[Values $\Dvalues$, $\Treshold_{u}$, and  $\Treshold_{\ell}$]\label{def:thrvalues}
        Let  $(\Xcal=(X, \dist, \treshold_{\ell}, \treshold_{u}), k)$ be an instance of \Cuvd and $D$ be the corresponding distance matrix. We use $\Treshold_{u}$ (and  $\Treshold_{\ell}$)  to denote the set of upper (and lower) thresholds for all edges (pair of points) in $X$. In other words,  
        \[\Treshold_{u} = \{\treshold_{u}(e) \mid e\in X \times X\},\] 
        and \[\Treshold_{\ell} = \{\treshold_{\ell}(e) \mid e\in X \times X\}.\]
        We also use $\Dvalues$ to denote the set of all values of the distance matrix $D$.
    \end{definition}
    
    \medskip

At the end of this section, we summarize some simple facts about ultrametrics that will be used to obtain our main results.  
The first is \Cref{lemma:poly-if-known-edges}. In this setting, we are given a set of $k$ edges, and the question is whether it is possible to obtain an ultrametric by modifying only these selected edges. \Cref{lemma:poly-if-known-edges} provides a polynomial-time algorithm for this scenario. 

\Cref{lemma:poly-if-known-edges}  could be derived from the algorithm of Farach, Kannan, and Warnow~\cite{farach1995robust}; see \cite[Theorem~4]{farach1995robust}.  
Their algorithm applies to the more general \textsc{Ultrametric Graph Sandwich} problem. 
For self-containment, we provide a simpler proof of \Cref{lemma:poly-if-known-edges}
in the appendix.

    \begin{proposition}
        \label{lemma:poly-if-known-edges}
        There is an algorithm which when given an instance of \Cuvd $(\Xcal=(X, \dist, \treshold_{\ell}, \treshold_{u}), k)$ with corresponding distance matrix $D$ and set of edges $|A| \leq k$ will decide whether one can convert $D$ into an ultrametric $D'$ by changing only the edges in $A$ in polynomial time. Moreover, the values in the solution matrix $D'$ could be selected from the set $\Dvalues \cup \Treshold_{u} \cup \Treshold_{\ell}$.
    \end{proposition}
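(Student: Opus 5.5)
We plan a greedy ``start high, push down'' procedure. Initialize $\dist'(e):=\treshold_u(e)$ for every $e\in A$ and $\dist'(e):=\dist(e)$ for $e\notin A$. If some $e\notin A$ has $\dist(e)\notin[\treshold_\ell(e),\treshold_u(e)]$, reject immediately. We maintain the invariant that every edge $e\in A$ has $\dist'(e)$ equal to its current upper bound $\treshold_u(e)$. Each update lowers $\treshold_u(e)$, and we show it never cuts off a feasible solution.

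While $\dist'$ is not an ultrametric, we pick a violating triangle $xyz$ with $\dist'(x,y)>\dist'(y,z)\ge\dist'(x,z)$. Let $\dist^*$ be any feasible completion that changes only edges in $A$ and respects the current thresholds. Then $\dist^*(y,z)\le\dist'(y,z)$: either $yz\notin A$ and its value is fixed, or $yz\in A$ and $\dist'(y,z)$ is its upper bound. The same holds for $xz$. Ultrametricity of $\dist^*$ then gives $\dist^*(x,y)\le\max\{\dist^*(y,z),\dist^*(x,z)\}\le\dist'(y,z)$. Consequently, if $xy\notin A$ we reject. Otherwise we set $\treshold_u(x,y):=\dist'(y,z)$ and $\dist'(x,y):=\dist'(y,z)$, which preserves the set of feasible completions. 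If the new upper bound drops below $\treshold_\ell(x,y)$, no completion exists and we reject.

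Termination and running time come next. Every value assigned to an edge of $A$ is either its initial $\treshold_u$ or a value copied from another edge. By induction, all values therefore lie in the finite set $\Dvalues\cup\Treshold_u\cup\Treshold_\ell$, which has $\cO(|X|^2)$ elements. Each update strictly decreases one edge's value within this set, so there are at most $|A|\cdot\cO(|X|^2)$ updates. Each update needs one scan for a violating triangle, which costs $\cO(|X|^3)$ naively (or less via an ultrametricity test). This gives polynomial time.

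Correctness follows in both directions. If the procedure stops with an ultrametric, every edge satisfies its thresholds: edges in $A$ sit at their upper bound, which is at least the lower bound, and edges outside $A$ were checked at the start. So $\dist'$ is a valid solution using only values from the stated set, which also proves the ``moreover'' part. If the procedure rejects, the invariant that every feasible completion satisfies the current thresholds shows that none exists.

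The main obstacle I expect is the monotonicity argument in the second paragraph. One must carefully justify that the two smaller sides can only go down in any feasible completion, and this holds precisely because edges of $A$ are kept at their upper thresholds. This is the key invariant, and I will state it explicitly as a claim proved by induction on the number of updates.
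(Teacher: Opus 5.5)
Your proposal is correct and is essentially the paper's own argument: initialize the edges of $A$ at their upper thresholds; while some triangle is violated, lower its strictly longest side to the middle value and tighten that side's upper threshold (rejecting if the side is not in $A$); termination follows from the same $\cO(|X|^2)$ values-per-edge bound. Your explicit feasibility checks (edges outside $A$ at the start, lower thresholds after each update) make it slightly more careful than the paper's version; to make the final claim airtight, also check $\tau_\ell(e)\le\tau_u(e)$ for every $e\in A$ at initialization, since otherwise an edge of $A$ that is never updated escapes the lower-threshold test.
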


    \begin{proof}
        Refer to Appendix~\ref{app:poly-if-known-edges} for the proof.
    \end{proof}

    \Cref{lemma:poly-if-known-edges} implies the following property of an optimal solution, which we will use in obtaining the kernel as well as in the algorithm.

    \begin{corollary}
        \label{corollary:tight-solution}
        If the given instance $(\Xcal=(X, \dist, \treshold_{\ell}, \treshold_{u}), k)$ with corresponding distance matrix $D$ of \Cuvd is a \yesinstance then there exists a solution ultrametric space $ \Xcal' = (X, \dist')$ on the same set of points $X$ with distance matrix $D'$ such that $\|D - D'\|_0 \leq 2k$, satisfying the threshold conditions, and
        each entry of $D'$ is from the set $\Dvalues \cup \Treshold_{u} \cup \Treshold_{\ell}$.    
    \end{corollary}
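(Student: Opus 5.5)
The corollary follows directly from \Cref{lemma:poly-if-known-edges}; the only work is to make the witness set explicit and check that the modified matrix still satisfies the thresholds. Since $(\Xcal,k)$ is a \yesinstance, fix some solution ultrametric $\dist^*$ with distance matrix $D^*$ such that $\|D-D^*\|_0\le 2k$ and $\treshold_\ell(x,y)\le \dist^*(x,y)\le\treshold_u(x,y)$ for all pairs. Let $A$ be the set of unordered pairs $xy$ with $\dist^*(x,y)\neq\dist(x,y)$. By the factor-two convention for symmetric matrices, $|A|\le k$.

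Next, run the algorithm of \Cref{lemma:poly-if-known-edges} on the instance $(\Xcal,k)$ with edge set $A$. The existence of $\dist^*$ shows that $D$ can be converted into a threshold-respecting ultrametric by changing only edges of $A$, so the algorithm answers positively. By the ``moreover'' part of the proposition, we may then take a solution matrix $D'$ whose entries all lie in $\Dvalues\cup\Treshold_u\cup\Treshold_\ell$.

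It remains to check the properties claimed for $D'$.
\begin{itemize}
\item $D'$ differs from $D$ only on pairs in $A$, so $\|D-D'\|_0\le 2|A|\le 2k$.
\item Entries on pairs outside $A$ are unchanged, so they equal $D^*$ there and hence satisfy the thresholds.
\item Entries on pairs in $A$ satisfy the thresholds by construction: the procedure initialises each such pair at its upper threshold and only lowers values to distances of other existing edges.
\end{itemize}

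The one point needing care is that lowering a value never pushes an entry of $A$ below its lower threshold. This is where the existence of $\dist^*$ is used. Each decrease is forced: every solution must have $\dist(x,y)\le\dist(y,z)$ at that step. Hence, by induction, the current value of each pair in $A$ is always at least its value in $\dist^*$, which is at least $\treshold_\ell$. I expect this invariant, that the greedy values dominate every feasible solution, to be the only nontrivial step. It is essentially the correctness argument of \Cref{lemma:poly-if-known-edges}, so I would simply cite that proposition.
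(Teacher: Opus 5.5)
Your proof is correct and follows the paper's argument: take the set $A$ of at most $k$ pairs that some solution changes, and apply \Cref{lemma:poly-if-known-edges} to $A$. Your explicit invariant, that the greedy values always stay at least $\dist^*$ and hence at least $\treshold_\ell$, is a useful addition, because the paper leaves the lower-threshold check implicit in its proof of that proposition.
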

    
    \begin{proof}
        Since $\Xcal$ is a \yesinstance, there is a set $A$ of at most $k$ edges whose weight changes transform $\Xcal$ into an ultrametric.  We apply  \Cref{lemma:poly-if-known-edges} to this set $A$.
    \end{proof}

    We now prove a ``truncation'' property of ultrametrics that will be used in obtaining the kernel.

    \begin{lemma}
        \label{lemma:truncation}
        Let $(X,\dist)$ be an ultrametric and $\lambda \geq 0$. Then for 
     $\dist'(x,y) = \min\{\dist(x,y),\lambda\}$ the distance space $(X,\dist')$  is also an ultrametric.
    \end{lemma}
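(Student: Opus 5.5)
The proof is a direct check of the defining conditions, using the fact that $t\mapsto \min\{t,\lambda\}$ is non-decreasing and therefore commutes with taking maxima: $\min\{\max\{a,b\},\lambda\} = \max\{\min\{a,\lambda\},\min\{b,\lambda\}\}$ for all reals $a,b$. The identity itself follows by assuming $a\le b$ without loss of generality; both sides then equal $\min\{b,\lambda\}$.

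First we check the distance-space conditions. The function $\dist'$ is symmetric because $\dist$ is. It satisfies $\dist'(x,x)=\min\{0,\lambda\}=0$ because $\lambda\ge 0$. Its values lie in $\mathbb{R}_{\ge 0}$.

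Next we verify the ultrametric inequality. Fix distinct $x,y,z$. By monotonicity of truncation and the identity above,
\[
\dist'(x,z)=\min\{\dist(x,z),\lambda\}\le \min\{\max\{\dist(x,y),\dist(y,z)\},\lambda\}=\max\{\dist'(x,y),\dist'(y,z)\}.
\]
The ultrametric inequality gives the triangle inequality, since $\max\{a,b\}\le a+b$ for non-negative $a,b$.

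The only real obstacle is positivity. If $\lambda=0$, or more generally whenever truncation could create zero distances between distinct points, $\dist'$ is only a pseudometric. This cannot happen for $\lambda>0$, because positive distances stay positive. I would therefore note that the lemma is used in the same sense as the problem statement, namely distance spaces satisfying the ultrametric inequality, so the degenerate case $\lambda=0$ is harmless. If strict positivity is required, I would state $\lambda>0$ explicitly.
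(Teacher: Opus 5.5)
Your proof is correct, but it takes a different and slightly more general route than the paper's.

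The paper argues by contradiction. It assumes some triangle $xyz$ violates the inequality under $\dist'$, and notes that at least one side must then have been truncated to $\lambda$. It rules out two or three sides equal to $\lambda$, since the maximum would then be attained twice. So exactly one side, say $xy$, equals $\lambda$ while the other two are below $\lambda$. Then $\dist(x,y)\ge\lambda>\max\{\dist(x,z),\dist(y,z)\}$, contradicting the ultrametricity of $\dist$.

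You instead derive the inequality directly, from the fact that $t\mapsto\min\{t,\lambda\}$ is non-decreasing and therefore commutes with $\max$. Since you use nothing else about truncation, your argument shows more: $\phi\circ\dist$ satisfies the ultrametric inequality for every non-decreasing $\phi\colon\mathbb{R}_{\ge 0}\to\mathbb{R}_{\ge 0}$ with $\phi(0)=0$. That is also the fact behind the order-preserving scaling step of the kernel. The paper's case analysis, by contrast, is tailored to truncation.

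You are also more thorough about the remaining axioms. The paper checks only the three-point condition, while you verify symmetry, the zero diagonal and the triangle inequality. You also rightly observe that for $\lambda=0$ the truncated function is identically zero. It is then not a metric in the strict sense of the paper's definition of an ultrametric. The paper's proof passes over this degenerate case silently, and it is harmless for how the lemma is used.
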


    \begin{proof}
    For contradiction, assume that $(X,\dist')$ contains a violated triangle $T=xyz$. Since $(X,\dist)$ is ultrametric, at least one edge of $T$ has to be truncated to $\lambda$.
    Thus $T$ contains at least one edge of weight $\lambda$. 
    However, it cannot be two or three edges of $T$ that have weight $\lambda$, because then $T$ is not violated. Hence exactly one edge of $T$, say $xy$, has weight $\lambda$, while the two others have weight smaller than $\lambda$. But then in $(X,\dist)$ we have 
\[
\dist(x,y)\geq \lambda > \max\{\dist(x,z), \dist (y,z)\},
\]
violating the assumption that $(X,\dist)$ is an ultrametric.
    \end{proof}

\section{Polynomial kernel for \Cuvd}\label{sec:kernel}

In this section we prove \Cref{theorem:kernelconstrultr}, which provides a polynomial kernel for \Cuvd.
Before we proceed with the proof, we need to set up several definitions and notions that will be used in it.
As with many kernelization algorithms, the proof of the theorem proceeds via a sequence of reduction rules.
Our main reduction rule identifies an ``irrelevant'' vertex (point) of $X$ whose removal yields an equivalent instance.
However, a priori the input, while large, may not contain such a vertex.
Here our reduction deviates from the common approach to kernelization.
Most of our rules do not change the size of the instance; rather, they modify edge thresholds towards  creating an irrelevant vertex at the end.
The procedure for modifying thresholds is nontrivial and requires careful analysis.
 
Depending on how an edge distance meets the threshold conditions,  
we partition the edge set into three classes defined below.
    \begin{itemize}
        \item[$\FIXED$:] \emph{\Fixed edges}. These are the edges whose distances are fixed and cannot be changed. Specifically, for an edge $xy$, $ \dist(x,y) = \treshold_{\ell}(x,y) = \treshold_{u}(x,y)$.
        \item[ $\UNFIX$:] \emph{\Unfix edges}. These are the edges whose distances have to be changed because their current distance does not lie within the threshold. Formally, either $\dist(x,y) < \treshold_{\ell}(x,y)$ or $\dist(x,y) > \treshold_{u}(x,y)$.   
        \item[$\NORM$:] \emph{\Normal edges.} These are the remaining edges,  that is, the edges that are not \Fixed and \Unfix. Formally, $\treshold_{\ell}(x,y) \leq \dist(x,y) \leq \treshold_{u}(x,y)$ and $\treshold_{\ell}(x,y) \neq \treshold_{u}(x,y)$.
    \end{itemize}

Each reduction rule either (i) identifies the instance as a \noinstance,  (ii) changes an edge threshold,  (iii) changes the status of an edge, and 
(iv) deletes a point.  
A \normal\ edge can become either \fixed\ or \unfix.  
An \unfix\ edge may remain \unfix\ (possibly with a smaller upper threshold)  
or be promoted to \fixed.

Whenever an edge changes status, we must update the parameter~$k$ with care.  
This is the key distinction from the kernelization in the $\ell_1$-norm.  
In the $\ell_1$ setting, \emph{every} status change strictly decreases the parameter.  
By contrast, under the $\ell_0$-norm, multiple threshold adjustments may leave $k$ unchanged.
Accordingly, in our kernelization algorithm we proceed as follows:
\begin{itemize}
  \item Turning a \normal\ edge into an \unfix\ edge, or decreasing the threshold of an existing \unfix\ edge, \emph{does not} reduce~$k$.
  \item Whenever a \normal\ or \unfix\ edge becomes \fixed, we decrease~$k$ if we change the distance. 
\end{itemize}

In order to define the reduction rules, we need the notion of a bad triangle. 
Recall that a \emph{triangle} is a triple of vertices, or equivalently of edges. Intuitively, a violated triangle $xyz$ is a triangle whose edges do not satisfy the ultrametric condition 
$
\dist(x,z) \leq \max\{\dist(x,y), \dist(y,z)\}.
$

However, for kernelization we need a more refined notion of a bad triangle. As we shall prove, a vertex that does not belong to any bad triangle and is not incident to any unfit edge is irrelevant and can be safely removed. Informally, all reduction rules but one will then guarantee that in a sufficiently large \yesinstance it is possible to reassign edge weights and thresholds to enforce the existence of an irrelevant vertex.

    \begin{definition}[Bad triangle \wrt edge $xy$] \label{def:badtriangledt} We say that a triangle $T=xyz$ is \emph{bad} \wrt edge $xy$ if one of the following holds.
    \begin{itemize} 
        \item[(i)] If $xy \in \NORM \cup \FIXED$, and $yz,xz \in \NORM \cup \FIXED$ and $T$ does not satisfy the ultrametric condition.
        \item[(ii)] If $xy \in \NORM \cup \FIXED\cup \UNFIX$ and $yz,xz \in \UNFIX$.
        \item[(iii)] If $xy \in \UNFIX$, $yz,xz \in \NORM \cup \FIXED$, and
        \subitem (iiiA) either the other two edges are distinct, that is, $\dist(x,z) \neq \dist(y,z)$, or
        \subitem  (iiiB) the threshold $\treshold_{u}(x,y) >  \dist(x,z) = \dist(z,y)$.

  \end{itemize} See \Cref{fig:ultrametric-badtr} for some examples of bad triangles of each type.

\end{definition}

Note that if an instance $ (X, \dist, \treshold_{\ell}, \treshold_{u})$ with parameter $k=0$ contains a bad triangle, then this is a \noinstance. The opposite is not true. For example, a triangle $xyz$ with two fixed edges $xz$ and $yz$, say of length $3$, and one unfixed edge $xy$ of length $4$ with upper threshold $\tau_{u}(x,y)=3$, is not a bad triangle. However, its distances do not form an ultrametric.

\begin{figure}
\centering
\begin{tikzpicture}[scale=0.9, x=2.5cm] 


\newcommand{\drawtriangle}[2]{
  \begin{scope}[shift={(#1,0)}]
    \coordinate (x) at (0,0);
    \coordinate (y) at (1,0);
    \coordinate (z) at (0.5,0.866);
    \draw[draw=blue]  (x) -- node[midway, sloped, below, yshift=-2pt, text=blue]  {\scriptsize$(1, 3, 3)$} (y);
    \draw[draw=green] (y) -- node[midway, sloped, above, yshift=2pt, text=green] {\scriptsize$(2, 2, 2)$} (z);
    \draw[draw=blue]  (z) -- node[midway, sloped, above, yshift=2pt, text=blue]  {\scriptsize$(1, 4, 2)$} (x);
    \node[below left] at (x) {$x$};
    \node[below right] at (y) {$y$};
    \node[above] at (z) {$z$};
    \coordinate (centroid) at ($ (x)!.333!(y)!.333!(z) $);
    \node at (centroid) {\small\textbf{#2}};
  \end{scope}
}

\newcommand{\drawtriangleii}[2]{
  \begin{scope}[shift={(#1,0)}]
    \coordinate (x) at (0,0);
    \coordinate (y) at (1,0);
    \coordinate (z) at (0.5,0.866);
    \draw[draw=blue] (x) -- node[midway, sloped, below, yshift=-2pt, text=blue] {\scriptsize$(1, 3, 2)$} (y);
    \draw[draw=red]  (y) -- node[midway, sloped, above, yshift=2pt, text=red]  {\scriptsize$(1, 2, 3)$} (z);
    \draw[draw=red]  (z) -- node[midway, sloped, above, yshift=2pt, text=red]  {\scriptsize$(1, 2, 3)$} (x);
    \node[below left] at (x) {$x$};
    \node[below right] at (y) {$y$};
    \node[above] at (z) {$z$};
    \coordinate (centroid) at ($ (x)!.333!(y)!.333!(z) $);
    \node at (centroid) {\small\textbf{#2}};
  \end{scope}
}

\newcommand{\drawtriangleiiiA}[2]{
  \begin{scope}[shift={(#1,0)}]
    \coordinate (x) at (0,0);
    \coordinate (y) at (1,0);
    \coordinate (z) at (0.5,0.866);
    \draw[draw=red]   (x) -- node[midway, sloped, below, yshift=-2pt, text=red]   {\scriptsize$(1, 2, 3)$} (y);
    \draw[draw=blue]  (y) -- node[midway, sloped, above, yshift=2pt, text=blue]  {\scriptsize$(1, 2, 2)$} (z);
    \draw[draw=green] (z) -- node[midway, sloped, above, yshift=2pt, text=green] {\scriptsize$(3, 3, 3)$} (x);
    \node[below left] at (x) {$x$};
    \node[below right] at (y) {$y$};
    \node[above] at (z) {$z$};
    \coordinate (centroid) at ($ (x)!.333!(y)!.333!(z) $);
    \node at (centroid) {\small\textbf{#2}};
  \end{scope}
}

\newcommand{\drawtriangleiiiB}[2]{
  \begin{scope}[shift={(#1,0)}]
    \coordinate (x) at (0,0);
    \coordinate (y) at (1,0);
    \coordinate (z) at (0.5,0.866);
    \draw[draw=red]   (x) -- node[midway, sloped, below, yshift=-2pt, text=red]   {\scriptsize$(1, 4, 5)$} (y);
    \draw[draw=blue]  (y) -- node[midway, sloped, above, yshift=2pt, text=blue]  {\scriptsize$(1, 4, 3)$} (z);
    \draw[draw=green] (z) -- node[midway, sloped, above, yshift=2pt, text=green] {\scriptsize$(3, 3, 3)$} (x);
    \node[below left] at (x) {$x$};
    \node[below right] at (y) {$y$};
    \node[above] at (z) {$z$};
    \coordinate (centroid) at ($ (x)!.333!(y)!.333!(z) $);
    \node at (centroid) {\small\textbf{#2}};
  \end{scope}
}

\drawtriangle{0}{(i)}
\drawtriangleii{1.4}{(ii)}
\drawtriangleiiiA{2.8}{(iiiA)}
\drawtriangleiiiB{4.2}{(iiiB)}

\end{tikzpicture}

\caption{Examples of bad triangles with respect to edge $xy$. The meaning of the 3-tuple labeling an edge of the triangle is the following: the first coordinate is $\treshold_{\ell}$, the second is $\treshold_{u}$, and the third is $\dist$ of that edge. For example, in triangle (i), $\treshold_{\ell}(x,y)=1$, $\treshold_{u}(x,y)=3$, and $\dist(x,y)=3$. In this triangle, blue edges $xy$ and $xz$ are in $\NORM$, and green $yz$ is in $\FIXED$. In triangle (ii), blue edge $xy \in \NORM$ and red edges $xz$, $yz$ are in $\UNFIX$.}
\label{fig:ultrametric-badtr}
\end{figure}

    Let $xyz$ be a bad triangle \wrt edge $xy$.
    If $\dist(x,z) \neq \dist(y,z)$ then there is a unique value we can assign to edge $xy$ to fix this bad triangle whereas if $\dist(x,z) = \dist(y,z)$ then we have an upper bound on the weights that edge $xy$ can attain to satisfy this triangle.
    Indeed we need different reduction rules to handle them and thus we further classify the bad triangles of type $(i)$ and type $(iii)$ into bad isosceles triangles and bad scalene triangles. 

\begin{definition}[Bad isosceles and scalene triangles \wrt edge $xy$] \label{def:badequlatdt}
    Let  $T=xyz$ be a bad triangle \wrt edge $xy$ and edges $yz,zx\in \FIXED \cup \NORM$. We say that triangle $T$ is \emph{isosceles}   
    \begin{itemize}
        \item If $xy \in \NORM \cup \FIXED$ and $\dist(y,z)=\dist(z,x)$, or

        \item If $xy \in \UNFIX$ and the upper threshold $\treshold_{u}(x,y) >  \dist(x,z) = \dist(z,y)$, 
    \end{itemize}
    and \emph{scalene} 
    \begin{itemize}
        \item If $xy \in \NORM \cup \FIXED \cup \UNFIX$ and $\dist(y,z) \neq \dist(z,x)$.

    \end{itemize}
\end{definition}
For example, in \Cref{fig:ultrametric-badtr} triangles (i) and (iiiB) are isosceles.

\medskip
We are ready to proceed with the proof of  \Cref{theorem:kernelconstrultr}, which we restate here. 

\KernelConstrUltr*

\begin{proof} We provide a set of reduction rules that, in polynomial time, transform the given input instance $( (X,  \dist, \treshold_{\ell}, \treshold_{u}), k)$ of \Cuvd into an equivalent one with at most $3k^2 + 2k$ vertices.   

In the reduction rules, we assume that they are applied in the given order. That is, we perform a rule only if the conditions of the preceding rules do not hold. When the rule outputs \noinstance, we construct a trivial \noinstance, say a bad triangle with distances $1,2$, and $3$ and $k=0$.

The first three reduction rules ensure the completeness of the kernelization algorithm.

\begin{leftbar}
    \begin{dtredrule} \label{DTRR:1}
        If there exists an edge $xy$ such that $\treshold_{\ell}(x,y) > \treshold_{u}(x,y) $, output a \noinstance.
    \end{dtredrule}
\end{leftbar}

\begin{claim}
    \Cref{DTRR:1} is sound.
\end{claim}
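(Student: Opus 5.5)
To show that \Cref{DTRR:1} is sound, I will argue directly from the definition of \Cuvd. Soundness here means one thing: whenever the rule outputs a \noinstance, the original instance $((X,\dist,\treshold_\ell,\treshold_u),k)$ really is a \noinstance. The replacement instance is the trivial \noinstance\ fixed at the start of the proof of \Cref{theorem:kernelconstrultr}, namely a bad triangle with distances $1,2,3$ and $k=0$. So equivalence follows once we know the original instance has no solution.

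Assume, for contradiction, that the original instance is a \yesinstance. Then there is an ultrametric $\Xcal'=(X,\dist')$ with $\|D-D'\|_0\le 2k$. By the problem definition it satisfies $\treshold_\ell(a,b)\le\dist'(a,b)\le\treshold_u(a,b)$ for every pair $a,b\in X$. Take the edge $xy$ on which the rule fired. Applying the constraint to this pair gives
\[
\treshold_\ell(x,y)\le \dist'(x,y)\le \treshold_u(x,y).
\]
Hence $\treshold_\ell(x,y)\le\treshold_u(x,y)$, contradicting the rule's hypothesis $\treshold_\ell(x,y)>\treshold_u(x,y)$.

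This contradiction does not depend on the budget $k$ or on any ultrametric property, so the original instance is a \noinstance, as is the output. The rule can also be applied in polynomial time, since it only scans all $\binom{|X|}{2}$ pairs. I expect no real obstacle here; the only point needing care is to state precisely which trivial \noinstance\ is returned, and to confirm that it is indeed a \noinstance: with $k=0$ no distance may change, and a triangle with distances $1,2,3$ violates the ultrametric condition.
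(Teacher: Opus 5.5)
Your proof is correct and follows the paper's own one-line argument: a feasible solution would need $\treshold_\ell(x,y)\le\dist'(x,y)\le\treshold_u(x,y)$, which cannot hold when $\treshold_\ell(x,y)>\treshold_u(x,y)$. Your extra check that the returned trivial instance (the $1,2,3$ triangle with $k=0$) is indeed a no-instance is a harmless addition that the paper leaves implicit.
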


\begin{proof}
    If  $\treshold_{\ell}(x,y) > \treshold_{u}(x,y)$ for some edge $xy$, then we cannot satisfy the threshold constraint.
\end{proof}

\begin{leftbar}
    \begin{dtredrule} \label{DTRR:2}
        If  
         the total number of $\UNFIX$ edges exceeds $k$, 
        output  a  \noinstance.
    \end{dtredrule}
\end{leftbar}

\begin{claim}
    \Cref{DTRR:2} is sound.
\end{claim}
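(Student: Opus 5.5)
The claim to prove is that \Cref{DTRR:2} is sound: if more than $k$ edges are \unfix, the instance is a \noinstance. The argument is a direct counting argument based on the definition of the class $\UNFIX$.

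Take any \unfix edge $xy$. By definition, $\dist(x,y)\notin[\treshold_{\ell}(x,y),\treshold_{u}(x,y)]$. Now suppose $\Xcal'=(X,\dist')$ is any solution, meaning an ultrametric that satisfies the threshold conditions. Then $\treshold_{\ell}(x,y)\le \dist'(x,y)\le \treshold_{u}(x,y)$. It follows that $\dist'(x,y)\neq \dist(x,y)$, so the distance of every \unfix edge must be modified. Since distinct unordered pairs correspond to disjoint pairs of symmetric entries of $D$, each \unfix edge contributes exactly two nonzero entries of $D-D'$. If $|\UNFIX|>k$, we get
\[
\|D-D'\|_0 \ge 2|\UNFIX| > 2k
\]
for every candidate $D'$ that respects the thresholds. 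Hence no solution exists, and the instance is a \noinstance.

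Conversely, the trivial \noinstance we output (a bad triangle with distances $1,2,3$ and $k=0$) is indeed a \noinstance. So the rule outputs an equivalent instance. There is no real obstacle here. The only point to state carefully is that feasibility with respect to the thresholds forces a change. We do not need the ultrametric condition at all; the threshold condition alone already rules out a solution.
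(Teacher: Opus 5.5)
Your proof is correct and follows the paper's argument exactly: every \unfix edge has its distance outside its threshold interval, so any feasible solution must change it, and more than $k$ such edges exceed the budget. You add two details the paper leaves implicit, namely the factor of $2$ in $\|D-D'\|_0$ and the check that the trivial output instance really is a \noinstance.
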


\begin{proof}
   Since each of the \unfix edges must be changed, we require more than $k$ operations, thereby exceeding the budget $k$. 
\end{proof}

\begin{leftbar}
    \begin{dtredrule} \label{DTRR:0}
        If there exists an edge $xy$ such that $\treshold_{\ell}(x,y) = \treshold_{u}(x,y) $ and $\dist(x,y) \neq \treshold_{\ell}(x,y)$, then set $\dist(x,y) = \treshold_{\ell}(x,y)$ and decrease the parameter $k$ by 1.
    \end{dtredrule}
\end{leftbar}

\begin{claim}
    \Cref{DTRR:0} is sound.
\end{claim}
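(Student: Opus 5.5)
The claim is that Reduction Rule~\ref{DTRR:0} is sound. We must show that the original instance $((X,\dist,\treshold_\ell,\treshold_u),k)$ is a \yesinstance if and only if the modified instance is one. The modified instance is $((X,\hat\dist,\treshold_\ell,\treshold_u),k-1)$, where $\hat\dist$ agrees with $\dist$ except that $\hat\dist(x,y)=\treshold_\ell(x,y)=\treshold_u(x,y)$. The key observation is that, in any feasible solution, the threshold constraint on $xy$ forces $\dist'(x,y)=\treshold_\ell(x,y)$. Since the current value $\dist(x,y)$ differs from this, every solution must spend exactly one unit of budget on the pair $xy$, and that change must go to exactly this value.

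For the forward direction, take a solution $\dist'$ for the original instance. It is an ultrametric, respects all thresholds, and differs from $\dist$ on at most $k$ pairs. By the threshold constraint, $\dist'(x,y)=\treshold_\ell(x,y)\neq\dist(x,y)$, so $xy$ is among the modified pairs. Now compare $\dist'$ with $\hat\dist$. They agree on $xy$, and on every other pair $\hat\dist=\dist$. Hence the pairs where $\dist'$ and $\hat\dist$ differ are exactly the pairs where $\dist'$ and $\dist$ differ, minus $xy$. That is at most $k-1$ pairs. The ultrametric property and the threshold constraints do not depend on the input distances, so $\dist'$ is a solution for the new instance with budget $k-1$.

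For the converse, take a solution $\dist''$ for the new instance. It differs from $\hat\dist$ on at most $k-1$ pairs. Since $xy$ has equal thresholds, $\dist''(x,y)=\treshold_\ell(x,y)=\hat\dist(x,y)$. Relative to $\dist$, the set of modified pairs is therefore the set modified relative to $\hat\dist$ together with $xy$, giving at most $k$ pairs. Again, the ultrametric property and the thresholds are unchanged, so $\dist''$ is a solution for the original instance.

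Nothing here is a real obstacle. The only points that need care are counting each unordered pair once, so that a change costs $2$ in $\|D-D'\|_0$ for both matrices, and the edge case $k=0$. If $k=0$, the original instance is already a \noinstance because the pair $xy$ must be changed. In that case the reduced parameter is negative, and we declare the resulting instance a \noinstance, or equivalently output the trivial \noinstance, which is consistent with the argument above.
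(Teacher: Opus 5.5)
Your proof is correct and follows the paper's argument: equal thresholds force the final value on $xy$ to be $\tau_\ell(x,y)\neq\dist(x,y)$, so this edit is unavoidable and can be made up front at the cost of one unit of budget. The paper states this in one line; you additionally write out both directions of the equivalence and the $k=0$ edge case, which the paper leaves implicit.
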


\begin{proof}
    Since, the only value $\dist(x,y)$ can have is $\treshold_{\ell}(x,y)$ we need to set $\dist(x,y) = \treshold_{\ell}(x,y)$.
\end{proof}

We can now assume that the given instance has at most $k$ \unfix edges and that  for every edge $xy \in \NORM \cup \UNFIX$,  we have $\treshold_{\ell}(x,y) < \treshold_{u}(x,y)$. \Cref{DTRR:3,DTRR:4} ensures that every edge $xy \in \FIXED \cup \UNFIX \cup \NORM$ is a part of at most $k$ bad scalene triangles \wrt $xy$ such that their larger side is equal.

\begin{leftbar}
    \begin{dtredrule} \label{DTRR:3}
        If for some  $xy \in \FIXED$ there are $k+1$ bad triangles \wrt $xy$, then 
        output a \noinstance.  
    \end{dtredrule}
\end{leftbar}

\begin{claim}
    \Cref{DTRR:3} is sound.
\end{claim}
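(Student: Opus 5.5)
The plan is to show that every bad triangle \wrt a fixed edge $xy$ forces at least one modification, and that distinct bad triangles on $xy$ force modifications of distinct edges. Then $k+1$ bad triangles \wrt $xy$ need more than $k$ changes, and the instance is a \noinstance.

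Let $T_1,\dots,T_{k+1}$ be bad triangles \wrt $xy$, with $T_j=xyz_j$ and pairwise distinct apexes $z_j$. Their edge sets other than $xy$ are $\{xz_j,yz_j\}$, and these sets are pairwise disjoint. Since $xy\in\FIXED$, any feasible solution keeps $\dist'(x,y)=\dist(x,y)$. It therefore suffices to show that any feasible solution $\dist'$ modifies $xz_j$ or $yz_j$ for each $j$. Summing over $j$ then gives at least $k+1$ modified edges, which exceeds the budget.

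Fix $j$ and go through the types in \Cref{def:badtriangledt}. Type (iii) needs $xy\in\UNFIX$, so it cannot occur here. For type (ii), both $xz_j$ and $yz_j$ are \unfix and must be changed in any solution, so the claim holds. For type (i), all three edges are in $\NORM\cup\FIXED$ and $T_j$ violates the ultrametric condition. If a solution changed neither $xz_j$ nor $yz_j$, then, since $xy$ is fixed, all three distances would be unchanged and $T_j$ would still be violated in $\dist'$, which is a contradiction.

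I do not expect a real obstacle. The only points needing care are that the apexes $z_j$ are distinct, so that the forced edits are charged to disjoint edge pairs, and that a fixed edge can never be modified, which holds because $\treshold_\ell(x,y)=\treshold_u(x,y)$ pins its value. This is also why type (iii) never arises for $xy\in\FIXED$.
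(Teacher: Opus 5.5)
Your proof is correct and follows the same argument as the paper: since the fixed edge $xy$ cannot change, each of the $k+1$ bad triangles on $xy$ forces an edit of one of its other two edges, so more than $k$ edits are needed. You also spell out what the paper's one-line proof leaves implicit: the pairs $\{xz_j,yz_j\}$ are pairwise disjoint, types (i) and (ii) each force such an edit, and type (iii) cannot occur for a fixed edge.
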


\begin{proof}
    Since we cannot change the distance of a \fixed edge $xy$, we have to perform at least $k+1$ operations in order to change all bad triangles containing $xy$.
\end{proof}

\begin{leftbar}
    \begin{dtredrule} \label{DTRR:4}
        Suppose that 
        there is  a \normal or \unfix edge $xy\in \NORM\cup \UNFIX$ that is contained in at least $k+1$ bad scalene triangles $xyz_1,\dots xyz_{k+1}$ \wrt $xy$ such that the distances $ \max\{\dist(x,z_1),\dist(y,z_1)\} =\cdots= \max\{\dist(x,z_{k+1}),\dist(y,z_{k+1})\} =M$.
        (In words, the lengths of the longest edges, besides $xy$, of these bad scalene triangles are equal to some number $M$.)
        \begin{itemize}
           \item If either $\treshold_{u}(x,y)<M$ or $\treshold_{\ell}(x,y)>M$ output  a \noinstance.
           \item If $\treshold_{\ell}(x,y) \leq M \leq \treshold_{u}(x,y)$, then move $xy $  to $\FIXED$ by setting  $\dist(x,y)=M$ and changing  $\treshold_{\ell}(x,y) = \treshold_{u}(x,y)=M$.
           Decrease the parameter $k$ by $1$. 
       \end{itemize} 
    \end{dtredrule}
\end{leftbar}

\begin{claim}
    \Cref{DTRR:4} is sound.
\end{claim}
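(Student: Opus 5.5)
The plan is to show that, once we are in the situation of \Cref{DTRR:4} (with \Cref{DTRR:1,DTRR:2,DTRR:0,DTRR:3} not applicable), every solution with at most $k$ modifications must assign the value exactly $M$ to $xy$. Both branches of the rule then follow. If $M\notin[\treshold_\ell(x,y),\treshold_u(x,y)]$, no solution exists and the output \noinstance is correct. Otherwise, fixing $xy$ at $M$ costs one modification, because $xy$ is changed in every solution. We argue this as follows. If $xy\in\UNFIX$, it must be changed anyway. If $xy\in\NORM$, we will show $\dist(x,y)\neq M$, so the value $M$ is a change. The new instance with parameter $k-1$ is then equivalent.

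First, the local structure of each triangle $xyz_j$. Since it is scalene with respect to $xy$, the edges $xz_j$ and $yz_j$ lie in $\NORM\cup\FIXED$ and $\dist(x,z_j)\neq\dist(y,z_j)$, so one of them equals $M$ and the other is strictly smaller than $M$. For the final distance $\dist'$ to satisfy the ultrametric condition on $xyz_j$ while both $xz_j$ and $yz_j$ are unchanged, the two largest sides must be equal. This forces $\dist'(x,y)=M$, because the edge of length $M$ must be matched by $xy$.

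Second, the counting step. Take any solution, with modified set $A$ where $|A|\le k$. Suppose $\dist'(x,y)\neq M$. Then each of the $k+1$ triangles $xyz_j$ needs at least one of $xz_j$ or $yz_j$ in $A$. These pairs are pairwise distinct across different $j$, since the $z_j$ are distinct vertices. So $A$ contains at least $k+1$ edges other than $xy$, a contradiction. Hence $\dist'(x,y)=M$ in every solution, which gives soundness of the \noinstance branch.

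For the fixing branch, it remains to check that $xy$ really is counted as modified. If $xy\in\UNFIX$ this is automatic. If $xy\in\NORM$ and $\dist(x,y)=M$, then $xyz_j$ already satisfies the ultrametric condition. It would also not be bad of type (i), contradicting badness, so $\dist(x,y)\neq M$. Thus every solution of the original instance changes $xy$ to $M$, and with that value it uses at most $k-1$ changes elsewhere. This yields a solution of the reduced instance, in which $xy$ is fixed at $M$ with budget $k-1$. Conversely, a solution of the reduced instance, plus the one change on $xy$, is a solution of the original instance, since the threshold interval for $xy$ shrank to $\{M\}\subseteq[\treshold_\ell,\treshold_u]$.

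The main obstacle I expect is the bookkeeping in this last step. It has to be consistent with the convention that the reduced instance has $\dist(x,y)=M$ already, so that change is no longer counted. It also has to handle the $\UNFIX$ case where the original distance might coincidentally equal $M$. That cannot happen: $M\in[\treshold_\ell,\treshold_u]$ in this branch, while an unfit distance lies outside its threshold interval.
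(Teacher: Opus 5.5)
Your proposal is correct and takes the same route as the paper: the $k+1$ pairwise disjoint pairs $\{xz_j, yz_j\}$ force $\dist'(x,y)=M$ in every solution, and $\dist(x,y)\neq M$ means fixing $xy$ at $M$ costs exactly one unit of budget. Your justification of $\dist(x,y)\neq M$ is more careful than the paper's, which appeals to \Cref{DTRR:2} and never states the unfit case explicitly; you handle it directly, since $M$ lies in the threshold interval and an unfit distance does not.
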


\begin{proof}
    Note that since $xyz_i$ are bad for $i\in\{1,\ldots,k+1\}$,  
    we have to modify at least one distance in each triangle.
    
    If either $\treshold_{u}(x,y)<M$ or $\treshold_{\ell}(x,y)>M$ then we cannot set $\dist(x,y)=M$. In this case, to modify these $k+1$ bad scalene triangles containing $xy$, we need at least $k+1$ operations. Hence, we have a \noinstance.  
    
    Suppose that $\treshold_{\ell}(x,y)\leq  M \leq \treshold_{u}(x,y)$. Because of \Cref{DTRR:2}, there is $i\in\{1,\ldots,k+1\}$ such that $xz_i,yz_i\in \NORM\cup\FIXED$. Thus, $\dist(x,y)\neq M$.   
    Then 
    if we do not set $\dist(x,y)=M$, we have to perform more than $k$ operations to fix the bad triangles containing $xy$.
\end{proof}

After exhaustively applying \Cref{DTRR:4}, we know that in the given instance, there are at most $k$ bad scalene triangles \wrt any edge $xy$ such that the larger side is equal in all of them. The next rule ensures that there are at most $2k$ bad scalene triangles \wrt any edge $xy \in \NORM \cup \UNFIX$.  Then by combining \Cref{DTRR:6} with \Cref{DTRR:3}, we conclude that any edge is contained in at most $2k$ bad scalene triangles.

\begin{leftbar}
    \begin{dtredrule} \label{DTRR:6}
        If there are at least $2k$ bad scalene triangles \wrt a \normal or \unfix edge $xy$, output \noinstance. 
    \end{dtredrule}
\end{leftbar}

\begin{claim}
    \Cref{DTRR:6} is sound.
\end{claim}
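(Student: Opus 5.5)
We need to show: if a normal or unfit edge $xy$ lies in at least $2k$ bad scalene triangles \wrt $xy$, the instance is a \noinstance. The plan is a budget-counting argument. It uses the fact that, after exhaustive application of \Cref{DTRR:4}, for every value $M$ at most $k$ of these scalene triangles have larger side (other than $xy$) equal to $M$.

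Suppose for contradiction that a solution $\dist'$ exists, and let $A$ be its set of modified edges, with $|A|\le k$. Each bad triangle $xyz_i$ must have at least one edge in $A$. First, a triangle of type (ii) has both $xz_i,yz_i\in\UNFIX$. Fixing a scalene triangle $xyz_i$ (with $\dist(x,z_i)\ne\dist(y,z_i)$, the two others in $\NORM\cup\FIXED$) without touching $xz_i$ or $yz_i$ forces $\dist'(x,y)=M_i:=\max\{\dist(x,z_i),\dist(y,z_i)\}$. This holds because in an ultrametric triangle with two unequal sides the third side equals the larger one.

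We then split into two cases. If $xy\notin A$, every one of the $\ge 2k$ bad triangles needs an edge among $xz_i,yz_i$ in $A$. These edges are pairwise distinct across different $i$, since $z_i$ differs, so $|A|\ge 2k>k$ unless $k=0$. If $k=0$ any bad triangle already gives a \noinstance; note also that the hypothesis ``at least $2k$'' with $k=0$ should be read with this in mind, and I will handle that edge case separately. If $xy\in A$, then $\dist'(x,y)$ takes a single value $M$. Every triangle with $M_i\ne M$ still needs an edge $xz_i$ or $yz_i$ in $A$. By the \Cref{DTRR:4} bound, at most $k$ triangles have $M_i=M$, so at least $2k-k=k$ triangles require distinct further edges of $A$. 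This gives $|A|\ge k+1$, a contradiction. Type (ii) triangles are scalene only when their two unfit sides differ. Their edges $xz_i,yz_i$ are unfit and hence in $A$ anyway, so they also count.

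The main obstacle is getting the constants exactly right. The case $xy\in A$ gives $|A|\ge 1+(2k-k)=k+1$ only if the triangles with $M_i\ne M$ really contribute distinct edges. They do, since the $z_i$ are distinct. I would also double-check that bad scalene triangles sharing a common $M$ are counted in \Cref{DTRR:4} regardless of type (i), (ii) or (iii); if type (ii) is excluded there, I would instead bound type (ii) triangles using \Cref{DTRR:2}.
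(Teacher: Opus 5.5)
Your argument is the paper's. If $xy\notin A$, then $xy$ is normal (unfit edges always lie in $A$), and the at least $2k$ bad triangles each need one of their pairwise distinct edges $xz_i,yz_i$ in $A$. If $xy\in A$, its single new value repairs at most $k$ of them, by the non-applicability of \Cref{DTRR:4}, so $|A|\ge k+1$. Your hedging about type (ii) is unnecessary, since the paper's definition of scalene already requires $xz_i,yz_i\in\NORM\cup\FIXED$.

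One correction: the case $k=0$ cannot be ``handled separately.'' There the hypothesis ``at least $2k$'' is vacuous, so the rule would reject yes-instances such as an ultrametric all of whose edges are normal. The rule is therefore only sound for $k\ge 1$, or if stated with ``more than $2k$.'' The paper's proof has the same gap: it silently needs $2k>k$.
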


\begin{proof} 
The only way to fix a bad triangle containing $xy$ by changing $\dist(x,y)$ and not changing the other two edges, is to assign $\dist(x,y)$ the length of the longest edge of this triangle. 
    Because $xy$ belongs to at least $2k$ bad scalene triangles, we need to modify 
    $\dist(x,y)$,    otherwise, we have to spend at least $2k$ modifications on fixing these triangles. 
    Since we cannot apply \Cref{DTRR:4}, each edge $xy \in \NORM \cup \UNFIX$ can be part of at most $k$ bad scalene triangles $xyz_1,\dots xyz_{k}$ \wrt $xy$ such that the distances $ \max\{\dist(x,z_1),\dist(y,z_1)\} =\cdots= \max\{\dist(x,z_{k}),\dist(y,z_{k})\} =M$ and
    $\max\{\dist(x,z_i),\dist(y,z_i)\} \neq \min\{\dist(x,z_i),\dist(y,z_i)\}$ for all $i\in \{1,\dots, k\}$. 
    Thus 
    we cannot fix more than $k$ of these bad triangles by modifying only the distance of $xy$. Therefore, to fix the remaining bad scalene triangles, we still need at least $k$ additional operations, requiring a total of at least $k+1$ operations.
    Hence, this is a \noinstance.
\end{proof}
By \Cref{DTRR:6} and \Cref{DTRR:3}, every edge is contained in at most $2k$ bad scalene triangles. The next reduction rule will be used to bound the number of bad isosceles  triangles 
\wrt an edge, that is, bounding the number of bad isosceles triangle an edge can be part of.

\begin{leftbar}
    \begin{dtredrule} \label{DTRR:7}
        Suppose there are at least $k+1$ bad isosceles triangles 
        $xyz_1,\dots, xyz_{k+1}$ \wrt an edge $xy$, that is, $\dist(x,z_i)=\dist(y,z_i)$ for all $i\in \{1,\dots, k+1\}$, 
        and $xy \in \NORM \cup \UNFIX$. Suppose also that $\dist(x,z_1)\leq \dist(x,z_2)\leq \cdots \leq \dist(x,z_{k+1})$.
        \begin{itemize}
            \item If $\treshold_{\ell}(x,y)> \dist(x,z_{k+1})$, output  a \noinstance.
            \item If $\treshold_{\ell}(x,y) \leq \dist(x,z_{k+1})$, then set $\treshold_{u}(x,y) :=  \dist(x,z_{k+1})$. Do not decrease the parameter $k$.
        \end{itemize}
        
    \end{dtredrule}
\end{leftbar}

\begin{claim}
    \Cref{DTRR:7} is sound.
\end{claim}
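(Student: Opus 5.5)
We need to show Reduction Rule 7 is sound: given at least $k+1$ bad isosceles triangles $xyz_1,\dots,xyz_{k+1}$ with respect to $xy\in\NORM\cup\UNFIX$, sorted by $m_i:=\dist(x,z_i)=\dist(y,z_i)$, every solution must satisfy $\dist'(x,y)\le m_{k+1}$. Both claimed outcomes follow from this. If $\treshold_\ell(x,y)>m_{k+1}$, no solution exists, so we output a \noinstance. Otherwise, lowering $\treshold_u(x,y)$ to $m_{k+1}$ only discards solutions that do not exist, so the instances are equivalent with the same $k$.

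First, note that being a bad isosceles triangle means the current values violate the ultrametric condition with $xy$ the strict maximum. For $xy\in\NORM$ (type (i), isosceles case), $\dist(x,y)>m_i$. For $xy\in\UNFIX$ (type (iiiB)), $\treshold_u(x,y)>m_i$, and the two other edges are in $\NORM\cup\FIXED$. In both cases the other two edges $xz_i, yz_i$ are equal, of value $m_i$.

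Second, fix a solution $\dist'$ with at most $k$ modifications, and suppose for contradiction that $\dist'(x,y)>m_{k+1}\ge m_i$ for all $i$. Then in each triangle $xyz_i$ the value of $xy$ is a strict maximum unless one of $xz_i,yz_i$ is raised. Since the ultrametric condition requires the two largest values to be equal, $\dist'$ must modify $xz_i$ or $yz_i$ for every $i\in\{1,\dots,k+1\}$. The $z_i$ are distinct, so these edges are pairwise distinct across $i$. This forces at least $k+1$ modifications, contradicting the budget.

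Third, we must make sure that modifying $xy$ itself does not rescue the count. It does not: the contradiction already required $k+1$ distinct modifications on edges other than $xy$. If $xy$ is also modified (which it must be when $xy\in\UNFIX$), the count only grows. Hence every solution has $\dist'(x,y)\le m_{k+1}$, which gives both bullets of the rule.

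The main subtlety is checking that the isosceles definition indeed gives $\dist(x,z_i)=\dist(y,z_i)$ with both edges in $\NORM\cup\FIXED$ for each $z_i$, as assumed in the rule's hypothesis. The counting argument itself is routine, and the ordering of the $m_i$ matters only in that $m_{k+1}$ is the maximum.
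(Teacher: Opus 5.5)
Your proof is correct and follows essentially the same argument as the paper. A solution with $\dist'(x,y) > \dist(x,z_{k+1})$ would have to modify $xz_i$ or $yz_i$ in each of the $k+1$ triangles, these edges are pairwise distinct, so the budget is exceeded and both bullets of the rule follow. You spell out the edge-distinctness and the irrelevance of modifying $xy$ more carefully than the paper does. One small inaccuracy: your opening claim that $xy$ is currently the strict maximum need not hold for $xy \in \UNFIX$, since $\dist(x,y)$ may lie below $\treshold_\ell(x,y)$. Your counting never uses the current value of $\dist(x,y)$, so nothing is affected.
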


\begin{proof}
In every solution, the $\dist(x,y)$ should be at most   $\dist(x,z_{k+1})$ because, otherwise, we need more than $k$ operations to fix at least $k+1$ bad isosceles triangles \wrt edge $xy$ to make an ultrametric. Thus 
    if $\treshold_{\ell}(x,y) > \dist(x,z_{k+1})$, we conclude that we have a \noinstance. 
    Otherwise, because $\dist(x,y)\leq \dist(x,z_{k+1})$  in any solution, we 
    can set  $\treshold_{u}(x,y) :=  \dist(x,z_{k+1})$; note that because $xyz_{k+1}$ is a bad isosceles triangle $\treshold_{u}(x,y)>\dist(x,z_{k+1})$ initially.   
\end{proof}

\begin{claim} \label{DTRR:bound-isos-tri}
    If \Cref{DTRR:7} is no longer applicable, then there are at most $k$ bad isosceles triangle \wrt any edge $xy \in \UNFIX \cup \NORM$.
\end{claim}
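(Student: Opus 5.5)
We argue by contradiction. Suppose \Cref{DTRR:7} is not applicable, yet some edge $xy \in \UNFIX \cup \NORM$ lies in at least $k+1$ bad isosceles triangles $xyz_1,\dots,xyz_{k+1}$ \wrt $xy$. Order them so that $\dist(x,z_1)\le \cdots \le \dist(x,z_{k+1})$. We will show that the hypotheses of \Cref{DTRR:7} then hold, so the rule either outputs a \noinstance or tightens $\treshold_u(x,y)$. The content of the claim is that, once the tightened threshold is in place, the rule cannot fire again: applying it exhaustively really does leave at most $k$ bad isosceles triangles on $xy$.

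First we pin down what a bad isosceles triangle \wrt $xy$ means in each status of $xy$. By \Cref{def:badequlatdt}, the other two sides satisfy $\dist(x,z_i)=\dist(y,z_i)=:m_i$ and both lie in $\NORM\cup\FIXED$.
\begin{itemize}
\item If $xy\in\UNFIX$, the triangle is bad of type (iiiB), which means exactly $\treshold_u(x,y)>m_i$.
\item If $xy\in\NORM$, the triangle is of type (i). An isosceles triangle violates the ultrametric condition only when its base is strictly longest, so $\dist(x,y)>m_i$. Since $xy\in\NORM$, also $\dist(x,y)\le\treshold_u(x,y)$, and hence $\treshold_u(x,y)>m_i$.
\end{itemize}
In both cases every bad isosceles triangle on $xy$ satisfies $m_i<\treshold_u(x,y)$.

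Next we apply this to the largest value. We get $m_{k+1}<\treshold_u(x,y)$, so $\treshold_u(x,y)$ is strictly larger than $\dist(x,z_{k+1})$, and \Cref{DTRR:7} is applicable with these $k+1$ triangles. Its first branch outputs a \noinstance. Its second branch sets $\treshold_u(x,y):=m_{k+1}$, which strictly decreases the upper threshold. This contradicts non-applicability, so at most $k$ such triangles can remain.

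The subtle step is confirming that applying the rule makes progress, so that "no longer applicable" is eventually reached. After tightening, the status of $xy$ may change from $\NORM$ to $\UNFIX$ if $\dist(x,y)>m_{k+1}$. The status may also become $\FIXED$, or trigger \Cref{DTRR:1} or \Cref{DTRR:0}, if $\treshold_\ell(x,y)=m_{k+1}$. In the $\UNFIX$ case, the triangles bad \wrt $xy$ are now those of type (iiiB) with $m_i<m_{k+1}$. This excludes every $z_i$ with $m_i=m_{k+1}$, in particular $z_{k+1}$. Each application therefore strictly lowers $\treshold_u(x,y)$ to a value in the finite set $\Dvalues$, so termination in polynomial time follows.

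Finally, we must check that triangles of type (ii), where both other edges are \unfix, are not counted as isosceles. This holds because \Cref{def:badequlatdt} only classifies triangles whose sides $yz,zx$ are in $\FIXED\cup\NORM$.
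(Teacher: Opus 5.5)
Your proof is correct, and its core is exactly the paper's one-line argument: having $k+1$ bad isosceles triangles on an edge $xy\in\UNFIX\cup\NORM$ is precisely the hypothesis of \Cref{DTRR:7}, which contradicts its non-applicability. The check that $\treshold_u(x,y)>\dist(x,z_{k+1})$ (which the paper records in the soundness proof of \Cref{DTRR:7}) and the termination discussion are not needed for the claim itself; also, the rule can fire again on the same edge, so what you actually show is that each firing strictly lowers $\treshold_u(x,y)$, not that it ``cannot fire again''.
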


\begin{proof}
    Let edge $xy \in \UNFIX\cup \NORM$.  
    If there were at least $k+1$ bad isosceles triangles $xyz_1,\dots, xyz_{k+1}$ \wrt  $xy$, that is $\dist(x,z_i)=\dist(y,z_i)$ for all $i\in \{1,\dots, k+1\}$ and also that $\dist(x,z_1)\leq \dist(x,z_2)\leq \cdots \leq \dist(x,z_{k+1})$, then 
    \Cref{DTRR:7} would be applicable contradicting the assumption.  
\end{proof}

In the previous reduction rules we either changed the weight of an edge, changed its upper threshold or declared that the given instance is a \noinstance. Now we have set premise for \Cref{DTRR:10} which reduces the size of the given instance by removing an ``irrelevant''  vertex  which does not belong to any bad triangle \wrt any edge.

\input{Figures/irrelevant-vertex}

\begin{leftbar}
    \begin{dtredrule}[Irrelevant Vertex Removal] \label{DTRR:10}
        Let $u$ be a vertex  such that no \unfix edge is incident with it and there is no edge $xy$ such that $T = uxy$ is a bad triangle \wrt $xy$.          Delete $u$.  
    \end{dtredrule}
\end{leftbar}

\begin{claim}\label{claim:DTRR:10} 
    \Cref{DTRR:10} is sound.
\end{claim}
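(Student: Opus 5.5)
We follow the layering strategy sketched in the overview. Fix the vertex $u$ of \Cref{DTRR:10}. Let $d_1<\cdots<d_\ell$ be the distinct values of $\dist(u,x)$ for $x\neq u$, and let $X_i=\{x : \dist(u,x)=d_i\}$. Every edge $ux$ lies in $\NORM\cup\FIXED$. The forward direction is immediate: restricting a solution on $X$ to $X\setminus\{u\}$ is still an ultrametric, it satisfies the thresholds, and it uses no more modifications. All the work is therefore in the reverse direction.

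\textbf{Step 1: structural consequences of ``no bad triangle through $u$''.} We check three properties, each by a short case analysis on the status of the edges of $uxy$.
\begin{itemize}
\item[(a)] An edge $xy$ with $x\in X_i$, $y\in X_j$, $i\neq j$, is not in $\UNFIX$. Otherwise $uxy$ would be bad \wrt $xy$ of type (iiiA), since $\dist(u,x)\neq\dist(u,y)$.
\item[(b)] For such an edge with $i<j$, we have $\dist(x,y)=d_j$. By (a) the edge $xy$ is in $\NORM\cup\FIXED$. If $\dist(x,y)\neq d_j$, the triangle $uxy$ violates the ultrametric condition: either $\dist(x,y)>d_j$, or $\dist(x,y)<d_j$ so that $uy$ is the unique longest edge. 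The triangle would then be of type (i) with respect to some edge. To rule this out we need badness \wrt the offending edge; the rule's hypothesis is quantified over all edges $xy$ of triangles $uxy$, and we would apply it to the pair not containing $u$. If the offending edge contains $u$, we argue that a type (i) violation is symmetric in the edge chosen, because conditions (i) only require all three edges in $\NORM\cup\FIXED$.
\item[(c)] For $x,y\in X_i$, either $\dist(x,y)\le d_i$, or $xy\in\UNFIX$ with $\tau_u(x,y)\le d_i$. Otherwise we would get type (i), or type (iiiB) \wrt $xy$.
\end{itemize}

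\textbf{Step 2: decomposing a solution on $X\setminus\{u\}$.} Take a solution $\dist^*$ for $X\setminus\{u\}$, and let $\dist_i$ be its restriction to $X_i$. The restriction is an ultrametric meeting the thresholds, and the sets of edges modified inside the different $X_i$ are disjoint, so $\sum k_i\le k$. Next we apply \Cref{lemma:truncation} with $\lambda=d_i$, replacing $\dist_i$ by $\min\{\dist_i,d_i\}$. We must check that this keeps the thresholds and does not increase the number of edits. If $\dist_i(x,y)>d_i$, then by (c) either $xy\in\UNFIX$, in which case $\tau_u\le d_i$ contradicts feasibility, or $\dist(x,y)\le d_i$, so $xy$ was already modified. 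The truncated value $d_i$ lies in $[\tau_\ell,\tau_u]$: it is at least $\dist(x,y)\ge\tau_\ell(x,y)$, and it is at most $\dist_i(x,y)\le\tau_u(x,y)$. The count of modified edges is unchanged, except possibly that an edit becomes unnecessary, which only decreases it.

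\textbf{Step 3: gluing and verification.} Define $\dist'$ on $X$ by keeping all edges incident to $u$ and all cross-layer edges at their original values, and using the truncated $\dist_i$ inside each $X_i$. By (a), no unchanged edge is unfit, so all thresholds hold. Finally we check the ultrametric condition for every type of triple.
\begin{itemize}
\item Three vertices in one layer: inherited from $\dist_i$.
\item Three vertices in three distinct layers $i<j<k$: the distances are $d_j,d_k,d_k$ by (b).
\item Two vertices in $X_i$ and one in $X_j$: the two cross edges both equal $d_{\max(i,j)}$, and the inner edge is at most $d_i$.
\item Triples containing $u$: handled by (b) and by the truncation.
\end{itemize}

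The main obstacle is Step 1(b) and (c): we must verify carefully that every way $uxy$ can fail the ultrametric condition is caught by one of the bad-triangle types \wrt an admissible edge. The type (iiiB) and threshold subtleties in (c) are where the exceptional ``$\UNFIX$ with small upper threshold'' case arises.
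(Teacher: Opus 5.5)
Your proposal is correct and matches the paper's proof essentially step for step. It uses the same layering by $\dist(u,\cdot)$ and the same three structural properties: no \unfix cross edges (via (iiiA)), cross distances equal to $d_j$, and the within-layer bound (via (i)/(iiiB)). It then truncates with \Cref{lemma:truncation} at $\lambda=d_i$ and finishes with the same gluing and triple-by-triple check.

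One small point: Step~2 relies on the stronger form of (c) that your derivation actually gives, namely that every \unfix edge inside $X_i$ has $\treshold_u\le d_i$ regardless of its current value. That is exactly what keeps the truncated value within the thresholds, so you should state (c) in that form.
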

 
\begin{proof}
Suppose that in instance $\Xcal=(X, \dist, \treshold_{\ell}, \treshold_{u})$  there is  a vertex  $u\in X$  satisfying the conditions of the reduction rule. 
 
    Let $d_1 < d_2 < \cdots < d_\ell$ be the distinct distances of the edges incident with $u$. We partition the vertex set $X \setminus \{u\}$ into $\ell$ sets $X_1, X_2, \dots, X_\ell$ based on their distances from $u$. More specifically,  
    $
        X_i = \{x\in X \setminus \{u\} \mid \dist(u, x) = d_i \}
    $
    (see \Cref{fig:irrelevant-vertex}).
    
    Since $u$ does not belong to any bad triangle \wrt any edge $xy$, no edge between the sets $X_i$ and $X_j$ for $i \neq j$ can be in $\UNFIX$ because otherwise we would have a bad triangle \wrt edge $xy$ of type $(iii A)$.  
    Moreover, for any two vertex s $x, y$ from different sets $X_i$ and $X_j$ with $i < j$, the distance satisfies $\dist(x, y) = d_j$; otherwise, $uxy$ would form a bad triangle \wrt edge $xy$.  
    Additionally, for any two vertices $x, y$ in the same set $X_i$, since $uxy$ is not a bad triangle \wrt edge $xy$, we have either $\dist(x, y) \leq d_i$ or $xy \in \UNFIX$ with $\treshold_{u}(x,y) \leq d_i$---otherwise we would have a bad triangle \wrt $xy$ of type $(i)$ or type $(iii)$, respectively.

    We are ready to prove the  equivalence:   $( \Xcal,k)$ is a \yesinstance if and only if the reduced instance  $( \Xcal'=(X \backslash \{u\}, \dist, \treshold_{\ell}, \treshold_{u}),k)$ is also a \yesinstance. The only if direction is trivial as we might require fewer changes in the reduced  instance.

    For the reverse direction, if $(\Xcal',k)$ is a \yesinstance, then it can be transformed by at most $k$ operations in an ultrametric.
    Since $X\setminus \{u\}=X_1\cup \cdots \cup X_\ell$, there are nonnegative integers $k_1, \dots, k_\ell$ such that $k_1+k_2+\cdots +k_\ell \leq k$, and each of the subproblems induced by points $X_i$ could be transformed into an ultrametric  $(X_i, \dist_i)$ by at most $k_i$ operations. We claim that the union of all these operations would result in an ultrametric within $k$ operations from $( \Xcal,k)$. 
 
    For $x,y \in X_i$, since $uxy$ is not a bad triangle \wrt $xy$, we have that if $xy \in \UNFIX$ then $\tau_{u}(x,y) \leq d_i$.
    Also, if $xy \in \FIXED$,  we have  that $\dist(x,y) \leq d_i$. However, it could happen that for $xy \in \NORM$, we could have $\treshold_{u}(x,y) > d_i$ but then $\treshold_{\ell}(x,y) \leq \dist(x,y) \leq d_i$ as $xy \in \NORM$.
    Thus, there could exist a solution for $X_i$ such that for some edges their weight exceeds $d_i$.
    Additionally, observe that only edges in $\NORM$ can have their weight exceeding $d_i$.
    By setting $\lambda = d_i$ in \Cref{lemma:truncation} we can convert a solution where the weights of edges exceed $d_i$ to a solution where the weights of edges are at most $d_i$ in the same budget.

       Thus, we can assume that in every  $(X_i, \dist_i)$, the maximum value of $\dist_i$ does not exceed $d_i$. We define the following distance $\dist'$ on $X$. 
    For all $x\in X$, $\dist'(u,x)=\dist(u,x)$. For every $x$ and $y$ from distinct $X_i$ and $X_j$, $\dist'(y,x)=\dist(y,x)$.
    For every $x,y\in X_i$, $1\leq i\leq \ell$, $\dist'(y,x)=\dist_i(y,x)$.
    Thus $(X,\dist')$ is obtained from $ \Xcal$ by at most  $k_1+k_2+\cdots +k_\ell \leq k$ operations.  What remains is to check that  $(X,\dist')$ is an ultrametric. 

    For vertices   $x,y,z\in X$ we consider the following cases. 
    \begin{itemize}
        \item  $x,y,z\in X_i$ for some $i$.  Since  $(X_i, \dist_i)$ is ultrametric,  $x,y,z$ satisfy the ultrametric condition in $\dist'$ too.
        \item  $x\in X_i$, $y\in X_j$, $z\in X_k$, $i<j<k$. Since we did not change the distances between these vertices, we have $\dist'(x,y) = d_j$ and $\dist'(x,z)= \dist'(y,z) = d_k > d_j$ satisfying the ultrametric conditions. 
        \item  $x,y\in X_i$, $z\in X_j$, $i<j$ or $i>j$. If $i>j$ then $\dist'(x,z) = \dist'(y,z) = d_i$ and $\dist'(x,y) \leq d_i$, satisfying the ultrametric condition.   If $i<j$ then $\dist'(x,z) = \dist'(y,z) = d_j$ and $\dist'(x,y) \leq d_i < d_j$, satisfying the ultrametric condition. In either case, we do not violate the ultrametric condition.      
        \item  $x \in X_i$, $y\in X_j$, $i<j$, and $z=u$. We have $\dist'(x,u) = d_i$ and $\dist'(y,u) = d_j$ Since we did not modify $xy$, we have that $\dist'(x,y) = d_j$ and $d_i < d_j$, this triangle satisfies the ultrametric condition.
        \item  $x,y \in X_i$, $z=u$. Since $\dist'_i \leq d_i$, and $\dist'(x,u) = \dist'(y,u) = d_i$, the triangle $xyu$ satisfy the ultrametric condition in $\dist'$. (Here $\dist'_i$ is $\dist$ restrict to points in $X_i$)
    \end{itemize}

    This concludes the proof that $(X,\dist')$ is an ultrametric and the proof of \Cref{claim:DTRR:10}.
\end{proof}

We are ready to conclude the proof of the theorem. We will bound the number of different types of bad triangles according to \Cref{def:badtriangledt} and consequently bound the number of vertices in the reduced instance.

\begin{leftbar}
    \begin{dtredrule} \label{DTRR:final}
     If an irreducible instance (that is, none of the previous reduction rules applies to this instance) has more than $3k^2 + 2k$ vertices, then output that this is  a \noinstance. 
    \end{dtredrule}
\end{leftbar}

\begin{claim}
    \Cref{DTRR:final} is sound.
\end{claim}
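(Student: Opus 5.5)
We prove soundness in the contrapositive form: if the irreducible instance $((X,\dist,\treshold_\ell,\treshold_u),k)$ is a \yesinstance, then $|X|\le 3k^2+2k$. Fix a solution, and let $A$ be the set of at most $k$ edges it modifies. By \Cref{DTRR:2,DTRR:0}, every \unfix edge belongs to $A$. Since \Cref{DTRR:10} is not applicable, each vertex $u\in X$ is either incident to an \unfix edge or is the apex of a triangle $uxy$ that is bad \wrt some edge $xy$. The plan is to charge every vertex to an edge of $A$ and to show that each edge of $A$ receives only $3k+2$ charges. Since $|A|\le k$, this gives $|X|\le k(3k+2)=3k^2+2k$.

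\emph{Charging.} A vertex incident to an \unfix edge is charged to that edge, which lies in $A$. Each \unfix edge thus receives its two endpoints. Now let $u$ be the apex of a bad triangle $T=uxy$ \wrt $xy$. Every bad triangle must be modified by the solution: types (i) and (iii) fail the ultrametric or threshold conditions as they stand, and type (ii) contains two \unfix edges. Hence some edge of $T$ lies in $A$.
\begin{itemize}
\item If $xy\in A$, charge $u$ to $xy$. This uses a bad triangle \wrt an edge of $A$.
\item Otherwise, charge $u$ to an edge $ux$ or $uy$ that lies in $A$. Here $u$ is an endpoint of that edge.
\end{itemize}
So an edge $e\in A$ receives at most its two endpoints, plus the apexes of the bad triangles \wrt $e$.

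\emph{Bounding the apexes.} By \Cref{DTRR:3}, \Cref{DTRR:6} and \Cref{DTRR:bound-isos-tri}, an edge $e\in\NORM\cup\UNFIX$ lies in fewer than $2k$ bad scalene triangles and at most $k$ bad isosceles triangles \wrt $e$. A \fixed edge lies in at most $k$ bad triangles \wrt it. Type (ii) triangles \wrt $e$ are not covered by this classification. However, each such triangle has both of its other edges \unfix, and there are at most $k$ \unfix edges. So there are at most $k/2$ such apexes, and at most $k$ even counted crudely. Adding up, $e$ receives fewer than $2k+k+k+2$ charges. This is slightly above the target, so the count has to be sharpened.

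The main obstacle is this sharpening: reaching exactly $3k+2$ per edge, and making sure the three categories of bad triangles are not double counted. For type (ii) apexes I will try to absorb them into the endpoint charges of the \unfix edges. Such an apex $z$ is already incident to \unfix edges $xz$ and $yz$, so it is covered by the first charging rule and need not be charged to $xy$ at all. With type (ii) removed from the per-edge count, each $e\in A$ receives at most $2$ endpoints, fewer than $2k$ scalene apexes and at most $k$ isosceles apexes, which is at most $3k+2$ vertices. The bound $|X|\le 3k^2+2k$ follows, and exceeding it certifies a \noinstance.
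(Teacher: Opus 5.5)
Your proposal is correct and follows the paper's own argument. You fix a solution edge set $A\supseteq\UNFIX$ and charge every vertex either as an endpoint of an edge of $A$ or as the apex of a type (i)/(iii) bad triangle with respect to an edge of $A$; \Cref{DTRR:6} and \Cref{DTRR:bound-isos-tri} then give at most $3k+2$ vertices per edge, with type (ii) vertices absorbed into the endpoints of unfit edges as in the paper. You also make explicit a step the paper leaves implicit in its ``$+2$'' term: when the base edge of a bad triangle lies outside $A$, its apex is an endpoint of an edge of $A$.
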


 \begin{proof}

Let $(\Xcal= (X, \dist, \treshold_{\ell}, \treshold_{u}), k)$ be an irreducible  \yesinstance instance of \Cuvd.   Then there is a set  $A$  of at most $k$ edges whose distance modification transforms $\Xcal$  into an ultrametric. Note that by \Cref{DTRR:0}, $A$ contains all edges of $\UNFIX$ and, possible,  some edges of $\NORM$. Also, by \Cref{DTRR:1} we know that for each edge $xy$, we have $\treshold_{\ell} \leq \tau_{u}$.

Let $xy \in \NORM$ be an edge of $A$. By \Cref{DTRR:6}, there are at most $2k$ bad scalene triangles \wrt edge $xy$ and, by \Cref{DTRR:bound-isos-tri}, there are at most $k$ bad isosceles triangles \wrt $xy$. Thus, there are at most $3k$ bad triangles of type $(i)$ \wrt every \normal edge in $A$. Thus, the number of vertices in $X$ which can be part of a bad triangle of type $(i)$ do not exceed $|A \cap \NORM| \cdot (3k+2)$.

Let $xy \in \UNFIX$ be an edge of $A$, by \Cref{DTRR:6} there are at most $2k$ bad scalene triangles \wrt $xy$ and, by \Cref{DTRR:bound-isos-tri} there are at most $k$ bad isosceles triangles \wrt $xy$. Thus, there are at most $3k$ bad triangles of type $(iii)$ \wrt every \unfix edge in $A$. 
Hence the number of vertices in $X$ which belong to bad triangles of type $(iii)$ is at most $|\UNFIX|\cdot(3k+2)$.

For any bad triangle of type $(ii)$, each vertex  of such a triangle is incident with an \unfix edge. Thus, the number of such vertices can be at most $2k$ by \Cref{DTRR:2}. Note that these vertices are already taken into account while bounding the number of vertices in $X$ belonging to bad triangles of type $(iii)$.

Finally, by \Cref{DTRR:10}, each vertex  of $X$  belongs to a bad triangle. However, as we have seen, the number of vertices belonging to bad triangles is at most 
\[
|A\cap \NORM| \cdot (3k+2) +|\UNFIX|\cdot (3k+2)\leq 3k^2+2k.
\]
We conclude that every irreducible  \yesinstance instance of \Cuvd contains at most  $3k^2+2k$ vertices.
\end{proof}

Note that, at this point, we have an instance of \Cuvd{} with a bounded number of vertices. However, the values of the distances and thresholds may still be unbounded in terms of the parameter $k$. Therefore, we do not yet have a polynomial kernel.

Our next goal is to scale down the values of the distances and thresholds so that they are bounded by some function of $k$.

We sort all values in the set \(\Dvalues \cup \Treshold_{u} \cup \Treshold_{\ell} \) in increasing order:
\[
d_1 < d_2 < \cdots < d_L.
\]
We then define a scaling function \( f : \Dvalues \cup \Treshold_{u} \cup \Treshold_{\ell} \rightarrow \{1, 2, \ldots, L\} \) by setting \( f(d_i) = i \) for each $i$.

We now argue that scaling the distances and thresholds in $\Xcal$ using the function $f$ results in an equivalent instance. More formally,

\begin{claim}
    \label{claim:scaling-weights}
    $ \Xcal = (X, \dist, \treshold_{\ell}, \treshold_{u})$  is a \yesinstance if and only if the instance $ {\Xcal}_f$ obtained by replacing $\dist(x,y)$ with $f(\dist(x,y))$, 
    $\treshold_{u}(x,y)$ with $f(\treshold_{u}(x,y))$, and $\treshold_{\ell}(x,y)$ with $f(\treshold_{\ell}(x,y))$,
    is a \yesinstance.
\end{claim}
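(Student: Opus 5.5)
The argument rests on the fact that ultrametricity, the threshold constraints, and the count of modified entries depend only on the relative order of the values involved. Combined with \Cref{corollary:tight-solution}, this lets us restrict to solutions with values in $\Dvalues \cup \Treshold_{u} \cup \Treshold_{\ell}$, where $f$ is a bijection.

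For the forward direction, assume $\Xcal$ is a \yesinstance. By \Cref{corollary:tight-solution} there is a solution $\dist'$ with $\|D-D'\|_0\le 2k$ that satisfies the threshold conditions and takes every value in $\Dvalues \cup \Treshold_{u} \cup \Treshold_{\ell}$. I then define $\dist'_f = f\circ \dist'$. Since $f$ is strictly increasing, it preserves every inequality between two values in its domain, as well as every equality. The checks go as follows.
\begin{enumerate}
\item Ultrametricity transfers: $\dist'(x,z)\le\max\{\dist'(x,y),\dist'(y,z)\}$ becomes $f(\dist'(x,z))\le \max\{f(\dist'(x,y)),f(\dist'(y,z))\}$, because $f$ commutes with $\max$.
\item The thresholds transfer: $\treshold_\ell(x,y)\le \dist'(x,y)\le\treshold_u(x,y)$ becomes the same chain after applying $f$.
\item The modification count is unchanged: $\dist'(x,y)\neq\dist(x,y)$ holds if and only if $f(\dist'(x,y))\neq f(\dist(x,y))$, because $f$ is injective.
\item The diagonal stays correct. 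If $0$ occurs among the values (it does, since $\dist(x,x)=0$), then $f(0)=1$. To keep $\dist(x,x)=0$, I use $f(d_i)=i-1$ when $d_1=0$, or simply treat diagonal entries separately. This shift does not affect any order argument.
\end{enumerate}

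For the reverse direction, I take a solution of $\Xcal_f$. Applying \Cref{corollary:tight-solution} to $\Xcal_f$ gives a solution whose values lie in the image set $\{1,\dots,L\}$ (shifted as above). I then apply $f^{-1}$ entrywise, and the same four checks hold, with $f^{-1}$ strictly increasing on its domain.

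The only delicate point is the need to invoke \Cref{corollary:tight-solution}. An arbitrary solution may use real values outside the domain of $f$, and $f$ is not defined there. Using the corollary in both directions removes this issue. The boundedness of the final kernel then follows: with $|X|\le 3k^2+2k$, we have $L\le 3|X|^2=\cO(k^4)$, so every value needs $\cO(\log k)$ bits.
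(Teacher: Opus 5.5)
Your proof is correct and follows the paper's argument. The map $f$ is a strictly increasing bijection onto $\{1,\dots,L\}$, so it preserves ultrametricity, threshold feasibility and the set of modified entries, and \Cref{corollary:tight-solution} ensures a solution whose values lie in the domain of $f$. You are slightly more careful than the paper: you justify the reverse direction explicitly by applying \Cref{corollary:tight-solution} to $\Xcal_f$ and pulling back via $f^{-1}$, and you notice that $f(0)=1$ would otherwise break the convention $\dist(x,x)=0$.
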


\begin{proof}
    Observe that the function $f$ is a strictly increasing function.
    We argue that a triangle satisfies the ultrametric condition in $\Xcal$ if and only if it satisfies the ultrametric condition in ${\Xcal}_f$. For a triangle $T=xyz$ in $\Xcal$ satisfying the ultrametric condition, the following two cases can occur:  
    
 \smallskip\noindent\textbf{Case 1.} $\dist(x,y) = \dist(y,z) = \dist(x,z)$. 
    Because $f$ is a strictly increasing, we have that $x=y$ if and only if $f(x) = f(y)$. Therefore, in ${\Xcal}_f$ we have $f(\dist(x,y)) = f(\dist(y,z)) = f(\dist(x,z))$ satisfying the ultrametric condition. 
    
    \smallskip\noindent\textbf{Case 2.} $\dist(x,y) = \dist(y,z) > \dist(x,z)$.
    Since $f$ is a strictly increasing function, we have that $x>y$ if and only if $f(x) > f(y)$. Therefore, in ${\Xcal}_f$ we have $f(\dist(x,y)) = f(\dist(y,z)) > f(\dist(x,z))$ satisfying the ultrametric condition. 
   Similarly,  the reverse direction also follows from the monotonicity of $f$.
   
  Also for any $x,y\in X$,  $\treshold_{\ell}(x,y)\leq \dist(x,y)\leq \treshold_{u}(x,y)$ if and only if  $f(\treshold_{\ell}(x,y))\leq f(\dist(x,y))\leq f(\treshold_{u}(x,y))$.

    By \Cref{corollary:tight-solution}, we know that if a given instance $\Xcal$ with corresponding distance matrix $D$ is a \yesinstance then it can be transformed into ultrametric space $\Xcal' = \{X,\dist'\}$
    by performing at most $k$ changes such that the values of   $ \dist'$ are only in $\Dvalues \cup \Treshold_{u} \cup \Treshold_{\ell}$. 

    Let $D_f$ be the distance matric corresponding to ${\Xcal}_f$ and $D_f'$ be the distance matrix obtained by applying the function $f$ on every entry of distance matrix $D'$ of $\Xcal'$.  Since $D'$  is an ultrametric, we have that $D_f'$ is also an ultrametric. Finally, 
 matrices $D$ and $D'$ differ in $k$ entries if and only if  $D_f$ and $D_f'$  differ in $k$ entries. It means that  $\Xcal$ an \yesinstance if and only if $\Xcal_f$ is.
\end{proof}

We have shown that by applying  reduction rules exhaustively, for a given instance $(\Xcal= (X, \dist, \treshold_{\ell}, \treshold_{u}), k)$   of \Cuvd, we either construct an equivalent instance on at most  $3k^2+2k$ vertices or correctly conclude that  $\Xcal$ is a \noinstance. 
Thus, the number of vertices in an instance is atmost $3k^2+2k$ making the total number of edges at most $(3k^2+2k)^2 = \Ocal(k^4)$.  By applying the scaling procedure of \Cref{claim:scaling-weights}, we obtain an equivalent instance with maximum distance at most  $L =|\Dvalues \cup \Treshold_{u} \cup \Treshold_{\ell}|=\Ocal(k^4)$. 
Thus, for each pair of vertices we need $\Ocal(\log k)$ bits to encode the distance between them. Hence we have kernel of size $\Ocal(k^4\log{k})$.

Because the total number of triangles is $\binom{|X|}{3}$, we have that the reduction rules can be executed in polynomial time.
This concludes the proof of the theorem.
\end{proof}


    Let us remark that \Cref{theorem:kernelconstrultr} yields an algorithm solving \Cuvd in time $k^{\cO(k)}\cdot |X|^{\cO(1)}$. 

    Indeed, after constructing an equivalent instance $(\Xcal= (X, \dist, \treshold_{\ell}, \treshold_{u}), k)$   of \Cuvd on $3k^2+2k$ vertices, we guess at most $k$ edges that has to be changed in order to turn $\Xcal$ into ultrametric. Of course, each of these guesses contains all edges of $\UNFIX$ by default. Since $\Xcal$ contains $\cO(k^4)$ edges, we make  $k^{\cO(k)}$ guesses. For each of the guesses, 
    it is possible to identify in polynomial time (see  \Cref{lemma:poly-if-known-edges}) whether it is possible to transform $\Xcal$ into an ultrametric by changing the distances of the guessed edges. 
 
In the next section, we  provide a faster branching algorithm.

\section{Single-exponential algorithm for \cuvd}\label{sec:branching}



    In this section, we present a single-exponential FPT algorithm for \Cuvd{} with running time
    $
      \Ocal\big(9^k \cdot k \cdot |X|^{2}\big).
    $
    The algorithm follows a \emph{divisive} paradigm.  At each recursive step, we solve a suitable variant of the \textsc{Cluster Editing} problem.  Unlike the standard branching scheme for \textsc{Cluster Editing}, however, the weights of certain edges may be altered multiple times during the computation.  This peculiarity complicates both the branching rules and the analysis of the overall running time.

    Before proceeding with the algorithm, we need a few auxiliary results. 
    For a given instance $(\Xcal = (X, \dist, \treshold_{\ell}, \treshold_{u}),\ k)$ of \Cuvd and the corresponding distance matrix $D$, we sort the distinct entries in $D$ together with the values from $\Treshold_{\ell}$ and $\Treshold_{u}$ as $d_1 < d_2 < \cdots < d_L$, where $L = |\Dvalues \cup \Treshold_{\ell} \cup \Treshold_{u}| = \mathcal{O}(|X|^2)$.
    By \Cref{corollary:tight-solution}, we can restrict our search to solutions with distances taking values only from $\Dvalues \cup \Treshold_{\ell} \cup \Treshold_{u}$.
    In the algorithm, we consider a set of \emph{marked} edges $M$, which is the set of edges whose weight we have decided to change but are unsure of what weights to assign.
    We will explain further in detail how we use the set of marked edges in our algorithm.
    We refer to each index $i \in \{1, \dots, L\}$ as a \emph{level}.
   
    We proceed in the algorithm from the topmost level $L$ down to smaller levels. 
    At each level, we solve the \textsc{Cluster Editing} problem on the corresponding auxiliary graph which we define as follows.

    \begin{definition} \label{auxiliary-graph}
        (Auxiliary graph $G_i(\dist,M)$) Given a distance function $\dist$ on $X$ and a set of marked edges $M$ with their weights in $\Dvalues \cup \Treshold_{\ell} \cup \Treshold_{u}$, the auxiliary graph $G_i(\dist,M)$ with respect to level $i\in\{1,\dots, L\}$ is a graph on the vertex set $X$. The edge set of $G_i(\dist,M)$ is 
        \[ \{xy \mid \dist(x,y) < d_i, x\neq y, x,y \in X \} \cup M.\]
    \end{definition}
    For the level $L$, $\dist$ is the input distance function and $M$ is the set of \unfix edges $\UNFIX$ (edges whose current weight do not satisfy the threshold constraints). However, in the recursive calls of the algorithm, the distance function will be changing. 
    We may write $G_i$ instead of $G_i(\dist,M)$ if the distance function is clear from the context.

    Let us start with the graph $G_L=G_L(\dist, \UNFIX)$. Two observations are in order.
    First, if $\mathcal{X}$ is a \yesinstance{}, then by \Cref{corollary:tight-solution}, there exists a solution in which the maximum distance between vertices does not exceed $d_L$.
    Also, the solution ultrametric space $(X,\dist')$ should not contain any marked edges.
    Second, the auxiliary graph $G'_L=G_L(\dist',\emptyset)$ corresponding to the resulting ultrametric $(X, \dist')$ must be a cluster graph, that is, a disjoint union of cliques. Indeed, if $G'_L$ contains a connected component that is not a clique, then it must contain an induced path $P_3$ on three vertices. For example, suppose $x, y, z$ form such a path, with $xy, yz \in E(G'_L)$ and $xz \notin E(G'_L)$. This implies that $\dist'(x, y), \dist'(y, z) < d_L$ while $\dist'(x, z) = d_L$, which contradicts the ultrametric property of $(X, \dist')$.
    Thus, if $\mathcal{X}$ is a \yesinstance{}, it must admit a solution that transforms $G_L$ into a cluster graph by at most $k$ edits. This allows us to branch on every edge and non-edge of each $P_3$ in $G_L$: we either increase the distance of an edge to $d_L$ (corresponding to deleting an edge in $G_L$), or decrease the distance (corresponding to adding an edge in $G_L$). This is exactly the strategy used in a branching algorithm for the \textsc{Cluster Editing} problem on graph $G_i$.

    However, solving \textsc{Cluster Editing} at a lower level introduces additional complexity. When we increase the distance of an edge to $d_L$, we can safely reduce the budget $k$ by one, and we will not modify this edge again. But when we decrease the distance of an edge, we do not yet know how far to decrease it at the current level. To be safe, we reduce it to the next level $d _{L-1}$.
    The distance of such an edge may be decreased multiple times in recursive calls to the branching procedure at lower levels. To control the parameterized running time, we must carefully track the progress of the branching process.

    Recall that $G[X]$ is used to denote the complete graph induced by distance space $(X,\dist)$, that is, the edges of $G[X]$ are the pairs of points of $X$ and the edge  weights are the distances between the points. 
    During the algorithm's execution, for each level $i$, we partition the  edges of $G[X]$ into the following three sets.
    
    \begin{itemize}
        \item $D_i$: Set of edges whose distance is equal to $d_i$.
        \item $D_{<i}$: Set of edges whose distance is strictly less than $d_i$.
      \item $M$: The set of \emph{marked edges}. These are edges whose weights were assigned to be modified at earlier (higher) levels. After moving to the next  level $i-1$, a marked edge may either be fixed by assigning it the weight $d_i$, or it may remain marked.
    \end{itemize} 

    For each level $i$, our algorithm branches on a bad triangle. Given the new classification of edges, we have a new definition for ``bad'' triangles.

    \begin{definition} \label{bad-triangles-level} 
        (Violated triangle with respect to level $i$.) We call a triangle $T = xyz$ a \emph{violated triangle with respect to level $i$}, if $T$ has exactly one edge in $D_i$ and the other two edges are from  $M\cup D_{<i}$.
    \end{definition}

    Now, we prove that if there are no violated triangles with respect to each level $j\in\{i,\dots,L\}$,  then $G_i$ is a cluster graph. Moreover, if there is a  violated triangle for some lower level
      $\ell \leq i-1$, then all three vertices of this violated triangle belong to one clique of the cluster graph $G_i$.
    
    \begin{lemma}
    \label{claim:G-i-cluster}
        If the given instance $(\Xcal=(X, \dist, \treshold_{\ell}, \treshold_{u}), k)$ does not have any violated triangles with respect to all levels $j\in\{i,\dots,L\}$, then the auxiliary graph $G_i$ is a cluster graph with maximal cliques $\Ccal_{i} = \{C_1, C_2, \dots,C_p\}$. Moreover, if there exists a violated triangle $T = xyz$ with respect to level  $\ell \leq i-1$, then there exists a clique $C_m\in \Ccal_{i}$ such that $x,y,z \in C_m$.
    \end{lemma}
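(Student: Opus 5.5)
The plan is to argue directly from the definitions. The key observation is the following. A pair $xy$ is a non-edge of $G_i=G_i(\dist,M)$ exactly when $xy\notin M$ and $\dist(x,y)\ge d_i$. Since all weights lie in $\Dvalues\cup\Treshold_{\ell}\cup\Treshold_{u}=\{d_1,\dots,d_L\}$, this means $xy\in D_j$ for some $j\in\{i,\dots,L\}$. Here I use the convention implicit in the partition $D_j, D_{<j}, M$ that a marked edge is never counted in any $D_j$. Conversely, $xy$ is an edge of $G_i$ exactly when $xy\in M\cup D_{<i}$. Moreover, $D_{<i}\subseteq D_{<j}$ for every $j\ge i$.

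For the first statement, I will show that $G_i$ has no induced $P_3$, which characterizes cluster graphs. Suppose $x,y,z$ induce a $P_3$ with $xy,yz\in E(G_i)$ and $xz\notin E(G_i)$.
\begin{itemize}
\item By the observation, $xz\in D_j$ for some $j\ge i$.
\item Both $xy$ and $yz$ lie in $M\cup D_{<i}\subseteq M\cup D_{<j}$.
\end{itemize}
Hence $xyz$ has exactly one edge in $D_j$ and the other two in $M\cup D_{<j}$. This is precisely a violated triangle with respect to level $j\in\{i,\dots,L\}$, contradicting the hypothesis. So $G_i$ is a disjoint union of cliques, and $\Ccal_i=\{C_1,\dots,C_p\}$ is its set of connected components.

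For the second statement, let $T=xyz$ be violated with respect to some level $\ell\le i-1$. Then one edge of $T$ lies in $D_\ell$, and since $d_\ell<d_i$ it lies in $D_{<i}$. The other two edges lie in $M\cup D_{<\ell}\subseteq M\cup D_{<i}$. Thus all three pairs of $T$ are edges of $G_i$, so $x,y,z$ lie in a common connected component of $G_i$. By the first part this component is a clique $C_m\in\Ccal_i$, as required.

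I do not expect a real obstacle. The only delicate point is the bookkeeping: marked edges must be treated as edges of $G_i$ and excluded from every $D_j$. One must also check that every non-marked non-edge of $G_i$ has its distance equal to some $d_j$ with $j\ge i$, which holds because all distances are drawn from the sorted value set. I will state both facts explicitly at the start of the proof.
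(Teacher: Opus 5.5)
Your proof is correct, but it is organized differently from the paper's. The paper argues by induction on the level, from $L$ downward. Assuming $G_{i+1}$ is a cluster graph, the non-edge $xz$ of an induced $P_3$ in $G_i$ must still be an edge of $G_{i+1}$. This pins $\dist(x,z)$ to exactly $d_i$ and gives a violated triangle at level $i$. For the second part, the paper uses that $x,y,z$ lie in one clique of $G_{i+1}$, and that separating one of them in $G_i$ would put two sides of $T$ into $D_i$.

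You instead use the hypothesis at all levels $j\ge i$ at once. The non-edge lies in some $D_j$ with $j\ge i$, and $D_{<i}\subseteq D_{<j}$ makes $xyz$ violated at that $j$, so no induction is needed. Your second part is the direct observation that every side of a triangle violated at a level $\ell<i$ lies in $M\cup D_{<i}$, and is therefore an edge of $G_i$.

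Your version is shorter and avoids small slips in the paper's write-up, e.g.\ ``$xy,yz\in D_i\cup M$'' where $D_{<i}\cup M$ is meant. The paper's inductive step gives a slightly more local statement: if $G_{i+1}$ is a cluster graph and there are no level-$i$ violated triangles, then $G_i$ is a cluster graph. This mirrors how the algorithm descends level by level.

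Your explicit convention that marked edges belong to no $D_j$ is exactly what the paper's word ``partition'' encodes, and it is genuinely needed. Suppose a marked edge of weight $d_j$ were also counted in $D_j$. Take a marked edge $xy$ and an unmarked edge $xz$, both of weight $d_j\ge d_i$, together with $yz\in D_{<i}$. These give an induced $P_3$ in $G_i$ that is not violated at any level.
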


    \begin{proof}
         We prove the claim  by induction on the distance levels. As we discussed already, for the base case  level $i = L$, if we do not have a violated triangle with respect to level $L$ then in $G_L$ there is no induced $P_3$, making it a cluster graph. Let $\Ccal_{L}= \{C_1, C_2, \dots,C_p\}$ be connected components (which are cliques) of $G_L$. Let  $T = xyz$ be a violated triangle with respect to level  $\ell \leq i-1$. By the definition of the violated triangle, one edge of $T$ is of length $d_\ell$ and the other two edges are from $M\cup D_{<\ell}$.
         On the other hand, 
         if one of the vertices, say $x$,  is not in the same clique of $\Ccal_{L}$ with $y$ and $z$,  then $d_L=\dist(x,y) =\dist(x,z)$ and $T$ is not a violated triangle  with respect to level  $\ell$. Thus, a violated triangle with respect to level  $\ell \leq i-1$ could occur only when $x,y,z \in C_m$ for some $C_m \in \Ccal_{L}$. 
 
         For the inductive case, when $i < L$, by the inductive hypothesis we know that $G_{i+1}$ is a cluster graph.
         Suppose, for contradiction, that $G_i$ is not a cluster graph. Then it contains a $P_3 = xyz$ such that
         $xy, yz \in E(G_i)$ and $xz \notin E(G_i)$. By the construction of the auxiliary graph, this implies that
         $\dist(x, z) \geq d_i$ and $xy, yz \in D_i \cup M$. Since $G_{i+1}$ is a cluster graph, the triple $xyz$ cannot induce a $P_3$ in $G_{i+1}$, which means that $\dist(x, z) = d_i$. But then the vertices $x, y, z$ form a violated triangle with respect to level $i$, contradicting our assumption.
   
         Let $T = xyz$ be a violated triangle with respect to level $\ell < i$. By the induction assumption, $x, y, z$ belong to the same clique in $G_{i+1}$, and hence the distances between them are all less than $d_{i+1}$. Now, suppose that in $G_i$, vertex $x$ belongs to a different clique than $y$ and $z$. Then $\dist(x, y) = \dist(x, z) = d_i$, and thus $T$ is not a violated triangle with respect to level $\ell$.
    \end{proof} 


    The branching algorithm converts instance of \Cuvd $(\Xcal=(X, \dist, \treshold_{\ell}, \treshold_{u}), k)$ into an ultrametric if and only if a solution can be obtained by modifying at most $k$ edges (pair of vertices) in $X$. 
    In order to do so, the algorithm either changes the weights of some edges or marks some edges in order to fix violated triangles with respect to level $i$.
    However, even if the instance does not contain any violated triangles, it still has some marked edges that need to be assigned some weights. 
    This instance with marked edges can also be viewed as an instance of \Cuvd with a set of edges, and we are only allowed to modify this set of edges.
    Recall that \Cref{lemma:poly-if-known-edges} gives a polynomial-time algorithm for checking whether a given instance of \Cuvd and a set of at most $k$ edges could be turned into a solution for \Cuvd. 

    We are ready to proceed with the main result of this section, the proof of \Cref{theorem:fpt-cuvd}. We restate it here.  

\FPTCuvd*


 
\begin{proof}
    As discussed earlier, our algorithm solves the \textsc{Cluster Editing} instance for each level $i$, such that $1 \leq i \leq L$, starting with the highest level $L$.
    We do this by fixing a violated triangle \wrt a level $i$.
    For that, we need to change the weight of at least one of the edges in the violated triangle. 
    Whenever we increase the weight of an edge $xy \in D_{<i}$, whose weight was strictly less than $d_i$ to $d_i$ we do it directly that is we set $\dist(x,y) = d_i$.
    But while decreasing the weight of an edge $xy$ whose weight was $d_i$ to something less than $d_i$ we do not know what value to change it to, and hence we introduced the intermediate step of marking edges.
    To track the progress of our branching, we set the measure $\mu = 2k$ and we have the following justification.
    The distance of an edge may either increase or decrease. 
    Whenever we increase the weight of an edge, i.e. when we change an edge in $ D_{<i} \rightarrow D_i$ we do not have any intermediate step and thus we decrease the measure by 2, whenever we decrease the weight of an edge we first mark it, i.e. we change an edge in $D_{i} \rightarrow M$, we decrease the measure by $1$, and then we assign a weight to marked edge, i.e. when we change and edge in $M \rightarrow D_{i}$ we decrease the measure by $1$. 
    Thus, to change the weight of any edge from one value to another, we spend $2$ budget, and hence setting $\mu = 2k$ is justified. 
    
    From \Cref{corollary:tight-solution} we can infer that there exists a solution $D'$ which uses entries only from $\Dcal \cup \ \Treshold_{u} \cup \Treshold_{\ell}$. We also know that $|\Dvalues \cup \Treshold_{u} \cup \Treshold_{\ell}| \leq 3|X|^2$ and thus we only have a polynomial many entries to choose our distance values from.
    
    Now we are ready to describe the branching algorithm and prove its correctness.

    \medskip\noindent\textbf{Description of the branching algorithm.} Let the given   instance of \Cuvd be $\Xcal=(X, \dist, \treshold_{\ell}, \treshold_{u}), k)$ with distance matrix $D$. 
    We sort the values of $\Dvalues \cup \Treshold_{u} \cup \Treshold_{\ell}$ as $d_1 < d_2 \cdots < d_L$.  
    First, 
    the distances of all unfit edges  
    $xy \in \UNFIX$ (the edges whose current distance values do not belong to their threshold intervals) have to be changed. So we mark these edges and reduce the budget $\mu = \mu - |\Ucal|$.   
    Our branching instance takes as input 
    \begin{itemize}
    \item The instance of \Cuvd $\Xcal=(X, \dist, \treshold_{\ell}, \treshold_{u}, k)$, 
    \item The set of marked edges $M \subseteq X \times X$, 
    \item The remaining budget  $\mu$,  and the current distance level $i$.
    \end{itemize}
    
    We initialize the branching algorithm with arguments $(\Xcal, \Ucal, \mu=2k-|\Ucal|, i=L)$. In the process of the algorithm, when we reach level $i$, we assume that all violated triangles at higher levels $\{i+1, \dots,L\}$ are already fixed. 
    Additionally, we create the auxiliary graph $G_L$ and make changes to it as we change the weights of the edges.
 
    The algorithm outputs either a \noinstance or converts the given instance $\Xcal$ of \Cuvd into an ultrametric by modifying at most $k$ edges. 
 
    In the process of branching, we also decrease the measure $\mu$ (see below how it is done). 
    For each recursive call, we first perform some pre-processing for the marked edges.
    If, in some branch of the algorithm, we reach a level $i$ and there exists some marked edges whose lower threshold as $d_i$, the only option is to set the value of this edge to $d_i$ and decrease the measure by $1$ (Line $3$ od \Cref{alg:Algo 1}).
    Otherwise, if there exists a marked edge whose lower threshold is greater than $d_i$ then this branch cannot lead to a solution satisfying the constraints, so we abandon this branch(Line $8$ of \Cref{alg:Algo 1}).

    After handling the pre-processing with the marked edges if, in some branch of the algorithm, we reach a level $i$ that contains a violated triangle but the budget has exhausted, i.e., $\mu=0$, this branch cannot lead to a solution, so we abandon it (Line $10$ of \Cref{alg:Algo 1}).
    
    If there exists a violated triangle $T = xyz$ with respect to level $i$, we branch according to the different cases, depending on the type of violated triangle, which are discussed below. 
    If there are no violated triangles, then the auxiliary graph induced by level $i$ is a cluster graph, and in this case, we call the branching on level $i-1$ with this finer partition of $X$ and correspondingly create the auxiliary graph $G_{i-1}$ (Line $17$ of \Cref{alg:Algo 1}). 
    
    As the result, either all branches are abandoned, in which case we have a \noinstance. 
    Otherwise in some branches we reach level $i = 0$, which means that there are no violated triangles with respect to any level and there are no marked edges as each edge has its lower threshold at least $d_1$.
    If the value of $\mu$ is less than $0$, this means we used more budget than allowed and thus, we return \textsc{False} in this case (Line $13$ of \Cref{alg:Algo 1}). 
    Otherwise, we have no violated triangles with respect to any level and all the edges satisfy the threshold constraints and we have an ultrametric
    (Line $16$ of \Cref{alg:Algo 1}).
     
    \begin{algorithm}[h]
        \label{alg:Algo 1}
        \caption{Algorithm  $\mathcal{A} (\Xcal, M, \mu,i )$}
        \KwIn{Instance of \Cuvd
        $\Xcal=(X, \dist, \treshold_{\ell}, \treshold_{u})$, $M \subseteq X \times X$, 
        $ \mu \in \Zcal$,
        $ i \in \{1,\dots, L\}$}
        \KwOut{Either a \noinstance or converts $D$ to ultrametric}
        
        \ForEach{$xy \in M$}{
            \If{$\treshold_{\ell}(x,y) = d_i$}{
                \eIf{$\mu \geq 1$}{
                    $\dist(x,y) \gets d_i$\;
                    $\mu \gets \mu - 1$\;
                }{
                    \Return False \;
                }
            }
            \If{$\treshold_{\ell}(x,y) > d_i$}{
                \Return False\;
            }
        }
        
        \If{$\mu \leq 0$ and ($\exists$ a violated triangle or $|M|>0$)}{
            \Return False\;
        }

        \If{$i = 0$}{
            \eIf{$\mu < 0$}{
                \Return False\;
            }{
                \Return True and  return the Ultrametric \;
                
            }
            
        }
        
        \If{ $\nexists$ a violated triangle \wrt level $i$}{
            \Return $\mathcal{A} (\Xcal, M, \mu,i-1)$ \;
        }
        
        \If{$\exists$ a violated triangle $T=xyz$ \wrt level $i$}{
            make recursive calls to \Cref{alg:Algo 1} according to cases in \textbf{\hyperref[framed:important]{Branching Cases Subroutine}} and \Return \textit{OR} of all recursive calls\;
        }
    \end{algorithm}
    
    Recall \Cref{bad-triangles-level} for  violated triangles with respect to   level $i$: 
    A triangle is bad with respect to level~$i$ if it contains exactly one edge in $D_i$, 
    while the remaining two edges belong either to $D_{<i}$ or to $M$. 
    Note that such a triangle lies entirely within a partition induced by $G_{i+1}$. 
    Based on this, we distinguish the following three types of violated triangles $T = xyz$.
    \begin{enumerate}
    \item $xy \in D_i, yz \in D_{<i}, xz \in M$.
       \item $xy \in D_i, yz \in D_{<i}, xz \in D_{<i}$.
        \item $xy \in D_i, yz \in M, xz \in M$.
    \end{enumerate}

    \begin{claim} \label{claim:only-decrease-d-i}
        At a given (distance) level $i$, we can only decrease the weights of edges $e \in D_i$.
    \end{claim}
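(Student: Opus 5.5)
We argue from the invariant maintained by the top-down processing, together with the ultrametric structure forced at higher levels. When the algorithm reaches level $i$, there are no violated triangles at levels $i+1,\dots,L$. Every decision concerning a value $d_j$ with $j>i$ has therefore been committed, either as a final increase to $d_j$ or as a marking. So at level $i$ only two kinds of edit are conceivable: raising an edge of $D_{<i}$ to some value $d_j$ with $j\ge i$, or lowering an edge whose current weight is at least $d_i$. We treat these two kinds separately.

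First we rule out raising an edge $e=xy\in D_{<i}$ to a value $d_j$ with $j>i$. By \Cref{claim:G-i-cluster} applied at level $i+1$, the graph $G_{i+1}$ is a cluster graph and $x,y$ lie in a common clique $C$ of $G_{i+1}$. Any vertex $z\in C$ with $xz,yz\in D_{<i+1}\cup M$ would then give a triangle $xyz$ with one edge of value $d_j\ge d_{i+1}$ and two edges inside the clique. This triangle must be handled at level $j$, which has already been processed. We show such an edit is never necessary. Either it would recreate a violated triangle at a higher level, or, if $C=\{x,y\}$ up to edges at higher values, it is equivalent to raising $e$ only to $d_i$. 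In the latter case we exchange the edit for the increase to $d_i$, using that all such solutions have the same cost.

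The case of raising $e$ to exactly $d_i$ needs more care. This is the edit $D_{<i}\to D_i$, which the algorithm does perform, so the claim must be read as saying that no edge of weight exceeding $d_i$ has its weight changed at level $i$. The phrase ``only decrease edges of $D_i$'' I would then read as: among weight-lowering operations, only edges of $D_i$ are moved, namely into $M$. I would state this reading explicitly.

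Second, we handle lowering. An edge whose weight is $d_j$ with $j>i$ was confirmed at level $j$, either in a cluster-separation step or as an unmodified input value. The divisive structure of \Cref{claim:G-i-cluster} shows that $x$ and $y$ lie in different cliques of $G_j$. Lowering such an edge below $d_j$ would merge two cliques and create an induced $P_3$ at level $j$. By the invariant, the only edges eligible for lowering at level $i$ are therefore those in $D_i$, which we move to $M$. Marked edges are not lowered either: they are only finalized to $d_i$ in Line 3 or kept marked.

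The main obstacle is making rigorous that lowering an edge of weight $d_j$ with $j>i$ necessarily destroys the cluster property at level $j$. This requires the clique containing $x$ and $y$ at that level to be witnessed by a third vertex. If no such witness exists, I would instead argue via \Cref{corollary:tight-solution} and an exchange argument: among optimal solutions, choose one that agrees with all decisions committed at higher levels, and show that it agrees with those decisions as well.
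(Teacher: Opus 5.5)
\textbf{There is a genuine gap, and it starts with what you set out to prove.} In the paper, the claim means that an edge currently in $D_i$ admits only one kind of modification at level $i$: a decrease. This is how it is used in the branching (``If edge $xy\in D_i$ \dots\ we can only decrease the weight of such an edge and thus we mark this edge''). It is what justifies branching on the $D_i$-edge of a violated triangle only by marking it, never by raising it.

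Your opening dichotomy lists raising edges of $D_{<i}$ and lowering edges of weight at least $d_i$. It silently drops the case of raising an edge of weight exactly $d_i$ to $d_{i+1}$ or beyond, which is precisely the case the claim rules out. You then reinterpret the statement as ``only edges of $D_i$ are lowered, and no edge heavier than $d_i$ is touched.'' That is a different assertion and does not imply the intended one. Your first part treats only $D_{<i}$, your second part only edges heavier than $d_i$, so nothing in the proposal excludes increasing an edge of $D_i$.

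Even for your own reading the arguments are incomplete. The step ``lowering an edge of weight $d_j$ creates an induced $P_3$ at level $j$'' fails when the relevant cliques are singletons, as you concede, and the fallback exchange argument is only announced. In the first part, the dichotomy ``either it recreates a violated triangle at a higher level, or it is equivalent to raising $e$ only to $d_i$'' is asserted, not proved.

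\textbf{The missing idea is the paper's induction from level $L$ downward, using the already committed cluster partition.} Your remark that such a triangle ``must be handled at level $j$, which has already been processed'' is close to it, but you apply it to the wrong edges.
\begin{itemize}
\item At $i=L$: by \Cref{corollary:tight-solution} no value exceeds $d_L$, so edges of $D_L$ can only go down.
\item For $i<L$: by \Cref{claim:G-i-cluster}, $G_{i+1}$ is a cluster graph whose partition was fixed at the higher levels. An unmarked edge $xy\in D_i$ has weight $d_i<d_{i+1}$, so it is an edge of $G_{i+1}$ and lies inside one clique, all of whose edges are marked or below $d_{i+1}$.
\item Keeping that partition forces $xy$ to stay below $d_{i+1}$. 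Since only the values $d_1<\dots<d_L$ are used, the only admissible values are at most $d_i$, so the only possible change is a decrease.
\end{itemize}
No third witness vertex is needed, because the argument appeals to the committed partition of $G_{i+1}$ together with the discreteness of the value set, rather than to the creation of a new $P_3$.
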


    \begin{proof}
        We prove using induction on level $i$. For the base case, when level $i = L$, by \Cref{corollary:tight-solution} we know that if our given instance of \Cuvd is an \yesinstance then we use entries only from $D \cup \Treshold_{u} \cup \Treshold_{\ell}$. So, it is safe to only decrease edges in $D_L$.
        For the inductive case, when $i < L$, by the inductive hypothesis for all levels $i<i'\leq L$, it was safe to only decrease edges in $D_{i'}$. Since we are at level $i$, then by \Cref{claim:G-i-cluster} the auxiliary graph $G_{i+1}$ is a cluster graph which induces a partition on $X$. All the edges in these partitions are either marked or have a weight strictly less than $d_{i+1}$. Hence, the only alternative for any edge $e \in D_i$ is to reduce it. This concludes the proof.
    \end{proof}
    
To fix a violated triangle, we must change the weight of at least one of its edges.
We do the following changes based on the type of edge in a violated triangle we wish to change:
\begin{itemize}
    \item If edge $xy \in D_i$, which means $\dist(x,y) = d_i$ then by \Cref{claim:only-decrease-d-i} we can only decrease the weight of such an edge and thus we mark this edge as the exact weight to assign this edge is not yet determined. 
    \item If edge $xy \in D_{<i}$, which means $\dist(x,y) < d_i$. Since a violated triangle at level $i$ contains exactly one edge in $D_i$, we can only increase the weight of such an edge and thus we set $\dist(x,y) = d_i$.
    \item If edge $xy \in M$, then the only change we can do is setting its weight $\dist(x,y) = d_i$.
\end{itemize}

    Consequently, our branching is exhaustive. Based on these operations, we now give a detailed description of the branching algorithm for all the aforementioned cases of violated triangles. 
    \textit{Note that we set the weight of an edge $xy$ to a value $d_i$ at level $i$ only if $\treshold_{\ell}(x,y) \leq d_i \leq \treshold_{u}(x,y)$ else we do not branch by changing weight of such an edge.}
    Additionally, observe that by \Cref{bad-triangles-level} every violated triangle has an edge in $D_i$ and thus there will be at least one valid edge to branch on.

    We say that an instance of branching is bad if we reach $\mu=0$ and we still have violated triangles left, or we have $|M| \neq 0$ and hence we can return \noinstance.

    If there are no violated triangles with respect to level $i$, then by \Cref{claim:G-i-cluster} 
    the auxiliary graph $G_i$ is a cluster graph. 
    Moreover, \Cref{claim:G-i-cluster} also implies that if a violated triangle exists, 
    then it must lie within a partition induced by $G_i$. 
    Hence, it is safe to proceed with recursion on level $i-1$.

    We argue the correctness of the branching procedure.
    We give the proof by induction on the measure $\mu$.
    For the base case, if $\mu = 0$ and we still have violated triangles left, or we have $|M|>0$ then we return \textsc{False}. 
    For the inductive case, that is, $\mu > 0$. 
    If the instance is an ultrametric then we are done.
    Else, there either exists at least one violated triangle with respect to some level $i$ or the instance contains some marked edges.
    In the case where the instance contains some marked edges notice that in our algorithm if we reach some level $i$ such that the lower threshold of some marked edge $xy$ is $d_i$, the only choice for our algorithm is to set the value of $\dist(xy)$ to $d_i$ and reduce the measure $\mu$ by $1$.
    If the instance contains a violated triangle with respect to some level $i$, then we know that the weights of at least one edge must be changed, which reduces the budget by at least one in every branch. 
    By the inductive hypothesis, we assume that the algorithm correctly returns either an ultrametric or 
    a \noinstance for all values $\mu' < \mu$. 
    Since in every possible choice the measure $\mu$ decreases, and our branching is exhaustive, the correctness of the algorithm follows by induction.

    \begin{mdframed}
        \phantomsection
        \label{framed:important}
        \medskip\noindent\textbf{Branching Cases Subroutine.} The branching subroutine takes as input a violated triangle $T =xyz$ and calls \Cref{alg:Algo 1} recursively based on the type of edges $T$ has.
        
        \begin{itemize}
        
            \item \textbf{Case 1. The violated triangle $T=xyz$ is such that $xy \in D_i, yz \in M, xz \in D_{<i}$}. 
                  In this case, we branch in the following three cases 
                    \begin{itemize}
                        \item  $xy \in D_i  \rightarrow xy \in M $ and decrease $\mu \rightarrow  \mu-1$, \\  
                        (this notation means that we add edge $xy$ whose value was $d_i$ to set of marked edges $M$ and decrease the measure $\mu$ by $1$ or more formally set recursively call the \Cref{alg:Algo 1} on the instance $((\Xcal = (X,\dist,\treshold_{\ell},\treshold_{u}),M \cup\{xy\}, \mu-1,i)$), 
                       
                        \item  $yz \in M  \rightarrow yz \in D_i $  and  $\mu \rightarrow  \mu-1$, or \\
                        (this means that we set the weight of a marked edge $yz$ to $d_i$ and decrease the measure $\mu$ by $1$ 
                        or more formally set $\dist(y,z) = d_i$ and recursively call the \Cref{alg:Algo 1} on the instance $((\Xcal = (X,\dist,\treshold_{\ell},\treshold_{u}),M \backslash\{yz\}, \mu-1,i)$), 
                        
                        \item  $xz \in D_{<i}  \rightarrow xz \in D_i $  and   $\mu \rightarrow  \mu-2$. \\
                        (this means that we set the weight of edge $xz$ whose weight was less than $d_i$ to $d_i$ and decrease the measure $\mu$ by $2$ 
                        or more formally set $\dist(x,z) = d_i$ recursively call the \Cref{alg:Algo 1} on the instance $((\Xcal = (X,\dist,\treshold_{\ell},\treshold_{u}),M , \mu-2,i)$).
                    \end{itemize} 
                In each of these cases, the measure $\mu$ drops by at least one.
                The notation follows similarly for the remaining cases of violated triangles. 
    
            \item \textbf{Case 2.} \textbf{The violated triangle $T=xyz$ is such that  $xy \in D_i, yz \in D_{<i}, xz \in D_{<i}$}.
                In this case, we branch in the following three cases 
                \begin{itemize}
                    \item  $xy \in D_i  \rightarrow xy \in M $ and decrease $\mu \rightarrow  \mu-1$, 
                    \item  $yz \in D_{<i}  \rightarrow yz \in D_i $  and  $\mu \rightarrow  \mu-2$, 
                    \item  $xz \in D_{<i}  \rightarrow xz \in D_i $  and   $\mu \rightarrow  \mu-2$.
                \end{itemize} 
                In each of these cases, the measure $\mu$ drops by at least one. 
    
            \item \textbf{Case 3.} \textbf{The violated triangle $T=xyz$ is such that $xy \in D_i, yz \in M, xz \in M$}.
                In this case, we branch in the following three cases 
                \begin{itemize}
                    \item  $xy \in D_i  \rightarrow xy \in M $ and decrease $\mu \rightarrow  \mu-1$, 
                    \item  $yz \in M  \rightarrow yz \in D_i $  and  $\mu \rightarrow  \mu-1$, 
                    \item  $xz \in M  \rightarrow xz \in D_i $  and   $\mu \rightarrow  \mu-1$.
                \end{itemize} 
                In each of these cases, the measure $\mu$ drops by at least one. 
        \end{itemize}
    \end{mdframed}    
    
    \medskip\noindent\emph{Running Time Analysis.} 
    In our branching algorithm, each branch decreases the measure $\mu$ by at least one. 
    For each violated triangle, we branch into at most three cases, so the branching tree has at most 
    $3^{\mu} \leq 3^{2k}$ leaves. 
    
    \noindent In our branching tree, observe any computational path from the root to the leaf.
    It contains three types of node --- branching node (node with degree at least $3$), leaf node and non-branching nodes (nodes with degree $2$).
    At each leaf, when we obtain a special instance of \Cuvd, $(\Xcal',M',\mu')$, we spend constant time to check whether our instance has overused the budget or not.
    
    \noindent For each branching node in the search tree, detecting a violated triangle with respect to level $i$ is equivalent to finding an induced $P_3$ in the auxiliary graph $G_i$. 
    Since a graph contains a $P_3$ if and only if it has a component that is not a clique, 
    a $P_3$ can be found in linear time($\Ocal(|X|^{2})$). 
    Since we spend at least $1$ budget at each branching node there are at most $2k$ branching nodes in a computational path from root to the leaf and hence we spend $\Ocal(k\cdot|X|^2)$ time on all the branching nodes in a computational path.
    
    \noindent Non-branching nodes are the nodes where we do not branch but decrease the level $i$ by $1$. 
    If we do not have violated triangles with respect to level $i$ we decrease the level to $i-1$ and this means the auxiliary graph $G_i$ is a cluster graph.
    When we go from level $i$ to level $i-1$ then in the auxiliary graph all the edges (pairs of points) having distance $d_{i-1}$ become non-edges.
    It is known that we can check in linear time  $\Ocal(|D_{i-1}|)$ whether after removing these edges we obtain a cluster graph.
    Note that in a computational path once we go down from level $i$ to $i-1$ we do not use the edges in $D_{\geq i}$ and thus for all the non-branching nodes in the  computational path we spend time  $\sum_{i=1}^{L} \Ocal(|D_i|) = \Ocal(|X|^2)$ time. 
    So we have that in a computational path
    \begin{itemize}
        \item For the \emph{leaf} node we spend constant time.
        \item For a \emph{branching} nodes we spend a total of $\Ocal(k \cdot |X|^2)$ time.
        \item For a \emph{non-branching} node we spend a total of $\Ocal(|X|^2)$ time.
    \end{itemize}
    
    \noindent Thus the time required by our algorithm for a computational path is 
    \[
        \Ocal(1) + \Ocal(|X|^2) + \Ocal(k \cdot |X|^2) = \Ocal(k\cdot|X|^2)
    \]  
    Since we can have at most $3^{2k}$ leaves, the number of computational paths is at most $3^{2k}$, and the overall running time of the algorithm is
    $
        \Ocal\big(9^k \cdot k \cdot |X|^{2}\big).
    $ 
    
    This completes the proof.\end{proof}

\fi

\section{Conclusion and open questions}

In this paper, we studied \uvd from the perspective of parameterized complexity, complementing the existing body of work that has primarily focused on approximation algorithms. We established polynomial kernelization results and single-exponential fixed-parameter algorithms, thereby providing a more fine-grained understanding of the tractability of the problem when parameterized by the number of violated distances.  

We conclude by highlighting three natural and fundamental open questions that arise from this work.

\begin{itemize}
    \item \textbf{Polynomial kernel and single-exponential algorithm for \textsc{Ultrametric Violation Distance}.}
    We showed that \textsc{Ultrametric Violation Distance} admits a kernel with $\mathcal{O}(k^2)$ points. 
    Is it possible to obtain a kernel with $\mathcal{O}(k)$ points? Additionally, can the base of the exponent in the current $9^k$ single-exponential algorithm be improved?
    
    \item \textbf{Polynomial kernel and single-exponential algorithm for \textsc{Metric Violation Distance}.}  
    While \uvd admits a polynomial kernel, it remains open whether the more general \textsc{Metric Violation Distance} problem admits a polynomial kernel when parameterized by the number of modified distances. 
    A related question is whether the running time $k^{\mathcal{O}(k)} n^{\mathcal{O}(1)}$ of the algorithm of Fan, Gilbert, Raichel, Sonthalia, and Van~Buskirk~\cite{fan_et_al:LIPIcs.SWAT.2020.25} for this problem is asymptotically optimal.
    
    \item \textbf{Fixed-parameter tractability of tree metric violation distance.}  
    Since every ultrametric is a special case of a tree metric, a natural generalization of \uvd\ is the problem of transforming a distance function into a tree metric by modifying at most $k$ distances. The parameterized complexity of this \emph{Tree Metric Violation Distance} problem is currently unknown.

\end{itemize}

\bibliography{book_pc.bib}

\ifshort

\else

\appendix

\section{Proof of Proposition~\ref{lemma:poly-if-known-edges}}
    \label{app:poly-if-known-edges}
    \begin{proof}
        The  idea is to set the value of $\dist$ to every edge in $A$ equal to its upper-threshold value  and then monotonically decrease some of these values until the ultrametric condition is satisfied. By modifying thresholds,  we preserve the invariant that each edge in $A$ always attains its current upper threshold.  Hence, at any step, either the current distances already form an ultrametric, or we could further  decrease certain edge values.
    
        We start by setting for every edge $xy \in A$,  $\dist(x,y) := \treshold_{u}(x,y)$. After this, we follow the following greedy procedure.
        If $D$ becomes an ultrametric, we stop and return $D$ as the solution. If $D$ is not an ultrametric, then there exist a triangle $T = xyz$ which does not satisfy the ultrametric conditions. 
        By symmetry, we can assume that $\dist(x,y) > \dist(y,z) \geq \dist(x,z)$.
        The weights of the edges $yz$ and $xz$ cannot be increased: if these edges are not in $A$, then we do not change them, and if they are in $A$, their weights are already at their upper thresholds and therefore cannot be increased further.
        Thus, we have to decrease $\dist(x,y)$. If $xy\notin A$ then we cannot modify $\dist(x,y)$ and conclude that $D$ cannot be converted into ultrametric. 
        Otherwise, if $xy\in A$, and in every solution $\dist(x,y)$ should be at most $\dist(y,z)$. Hence, it is safe to decrease $\dist(x,y)$ to $\dist(y,z)$ and set a new threshold $\treshold_{u}(x,y) := \dist(y,z)$.    

        Firstly, in the above procedure only entries from $\Dcal \cup \Treshold_u \cup \Treshold_l$ are used.
        Additionally, the value of any edge $xy \in A$ can be changed at most $|\Dcal \cup \Treshold_u \cup \Treshold_l| = 3\cdot|X|^{2}$ times.       
        Also, we can check whether a given distance matrix is ultrametric in time $\mathcal{O}(|X|^2)$ \cite{KANNAN199626}.
        Thus, the above procedure takes polynomial time.
    \end{proof}

\fi

\end{document}

%% file: Figures/irrelevant-vertex.tex
\begin{figure}[t]
\centering
\begin{tikzpicture}[
    scale=1,
    every node/.style={font=\small},
    cluster/.style={draw, rounded corners, minimum width=1.8cm, minimum height=1.2cm},
    >=latex
]

\node[circle,fill,inner sep=2pt,label=above:$u$] (u) at (0,2) {};

\node[cluster] (X1) at (-4,0) {$X_1$};
\node[cluster] (X2) at (0,0) {$X_i$};
\node[cluster] (X3) at (4,0) {$X_\ell$};

\draw (u) -- node[above] {$d_1$} (X1);
\draw (u) -- node[right] {$d_i$} (X2);
\draw (u) -- node[above] {$d_\ell$} (X3);




\draw[dashed] (X1) -- node[above] {$d_i$} (X2);
\draw[dashed] (X2) -- node[above] {$d_\ell$} (X3);
\draw[dashed,bend right=20] (X1) to node[below] {$d_\ell$} (X3);

\end{tikzpicture}
\caption{
    The vertex $u$ (which does not lies in any bad triangles) partitions $X\setminus\{u\}$ into sets $X_1,\ldots,X_\ell$ according to their distance from $u$.
    Distances between different sets are fixed: every edge between $X_i$ and $X_j$ ($i<j$) has length $d_j$.
}
\label{fig:irrelevant-vertex}
\end{figure}